\newcommand{\cmark}{\ding{51}} 
\newcommand{\xmark}{\ding{55}} 
\newtheorem{theorem}{Theorem}
\newtheorem{lemma}{Lemma}
\newtheorem{proof}{Proof}
\newtheorem{definition}{Definition}
\begin{document}

\title{VFEFL: Privacy-preserving federated learning
against malicious clients via verifiable functional encryption}

\author{
	Nina Cai,
	Jinguang Han,
	Weizhi Meng
}



\maketitle

\begin{abstract}
	Federated learning is a promising distributed learning paradigm that enables collaborative model 
	training without exposing local client data, thereby protecting data privacy. 
	However, it also brings new threats and challenges. 
	The advancement of model inversion attacks has rendered the plaintext transmission 
	of local models insecure, while the distributed nature of federated learning makes 
	it particularly vulnerable to attacks raised by malicious clients.  
	To protect data privacy and prevent malicious client attacks, 
	this paper proposes a privacy-preserving Federated Learning framework
	based on Verifiable Functional Encryption (VFEFL), without a non-colluding 
	dual-server assumption or additional trusted third-party.
	Specifically, we propose a novel Cross-Ciphertext Decentralized Verifiable Functional Encryption (CC-DVFE) scheme 
	that enables the verification of specific relationships over multi-dimensional ciphertexts. 
	This scheme is formally treated, in terms of definition, security model and security proof.
	Furthermore, based on the proposed  CC-DVFE scheme, we design a privacy-preserving 
	federated learning framework that incorporates a novel robust aggregation rule 
	to detect malicious clients, enabling the effective training of high-accuracy models 
	under adversarial settings.
	Finally, we provide the formal analysis and empirical evaluation of VFEFL. 
	The results demonstrate that our approach achieves the desired privacy protection, 
	robustness, verifiability and fidelity, while eliminating the reliance on non-colluding dual-server 
	assumption or trusted third parties required by most existing methods.
\end{abstract}

\begin{IEEEkeywords}
privacy protection, federated learning, functional enctyption.
\end{IEEEkeywords}

\section{Introduction}
\IEEEPARstart{F}{ederated}
learning\cite{FL2017} is an advanced and promising distributed learning paradigm 
that enables multiple clients to collaboratively train models without sharing 
local data. It embodies the principles of focused data collection and minimization, 
thereby mitigating many of the systemic privacy risks and costs 
inherent in traditional, centralized machine learning approaches.
With the enactment of privacy protection and data security regulations in various countries, 
as well as the increase in public privacy awareness, the application of federated 
learning has gradually expanded e. g., healthcare, finance, and 
edge computing.

Although the federated learning framework offers a degree of privacy protection for 
client data during training, it brings new privacy and security issues \cite{OpenFL}. 
One primary concern is the potential for privacy leakage 
through the local models uploaded by clients. With the advent and ongoing advancement 
of attack techniques, such as model inversion attacks\cite{GIA,MIA,GIA2}, submitting local 
models in plaintext form is no longer regarded as a secure method. 
Accordingly, obfuscation and encryption mechanisms have been adopted to address this issue.
Another threat is attacks from malicious clients. Due to its distributed architecture, 
federated learning  is susceptible to adversarial manipulation by malicious clients.
A malicious client may make the accuracy of the global model 
drop rapidly by uploading malicious models, 
or destroy the decryption by uploading a malicious decryption key or a wrong ciphertext.
The above issues are important for federated learning systems 
to achieve their core goal (training high quality models while preserving data privacy). 
Privacy-preserving mechanisms (such as encryption schemes) can potentially lead 
to difficulties in detection and verification, while detecting and verifying 
encrypted models of the clients may compromise specific information. 
Existing solutions \cite{AegisFL,FheFL} cannot be deployed 
in the basic federated learning architecture (one server and multiple clients), 
and need help from two non-colluding servers or additional trusted third parties, 
which introduces additional limitations and consumption.

\IEEEpubidadjcol
Therefore it is very interesting and challenge to design  a federated learning scheme 
that realizes privacy protection and resists malicious client attacks in the basic federated learning architecture.
This paper proposes a privacy-preserving federated learning scheme 
against malicious clients based on decentralized verifiable functional encryption.
Our contributions can be summarized as follows:
\begin{enumerate}
	\item We employ a verifiable functional encryption scheme to encrypt local models in the federated learning, 
		ensuring data privacy and correctness during encryption and decryption.
		The scheme supports verifiable evaluation of relations over multidimensional ciphertexts, 
		making it well suited for federated learning scenarios.
		Furthermore, we formalize its definition and security model, and reduce its security 
		in the proposed model to well-established computational hardness assumptions.
	\item We propose a privacy-preserving federated learning scheme named VFEFL 
		that protects local model privacy and defends against malicious clients. 
		Compared with existing approaches, it enables verifiable client-side aggregated weights 
		and can be integrated into standard federated learning architectures to enhance trust. 
		Theoretical analysis demonstrates that VFEFL provides strong guarantees in 
		both privacy and robustness.
	\item We implement a prototype of VFEFL and conduct comprehensive experiments 
		to evaluate its fidelity, robustness, and efficiency. 
		Robustness is evaluated under both targeted and untargeted poisoning attacks. 
		Experimental results demonstrate that VFEFL effectively defends against 
		such attacks while preserving model privacy.
\end{enumerate}


\section{RELATED WORK}
\subsection{Functional Encryption}
Functional encryption (FE) \cite{FE_def, FE} is a paradigm supports selective computation 
on encrypted data, while traditional public key encryption only implements all-or-nothing decryption method. 
An authority in a FE scheme can generate restricted decryption keys to allow receiver
to learn specific functions of the encrypted messages and nothing else.
Goldwasser\cite{MIFE2014} first introduced multi-input functional encryption (MIFE), which 
is a generalization of functional encryption to the setting of multi-input functions. 
Michel Abdalla \cite{MIFE2017} constructed the 
first MIFE for a non-trivial functionality with polynomial security loss 
under falsifiable assumptions for a super-constant number of slots. 

However, whereas the input can be large, the original definition of FE 
limits the source of the encrypted data to a single client, which is not compatible with
a wide range of real-world scenarios where data needs to be collected from multiple sources.
Multi-client functional encryption (MCFE)\cite{DMCFE} was proposed to addresses this issue. 
Further, decentralized multi-client functional encryption (DMCFE)\cite{DMCFE} eliminated
reliance on a trusted party. 
In the DMCFE scheme instantiated over an asymmetric pairing group, 
each client independently samples $s_i \leftarrow \mathbb{Z}^2_p$, 
and all clients jointly generate random matrices $T_i \leftarrow \mathbb{Z}_p^{2 \times 2}$ 
such that $\sum_{i \in [n]} T_i = 0$.
$s_i$ and $T_i$ forms the secret key held by each client. 
This interactive procedure allows clients to locally construct their own keys 
without relying on a trusted third party. 

The decentralized architecture of DMCFE brings forth additional challenges posed by potentially dishonest clients, 
prompting the proposal of verifiable functional encryption (VFE) \cite{VFE} as a solution.
Nguyen et al.\cite{VMCFE} proposed a verifiable decentralized 
multi-client functional encryption scheme for inner products (VDMCFE), 
along with its formal definitions and security proofs. 
This work extended the DMCFE scheme by constructing the 
Class Group-Based One-Time Decentralized Sum (ODSUM) and 
Label-Supporting DSUM (LDSUM) schemes,
leveraging the Combine-then-Descend technique to improve the efficiency 
of correctness proofs for functional decryption key shares.  
Additionally, it incorporated zero-knowledge proofs such as range proofs to ensure 
the correctness of ciphertexts and the validity of the encrypted content.

\subsection{Federated Learning}
\begin{table*}[ht]
	\centering
	\caption{Comparison with other schemes}
	\label{related}
	\vspace{1mm}
	\begin{tabular}{lccccc}
		\toprule
		Proposed Works & Techniques & Robustness & Self-Contained & Verifiability & Fidelity \\
		\midrule
		Biscotti\cite{Biscotti}     & DP+SS      & \cmark & \xmark & \cmark & \xmark \\
		SecProbe\cite{SecProbe}     & DP         & \cmark & \cmark & \xmark & \xmark \\
		ESFL\cite{ESFL}             & DP         & \cmark & \cmark & \xmark & \xmark \\
		SecureFL\cite{SecureFL}     & PLHE+SS    & \cmark & \xmark & \cmark & \cmark \\
		AegisFL\cite{AegisFL}       & HE         & \cmark & \xmark & --     & \cmark \\
		RVPFL\cite{RVPFL}           & HE         & \cmark & \xmark & \cmark & \cmark \\
		FheFL\cite{FheFL}           & FHE        & \cmark & \cmark & \xmark & \cmark \\
		FEFL\cite{FEFL}             & 2DMCFE     & \xmark & \cmark & \xmark & \cmark \\
		BSR-FL\cite{BSRFL}          & NIFE       & \cmark & \xmark & \xmark & \cmark \\
		Ours                        & CC-DVFE    & \cmark & \cmark & \cmark & \cmark \\
		\bottomrule
	\end{tabular}
	\vspace{1mm}

	\begin{minipage}{\linewidth}
		\textit{Note:}
		\textbf{Self-contained} means the scheme works in a single-server setup without trusted third parties.
		\textbf{Verifiability} ensures the client follows protocol specifications.
		\textbf{Fidelity \cite{FLtrust}} means no accuracy loss when no attack exists.
	\end{minipage}
\end{table*}

\textbf{Federated Learning against Malicious Clients.} 
Robust aggregation rules are one of the most common means of resisting malicious clients in federated learning. 
Next, we review some of the existing Byzantine robust aggregation rules (ARG).
\begin{itemize}
	\item \textbf{FedAvg\cite{FL2017}.} FedAvg  is the most basic aggregation rule 
		constructed when federated learning was first proposed. It directly computes 
		the weighted sum of all local model to obtain the global model, where the 
		weight of each client is determined by the proportion of its local dataset 
		size to the total dataset size. 
		However, FedAvg 
		is not resistant to malicious client attacks, where the adversary manipulates 
		the global model at will through submitted local models.
	\item \textbf{Krum\cite{Krum}.} This AGR selects global model update 
		based on the squared distance score.
		Formally, for client $i$, the score is defined as  
		$s(i) = \sum_{i \to j, n-f-2} \| W_i - W_j\|^2$
		where $f$ represents the maximum number of malicious clients that can be tolerated
		and $i \to j$ denotes that $W_j$ belongs to the $n-f-2$ closest updates to $W_i$.
		Subsequently, the model with the lowest score or the average of several models with 
		the lowest scores was chosen as the global model.
	\item \textbf{FLTrust\cite{FLtrust}.} The method requires the server to maintain a small clean dataset, 
		which is used to train a trusted model update \(g_0\). A trust score for 
		each local model update \(g_i\) is then computed by evaluating the cosine similarity 
		between \(g_i\) and \(g_0\), followed by applying a ReLU operation. 
		The score is used as the aggregation weight of the local model in the aggregation process. 
	\item \textbf{AGR in FheFL\cite{FheFL}.} In this schemes, the aggregation methods use 
		the Euclidean distance between the local model $W^t_i$ and the global model $W^{t-1}$
		of the previous round as a discriminator of the quality of the local model $W^t_i$.
	\item \textbf{AGR in BSR-FL\cite{BSRFL}.} This aggregation method constructs a qualified aggregation 
	model using secure cosine similarity and an incentive-based Byzantine robustness strategy. 
	In BSR-FL, the aggragated benign model from the \((t-1)\)-th iteration \(W_{base}^{t-1}\), 
	serves as the baseline model. The trust score \(cs^t_i\) for each local model \(W_i^t\) 
	is computed as the cosine similarity between \(W_i^t\) and the baseline \(W_{base}^{t-1}\), 
	and then transformed using a parameterized sigmoid function centered at 0.5. 
		However, the BSR does not verify the length of local models, 
		leaving the potential for spoofing.
\end{itemize}

Existing aggregation rules are either not robust enough to resist adaptive attacks\cite{FLtrust}, 
or too complex to be applicable for fast verification of zero-knowledge proofs.
Therefore, we propose a new aggregation rule that is applicable to zero-knowledge proofs 
and provides an effective defense against attacks such as adaptive attacks.

\textbf{Privacy-Preserving FL against Malicious Clients.}
Although client data is not directly exchanged in the federated learning framework, 
the transmission of local models still introduces potential privacy risks. 
In order to make PPFL resistant to the behavior of malicious participants, 
some works have been proposed to implement the above defense strategies in PPFL. 
The relevant schemes can be categorized into 3 groups according to 
privacy techniques.
A comparison of these previous schemes with our method is summarized in Table \ref{related}.
\begin{itemize}
	\item[-] \textbf{Differential Privacy Based FL.} The Biscotti\cite{Biscotti} combined 
	differential privacy, Shamir secret shares (SS) and Multi-Krum aggregation rule
	to realize a private and secure federated learning scheme based on the blockchain.
	SecProbe\cite{SecProbe} aggregated local models based on the computed utility 
	scores and applies a functional mechanism to perturb the objective 
	function to achieve privacy.
	Miao et al.\cite{ESFL} 
	proposed an efficient and secure federated learning (ESFL) framework based 
	on DP. Since the nature of differential privacy techniques is to hide the 
	sensitive information of an individual by introducing random noise in the 
	data or computational results, it cannot avoid affecting the 
	accuracy of the model.
	\item[-] \textbf{Homomorphic Encryption Based FL.} SecureFL\cite{SecureFL} 
	implemented FLTrust construction under encryption based 
	on packed linear homomorphic encryption. However, in this scheme, the 
	client distributes the local gradient to two servers through secret sharing, 
	and the privacy protection of the gradient must rely on two non-colluding 
	servers. AegisFL\cite{AegisFL} improved the computational efficiency of 
	multiple aggregation rules under ciphertexts by optimizing the homomorphic 
	encryption scheme, but it also fails to eliminate dependency on two non-colluding servers. 
	The RVPFL\cite{RVPFL} scheme combined homomorphic encryption and Diffie-Hellman 
	key agreement to achieve privacy protection and robustness detection, 
	but relies on additional participants.
	FheFL\cite{FheFL} built the scheme on homomorphic encryption in a 
	single server. However, a trusted key distribution process is necessary to ensure 
	the secure generation of pairwise keys among users. 
	The scheme does not validate the client behavior, and the malicious clients can 
	easily destroy the decryption through wrong keys.
	Such schemes often use homomorphic encryption for privacy, 
	computing aggregation on ciphertexts. Decryption typically requires a trusted 
	party or two non-colluding servers. Shared decryption keys risk exposing honest 
	clients' data to key holders through collusion or eavesdropping.
	\item[-] \textbf{Functional Encryption Based FL.} Chang et al.\cite{FEFL} constructed a 
	privacy-preserving federated learning scheme FEFL via functional 
	encryption. Qian et al.\cite{DMCFEFL} designed another 
	privacy-preserving federated learning scheme by combining functional 
	encryption with multiple cryptographic primitives. However, both schemes 
	do not address the detection of malicious clients. The BSR-FL scheme \cite{BSRFL} 
	combined functional encryption and blockchain technology to construct an efficient 
	Byzantine robust privacy-preserving federated learning framework. However, the 
	scheme still requires two non-colluding entities (the task publisher and the server) and does not validate 
	the magnitude of the local models submitted by clients.
\end{itemize}

\section{PROBLEM STATEMENT}
\subsection{System Model}
The VFEFL architecture proposed in this paper is applicable to a wide range of 
privacy-preserving federated learning settings and does not require any trusted third party. 
Its architecture involves two main roles: 
a server $\mathcal{S}$ and $n$ clients $\{\mathcal{C}_i\}_{i \in [n]}$.

\textbf{Server:} The server, equipped with a clean dataset $D_0$ and substantial computational 
	resources, is responsible for detecting maliciously encrypted local models submitted 
	by clients, aggregating the models from clients, and returning the aggregated 
	model to the clients for further training. This process is iteratively performed 
	until a satisfactory global model is achieved.

\textbf{Clients:} Clients have their own datasets and a certain amount of computational 
	power, and it is expected that multiple clients will cooperate to train the model 
	together without revealing their own data and model information.

The notations in the paper are described in Table \ref{notations}

\begin{table}[ht] 
	\centering
	\caption{Notations and Descriptions}\label{notations}
	\begin{tabular}{ll}
	  \toprule
	Notations & Descriptions \\ 
	  \midrule
	  $n$& the number of clients \\
	  $m$ & the dimension of model\\
	  $t$& training epoch \\ 
	  $lr$ & local learning rate\\
	  $R_l$ & the number of local iterations\\
	  $\ell$ & the label in the CC-DVFE\\ 
	  $W^t$& global model \\ 
	  $W_0$& initial global model \\ 
	  $W_i^t$& local model \\  
	  $D_0$& root dataset \\ 
	  $D_i$& local dataset \\
	  $g, w_i, u, v$ & the generators of $\mathbb{G}_1$\\  
	  $h, \hat{v}_1, \hat{v}_2$ & the generators of $\mathbb{G}_2$\\
	  \bottomrule
	\end{tabular}
	\begin{flushleft}
		\textit{Note:}
		The superscripts of $W_i^t, \vec{x}_i^t, D_i$ 
		denote interactions and the subscripts denote client identifiers.
	\end{flushleft}
\end{table}
\subsection{Design Goals}
VFEFL is designed to provide the following performance and security guarantees:
\begin{enumerate}
	\item[-] \textbf{Privacy.} VFEFL should ensure that the server or channel 
		eavesdroppers do not have access to the client's local model to reconstruct the 
		client's private data. 
	\item[-] \textbf{Byzantine Robustness.} VFEFL can resist Byzantine attacks such 
		as Gaussian attack\cite{fang2020local}, the adaptive attack\cite{FLtrust} and 
		Label Flipping attack, 
		and obtain a high-quality global model.
	\item[-] \textbf{Self-contained.} VFEFL can be deployed in the basic 
		federated learning framework, i.e., without relying on two non-colluding servers 
		and  additional third parties, 
		thus gaining more general applicability. 
	\item[-] \textbf{Verifiability.} 
		VFEFL ensures clients' outputs comply with the protocol, including ciphertexts 
		and functional key shares, preventing malicious clients from framing 
		honest ones or disrupting aggregation. It also ensures system continuity despite 
		partial data corruption during transmission.
	\item[-] \textbf{Fidelity.} VFEFL architecture causes no loss of accuracy in the 
			absence of attacks
\end{enumerate}
We will proof the above properties in Section \ref{analysis}.

\subsection{Security Model}
The protocol $\pi$ securely realizes a function $f$ if, for any adversary $\mathcal{A}$, 
there exists no environment $\mathcal{E}$ can distinguish whether it is interacting 
with $\pi$ and $\mathcal{A}$ or with  the ideal process for $f$ and 
ideal adversary $\mathcal{A}^*$ with non-negligible probability.
Formally, let $\text{REAL}_{\mathcal{A}}^{\mathcal{\pi}}(\lambda, \hat{x})$ and 
$\text{IDEAL}_{\mathcal{A}^*}^{f}(\lambda, \hat{x})$
denote the interaction views of $\mathcal{E}$ in the real-world and ideal-world models, 
respectively. For any $\mathcal{E}$, if the following holds:
$$
	\text{REAL}_{\mathcal{A}}^{\mathcal{\pi}}(\lambda, \hat{x}) \stackrel{c}{\approx} \text{IDEAL}_{\mathcal{A}^*}^{f}(\lambda, \hat{x}),
$$
where \( \stackrel{c}{\approx} \) denotes computational indistinguishability, 
\( \lambda \) represents the security parameter, 
and \( \hat{x} \) includes all inputs.

In the design goal of VFEFL, privacy is defined that neither the 
server nor eavesdroppers can obtain the local models of clients, thus preventing 
them from inferring the clients' data samples during interactions between the 
server and clients. In the execution of VFEFL protocols, an attacker may compromise 
either some clients or the server, thereby gaining access to 
encrypted data and intermediate results. 
With them, 
the attacker seeks to extract private information. 
Thus, we define the ideal function of VFEFL as $f_{\mathrm{VFEFL}}$.
The privacy of VFEFL can be formally described as:
$$
	\text{REAL}_{\mathcal{A}}^{\mathrm{VFEFL}}
			(\lambda, \hat{x}_\mathcal{S},\hat{x}_\mathcal{C}, \hat{y}) 
	\stackrel{c}{\approx} 
	\text{IDEAL}_{\mathcal{A}^*}^{f_{\mathrm{VFEFL}}}
			(\lambda, \hat{x}_\mathcal{S},\hat{x}_\mathcal{C}, \hat{y}),
$$
where $\hat{y}$ denotes the $y$ and intermediate results, 
$\hat{x}_\mathcal{S},\hat{x}_\mathcal{C}$ denote the 
inputs of $\mathcal{S}$ and $\mathcal{C}$, respectively.
The formal proof as discussed in Section \ref{analysis_privacy}.

\subsection{Threat Model}
We consider the honest-but-curious server, which executes the protocol faithfully 
but may leverage techniques such as gradient inversion attacks on the client's model
to infer private data.
In this paper, we categorize the clients $\mathcal{C}$ into two groups: 
honest clients $\mathcal{HC}$ and malicious clients $\mathcal{CC}$. 
An honest client honestly trains a local model based on its real dataset in a 
training iteration, encrypts and proofs the model, generates the correct 
functional key share and the proofs,  
and submits them to $\mathcal{S}$. 
A malicious client may submit random or carefully crafted model parameters to degrade the 
accuracy of the global model, or submit incorrect encrypted models or functional key shares 
to cause decryption failure. Additionally, a malicious client may eavesdrop on 
an honest client's secrets by colluding with others. 

Our system requires at least two clients to be honest to ensure security, 
even in the presence of collusion.

\section{PRELIMINARIES}
\subsection{Groups}

\textbf{Prime Order Group.} We use a prime-order group \texttt{GGen}, a probabilistic 
polynomial-time (PPT) algorithm that, given a security parameter \(\lambda\), 
outputs a description \(\mathcal{G} = (\mathbb{G}, p, g)\), where \(\mathbb{G}\) 
is an additive cyclic group of order \(p\), \(p\) is a \(2\lambda\)-bit prime, 
and \(g\) is the generator of \(\mathbb{G}\). For any \(a \in \mathbb{Z}_p\), 
given \(g^a\), it is computationally hard to determine \(a\), a problem known 
as the discrete logarithm problem. For any \(\vec{a}=\{a_1, \cdots, a_n\} \in \mathbb{Z}^n_p\), 
\ \( \vec{g} = \{g_1, \cdots, g_n\} \in \mathbb{G}^n\), 
we denote $\vec{g}^{\vec{a}} = \prod_{i\in[n]}g_i^{a_i}$.

\textbf{Pairing group.} We use a pairing group generator \texttt{PGGen}, a PPT algorithm that 
on input the security parameter $\lambda$ returns a a description
$\mathcal{PG}=\left(\mathbb{G}_1, \mathbb{G}_2, \mathbb{G}_T, p, g, h, e\right)$ 
of asymmetric pairing groups where $\mathbb{G}_1, \mathbb{G}_2, \mathbb{G}_T$ are 
additive cyclic groups of order $p$ for a $2 \lambda$-bit prime $p$, $g$ and $h$ 
are generators of $\mathbb{G}_1$ and $\mathbb{G}_2$, respectively, and 
$e: \mathbb{G}_1 \times \mathbb{G}_2 \rightarrow \mathbb{G}_T$ is an 
efficiently computable (non-degenerate) bilinear group elements. 
For $a, b \in \mathbb{Z}_p$, given $g^a, h^b$, 
one can efficiently compute $e(g, h)^{ab}$ using the pairing $e$.

\textbf{Class Group.} We use a Decisional Diffie-Hellman (DDH) group
with an easy DL subgroup\cite{DLeasy}, which can be instantiated from class groups 
of imaginary quadratic fields. Let GenClassGroup be a pair of algorithms ($\texttt{Gen}, \texttt{Solve}$).
The \texttt{Gen}\cite{TPECDSA} algorithm is a group generator 
which takes as inputs a security parameter $\lambda$ and a prime $p$ and outputs 
the parameters $\left(\tilde{s}, f, \hat{h}_p, \hat{G}, F,\hat{G}^p \right)$.
The set $(\hat{G}, \cdot)$ is a finite abelian group of order $p \cdot s$ 
where the bitsize of $s$ is a function of $\lambda$, with  $gcd(p, s) = 1$.
The algorithm \texttt{Gen} only outputs $\tilde{s}$, which is an upper bound of $s$. 
$\hat{G}^p=\{ x^p, x \in \hat{G}\}$ and 
$F$ is the subgroup of order $p$ of $\hat{G}$, so $\hat{G}= F \times \hat{G}^p$.
The algorithm also outputs $f, h_p $ and $\hat{h}=f \cdot \hat{h}_p$, 
which is the generators of $F, \hat{G}^p $ and $\hat{G}$. 
Moreover, the DL problem is easy in $F$, with an efficient 
algorithm \texttt{Solve}\cite{DLeasy}. According to the Hard subgroup membership (HSM)
assumption\cite{Bandwidth}, it is hard to distinguish random elements of 
$G^p$ in $G$.

\subsection{Zero Knowledge Proof}
	For a polynomial-time decidable relation $\mathcal{R}$, we call $w$ a witness for 
	a statement $u$ if $\mathcal{R}(u;w)=1$. Let $L=\{u | \exists w : \mathcal{R}(u;w)=1\}$
	is a language associated with $\mathcal{R}$. A zero-knowledge proof for $L$ 
	is a protocol between prover $\mathcal{P}$ and verifier $\mathcal{V}$ 
	where $\mathcal{P}$ convinces $\mathcal{V}$ that a common input $u \in L$ 
	without revealing information about a witness $w$. 
	A zero-knowledge proof should satisfy the following three properties: 
	soundness, completeness and perfect honest-verifier zero-knowledge.
	Soundness refers to the probability of $\mathcal{P}$ 
	convincing $\mathcal{V}$ to accept $u \in L$ is negligible when $u \notin L$. 
	Completeness means that if $u \in L$, $\mathcal{V}$ accept
	it with non-negligible probability. And the zero-knowledge implies that $\mathcal{V}$ 
	cannot revealing the information of $w$.
	We denote by
	$$
	\mathsf{PoK}\{(w): \varphi(w)\}
	$$
	a proof of knowledge of a secret value $w$ such that the public predicate 
	$\varphi(w)$ holds. 

	Notably, an interactive proof-of-knowledge protocol can be transformed into a 
	non-interactive proof-of-knowledge through the application of the Fiat-Shamir 
	heuristic \cite{FiatShamir}.

\subsection{Cross-Ciphertext Decentralized Verifiable Functional Encryption}
Verifiable functional encryption (VFE) \cite{VFE} was introduced to 
capture ciphertexts and decryption keys' correctness under fine-grained control. 
Informally, VFE requires that for any ciphertext $C$ that passes a public verification procedure, 
there exists a unique message $m$ such that, for any valid function description $f$ and 
any corresponding function key $dk_f$ that also pass public verification, 
the decryption algorithm outputs $f(m)$ when given $C$, $dk_f$, and $f$ as inputs.

Nguyen et al.\cite{VMCFE} proposed a decentralized multi-client verifiable functional encryption scheme.
The scheme focuses on verifying properties of individual ciphertexts. 
In our work, we propose a verifiable functional encryption scheme that enables cross-ciphertext verification, 
allowing the validation of specific relations among multiple ciphertexts in various application scenarios.
Next, we give the formal definition of our Cross-Ciphertext Decentralized Verifiable multi-client Functional Encryption for inner product (CC-DVFE).

\begin{definition}
	A Cross-Ciphertext Decentralized Verifiable Multi-Client Functional Encryption for Inner Product among 
	a set of $n$ clients
	$\left( \mathcal{C}_i\right)_{i \in [n]}$ with a $m$-dimensional vector on $\mathcal{M}^m$
	contains eight algorithms.
	\begin{itemize}
		\item \textbf{\texttt{SetUp}$(\lambda) \to pp$: }
			Takes as input the security parameter $\lambda$ and outputs the public parameters $pp$. 
		\item \textbf{\texttt{KeyGen}($pp$) $\to \left((ek_i,sk_i)_{i \in [n]}, vk_{CT}, vk_{DK}, pk\right)$: }
			Each client $\mathcal{C}_i$ generates own encryption key $ek_i$ and secret key $sk_i$, and
			eventually outputs the verifiation key for ciphertexts $vk_{CT}$ and for functional key
			share $vk_{Dk}$, and the public key $pk$. 
		\item \textbf{\texttt{Encrypt}$\left(ek_i, \vec{x}_i, aux, \ell, (\ell_{Enc,j})_{j \in [m]}, pp\right)\to (C_{\ell, i},$\allowbreak$ \pi_{CT,i})$ :}
		Takes as input an encryption key $ek_i$, a vector $\vec{x}_i$ to  encrypt, 
		the auxiliary information $aux$ used in the ciphertext proof protocol, a label $\ell$ 
		and a set encryption labels $\left(\ell_{Enc,j}\right)_{j \in [m]}$ that ensures the 
		ciphertexts of different dimensions in 
		each round cannot be illegally aggregated.
		Outputs the ciphertext $C_{\ell,i}$ and a zero knowledge proof $\pi_{CT,i}$.
		\item \textbf{\texttt{VerifyCT}$\left((C_{\ell,i})_{i \in [n]},\pi_{CT}, vk_{CT}\right) \to 1 \text{ or } 0 $: }
		Takes as input the ciphertexts $(C_{\ell,i})_{i \in [n]}$ and proofs $\pi_{CT}$ from $n$ client  
		and the verification keys $vk_{CT}$.
		It outputs $1$ for accepting or $0$ with 
		a set of malicious clients $\mathcal{CC}_{CT} \ne \emptyset$ for rejecting.
		\item \textbf{\texttt{DKeyGenShare}$\left(sk_i, pk, y_i, \ell_{y}, pp \right) 
			\to \left(dk_{y,i}, \ \pi_{DK,i}\right)$: } 
		Takes as input the secret key $sk_i$ of a client, public key $pk$, 
		a value $y_i$ and a function label $\ell_{y}$.
		Outputs a functional decryption key share ${dk}_{y,i}$ and a zero knowledge proof $\pi_{DK,i}$. 
		\item \textbf{\texttt{VerifyDK}$\left(({dk}_{y,i})_{i\in [n]},\pi_{DK}, vk_{DK}, pp\right) \to  1 \text{ or } 0 $: }
		Takes as input functional decryption key  shares $({dk}_{y,i})_{i\in [n]}$ and proofs $\pi_{DK}$ from $n$ clients 
		and a verification key $vk_{DK}$. 
		It outputs $1$ for accepting or $0$ with 
		a set of malicious clients $\mathcal{CC}_{DK} \ne \emptyset$ for rejecting.
		\item \textbf{\texttt{DKeyComb}$\left(({dk}_{y,i})_{i\in [n]}, \ell_{y},pk \right) \to dk $:}
		Takes as input all of the functional decryption key shares $({dk}_{y,i})_{i \in [n]}$, 
		a function label $\ell_{y}$ and public key $pk$. 
		Outputs the functional decryption key $dk$.
		\item \textbf{\texttt{Decrypt}$(\left(C_{\ell,i}\right)_{i \in [n]} , {dk}, \vec{y}) \to( \langle \vec{x}^{\top}, \vec{y}\rangle)_{j \in [m]} \text{ or } \bot$: }
		Takes as input the m-dimensions vector ciphertext 
		$\left(C_{\ell,i}\right)_{i \in [n]}$ from $n$ client ,
		a functional decryption key ${dk}$ and the function vector $\vec{y}$.
		Outputs $(\langle \vec{x}^{\top}, \vec{y}\rangle)_{j \in [m]} $ or $\bot$,
		where $\vec{x}^{\top} = (x_{1,j}, x_{2,j}, \dots, x_{n,j})$ denotes the vector formed by the $j$-th component from each of the $n$ users.
	\end{itemize}
\end{definition}

\textbf{Correctness.} Given any set of message $\left(\vec{x}_{1}, \cdots, \vec{x}_{n}\right)\in \mathcal{M}^{n \times m}$
and the auxiliary information $aux$ 
and any function vector $\vec{y} \in \mathbb{Z}_p^n$,
We say that an CC-DVFE scheme is correct if and only if following equation holds.

\begin{equation*}
\begin{split}
\Pr\left[\begin{array}{l|l}
                                                                        & \texttt{SetUp}(\lambda)\rightarrow \mathrm{pp}\\
                                                                        & \texttt{KeyGen}(pp) \rightarrow((ek_i,sk_i)_{i \in [n]}, \\
																		& \quad vk_{CT}, vk_{DK}, pk),\\
\texttt{DKeyComb}(                                                      & \forall i \in [n],\ \texttt{Encrypt}(ek_i, \vec{x}_i,   \\
\quad ({dk}_{y,i})_{i\in [n]}, \ell_{y},  						            & \quad aux, \ell, (\ell_{Enc,j})_{j \in [m]}, pp) \\
\quad pk) \to dk 					                                    & \quad \rightarrow (C_{\ell, i}, \pi_{CT,i})\\
\texttt{Decrypt}((C_{\ell,i})_{i \in [n]},                              & \forall i \in [n],\ \texttt{DKeyGenShare}(   \\
\quad {dk}, \vec{y}) \to  	                                            & \quad sk_i, pk, y_i, \ell_{y}, pp)\\
\quad \left( \langle \vec{x}^{\top}, \vec{y}\rangle \right)_{j \in [m]} &\quad \rightarrow (dk_{y,i}, \pi_{\mathrm{DK},i})\\
                                                                        & \texttt{VerifyCT}((C_{\ell,i})_{i \in [n]},\pi_{CT}, \\
																		& \quad vk_{CT})  \to 1 \\
                                                                        & \texttt{VerifyDK}(({dk}_{y,i})_{i\in [n]},\pi_{DK},  \\ 
																		& \quad vk_{DK}, pp) \to 1\\
\end{array}
\right]=1.
\end{split}
\end{equation*}

\subsection{Our New Aggregation Rule}
Inspired by the aggregation rule of FLTrust\cite{FLtrust}, 
we similarly introduce root dataset $D_0$
to bootstrap trust to achieve Byzantine robustness against malicious clients
and consider both the direction and 
magnitudes of the local models of clients.
However, FLTrust contains cosine similarity operations that 
are difficult to complete the verification using a simple zero-knowledge proof,
which is avoided in our new aggregation rule.

A malicious client can arbitrarily manipulate the direction and magnitudes of 
the local model to degrade the accuracy of the global model, 
so our robust aggregation rule restricts both ways. 
First, we assume that the server has a small clean dataset $D_0$ and train on it to 
get a baseline model $W^t_0$.
Without considering the magnitudes of the model, we use the inner product of 
the local client model $W^t_i$ and the baseline model $W^t_0$ 
to measure the directional similarity between them. 
In order to avoid malicious users to enhance 
the attack effect of malicious vectors by amplifying the gradient, we incorporate 
the inner product value of the local client model $W^t_i$ itself into the 
robustness aggregation rule. We further consider that when the inner product value 
is negative, the model still has some negative impact on the global 
model of the aggregation, so we use the Relu function to further process the 
inner product value. The Relu function is defined as follows:
$$
	\text{ReLu}\left( x \right)=
	\begin{cases}
	x, & \text{if } x>0 \\
	0, & \text{others}
	\end{cases}.
$$
We define the robust aggregation rule as follows:
$$
	W^*=\sum_{i \in [n]} \text{Relu}\left(\frac{\langle W^t_i, W^t_0\rangle}
	{\langle W^t_i, W^t_i\rangle}\right) W^t_i
$$

At the same time, in order to avoid the global model unable to statute to a good level 
of accuracy, we normalize each epoch of global model to a certain 
magnitude. We use the $W_0^t$ as a reference to normalize each 
epoch of global model. That is, theglobal gradient $W^t$ of for the $t$-th iteration
is defined as follows:
$$
	W^t=\frac{||W^t_0||}{||W^*||}W^*
$$ 
Denote $y_i = \text{Relu}\left(\frac{\langle W^t_i, W^t_0\rangle}{\langle W^t_i, W^t_i\rangle}\right)$ and 
$W^*$ can be obtained from the functional encryption for inner product.
The robustness of this aggregation rule is proved in Section \ref{analysis}.

\section{OUR DECENTRALIZED VERIABLE FUNCTIONAL ENCRYPTION}
\subsection{Construction of CC-DVFE}
\begin{figure}[htbp]
	\centering
	\includegraphics[width=0.45\textwidth]{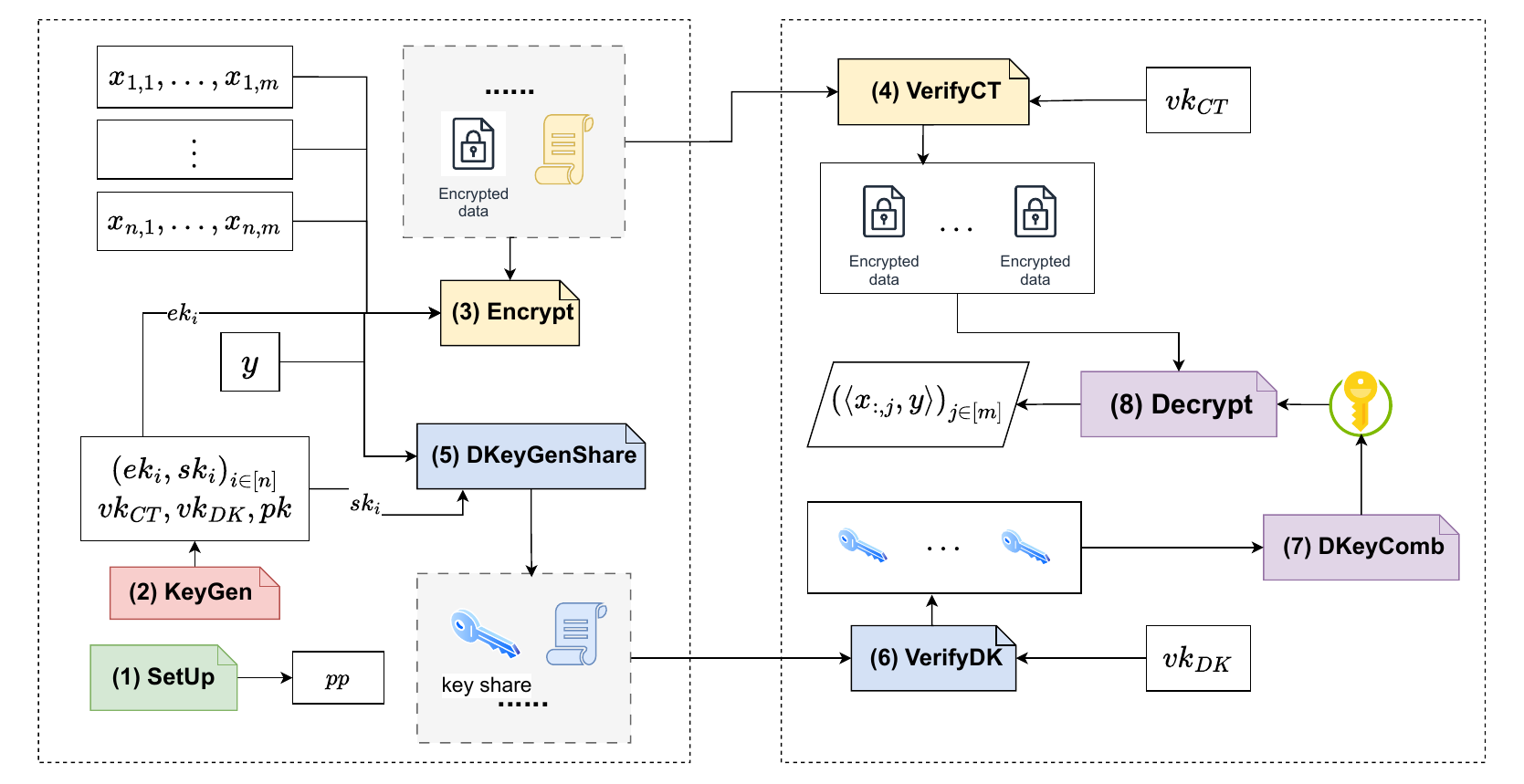}
	\caption{The workflow and main components of CC-DVFE}
	\label{VFEwork}
\end{figure}
Although verifiable functional encryption schemes supporting weight aggregation for 
encrypting local models are now readily available, their 
application to privacy-preserving and robust federated learning faces the 
following two challenges. 
The first one is that the current ciphertext verification in 
verifiable functional encryption supports range proofs for individual elements 
but cannot verify Byzantine robustness rules. Secondly, the construction of zero-knowledge proofs
for the Byzantine robust aggregation rule faces compatibility issues: 
existing inner-product arguments \cite{Bulletproofs} and functional encryption schemes cannot be directly 
combined for verification. Specifically, the current inner-product argument fails to 
achieve the zero-knowledge property, and the ciphertext structure in 
functional encryption does not support direct verification of 
Byzantine robust aggregation rules. 

To address these limitations, 
we propose a novel verifiable functional encryption scheme for 
the new robust aggregation rule.
Next we describe the complete DVFE scheme construction, with eight algorithms.
The flowchart of the algorithm DVFE is shown in Figure \ref{VFEwork}.

\begin{itemize}
	\item \textbf{\texttt{SetUp}($\lambda$) $\to pp$: }
	Takes as input the security parameter $\lambda$ and outputs the public parameters 
	$pp =(\mathcal{PG}, \mathcal{DG} ,h_p,\mathcal{H}_1, 
	\mathcal{H}_2,$ \allowbreak $\mathcal{H}'_1 ,\mathcal{H}, \ell_D)$,
	where  $\mathrm{\texttt{PGGen}}(\lambda) \to \mathcal{PG} = (\mathbb{G}_1, \mathbb{G}_2, \mathbb{G}_T, p, g, h, e)$ is a pairing group, 
	$g$ the generator of $\mathbb{G}_1$, $h$ the generator of $\mathbb{G}_2$,
	$\texttt{GenClassGroup}$ \allowbreak $(\lambda, p) \to \mathcal{DG} = (\tilde{s}, f, \hat{h}_p,$ \allowbreak $ \hat{G}, F,\hat{G}^p)$ is a DDH group with an easy DL subgroup, 
	$\hat{t} \gets \mathcal{D}_p$ is sampled, and $h_p = \hat{h}_p^{\hat{t}}$ is set.
	Four full-domain hash functions are defined as follows:
	\begin{align*}
		\mathcal{H}_{1}  &: \{0,1\}^{*} \rightarrow \mathbb{G}_{1}^{2}, 
		& \mathcal{H}_{2}  &: \{0,1\}^{*} \rightarrow \mathbb{G}_{2}^{2}, \\
		\mathcal{H}'_{1} &: \{0,1\}^{*} \rightarrow \mathbb{G}_{1}, 
		& \mathcal{H}      &: \{0,1\}^{*} \rightarrow \mathbb{Z}_p .
	\end{align*}
	And $\ell_D = (\{0,1\}^*)$ is the initialization label.
	\item \textbf{\texttt{KeyGen}($pp$) $\to \left((ek_i,sk_i)_{i \in [n]}, vk_{CT}, vk_{DK}, pk \right)$: }
		It is executed by each client $\mathcal{C}_i$ to generate own encryption key $ek_i$ and 
		secret key $sk_i$, and eventually outputs a verifiation key for ciphertexts $vk_{CT}$ 
		and functional key share $vk_{Dk}$, and a public key $pk$. 
		A client selects 
		$\hat{k}_i = (\hat{k}_{i,1}, \hat{k}_{i,2}) \overset{\$}{\gets} \mathbb{Z}_p^2$, $ t_i=(t_{i,1},t_{i,2}) \overset{\$}{\gets} \mathcal{D}^2_p $ and 
		$s_{i} = (s_{i,1},s_{i,2}) \overset{\$}{\gets} \mathbb{Z}_p^2 $,
		then computes $T_i = \left(h_p^{t_{i,b}}\right)_{b \in [2]}$ and  
		$$
			d_{i} = \left( 
					f^{\hat{k}_{i,b}} \cdot \left( \prod_{i<i^*}^n T_{i^*,b} \cdot 
							\prod_{i>i^*}^n T_{i^*,b}^{-1} \right)^{t_{i,b}}
					\right)_{b \in [2]}
		$$
		and commits $s_{i}$ as 
		$$
			\mathrm{com}_{i} = (v)^{s_{i}}
		$$
		where $ v = \mathcal{H}_1(\ell_{D}) \in \mathbb{G}_{1}^{2} $.
		For each client $\mathcal{C}_i$, encryption key is $ ek_i = s_{i} $,  
		secret key  is $sk_i=(s_{i}, \hat{k}_i,t_{i})$ , and 
		public key is $pk=\left(T_i, \ d_i\right)_{i \in [n]}$. The verification key for 
		ciphertexts is $vk_{CT} = \left(\mathrm{com}_{i}\right)_{i\in[n]}$, 
		while for functional keys it is 
		$vk_{DK} = \left(T_i,d_i, \mathrm{com}_{i}\right)_{i \in [n]}$.
	\item \textbf{\texttt{Encrypt}$(ek_i, \vec{x}_i, \vec{x}_0, \ell, (\ell_{Enc,j})_{j \in [m]}, pp)\to  (C_{\ell, i},$\allowbreak$ \pi_{CT,i})$ :}
		Takes as input an encryption key $ek_i$, a vector $\vec{x}_i$ to  encrypt, 
		a vector $\vec{x}_0$ used in the ciphertext proof protocol, a label $\ell$ 
		and a set encryption labels $\left(\ell_{Enc,j}\right)_{j \in [m]}$ that ensures the 
		ciphertexts of different dimensions in 
		each round cannot be illegally aggregated.
		The ciphertext is
		$C_{\ell,i}= \left( \left(u_j\right)^{s_i} \cdot (w_j)^{x_{i,j}} \right)_{j \in [m]}$ 
		where $u_j=\mathcal{H}_1(\ell||j) \in \mathbb{G}_{1}^{2}$, 
		$w_j = \mathcal{H}_1^{'}(\ell_{Enc,j}||j) \in \mathbb{G}_{1}$.
		It re-computes $\mathrm{com}_i$ and $V=\prod_{j \in [m]} C_{\ell,i,j}$, 
		then generates a proof $\pi_{CT,i}$ 
		for the relation $\mathcal{R}_{Encrypt}$
		on input $\left(V, \mathrm{com}_i, \ell, \ell_{Enc}, y_i, \vec{x}_0\right)$.
		The $\pi_{CT,i}$ is described as 
        \begin{equation*}
            \text{PoK}
			\left\{(s_i, \vec{x}_i): 
                \begin{aligned}
				V_i = (\bar{u})^{s_{i}} \cdot \vec{w}^{\vec{x}_i} \wedge \\
				y_i = \frac{\langle \vec{x}_i, \vec{x}_0\rangle}{\langle \vec{x}_i, \vec{x}_i\rangle} \wedge \\
				\mathrm{com}_i = (v)^{s_{i}}.
                \end{aligned}
			\right\}
        \end{equation*}
		where $\bar{u}=\prod_{j}^m u_j$, $u_j = \mathcal{H}_1(\ell||j)$, $\vec{w}^{\vec{x}_i} = \prod_{j \in [m]} (w_j)^{x_{i,j}} $.
		Outputs ciphertext $C_{\ell,i}$ and proof $\pi_{CT,i}$.
	\item \textbf{\texttt{VerifyCT}($(C_{\ell,i})_{i \in [n]},\pi_{CT}, vk_{CT}$) $\to 1 \text{ or } 0 $: }
		Takes as input the ciphertexts $(C_{\ell,i})_{i \in [n]}$ and proofs $\pi_{CT}$ from $n$ client  
		and the verification keys $vk_{CT}$.
		For $i \in [n]$: it verifies $\pi_{CT,i}$ for the relation 
		$\mathcal{R}_{Encrypt}$ and outputs $1$ for accepting or $0$ with 
		a set of malicious clients $\mathcal{CC}_{CT} \ne \emptyset$ for rejecting.
	\item \textbf{\texttt{DKeyGenShare}$(sk_i, pk, y_i, \ell_{y}, pp) \to 
		({dk}_{y,i}, \pi_{DK,i})$: }
		Takes as input a client secret key $sk_i = (s_{i}, \ \hat{k}_i,\ t_{i})$, 
		public key $pk$, the function coefficient $y_i$
		and a function  label $\ell_{y}$. It computes $dk_i$ as 
		$$
		dk_i = \left( (\hat{v}_b)^{\hat{k}_i} \cdot h^{s_{i,b} \cdot y_i}\right)_{b \in [2]}
		$$
		where $\hat{v}_b=\mathcal{H}_2(\ell_{y,b}) \in \mathbb{G}_{2}^{2}, b \in [2]$.
		It re-computes $\mathrm{com}_{i}$ and generates a proof $\pi_{DK,i}$
		for the relation $\mathcal{R}_{DKeyGenShare,i}$
		on input $\left(pk, com, {dk}_{y,i}, \ell_{y}\right)$. 
		The $\pi_{DK,i}$ is described as
        \begin{equation*}
            \text{PoK}
			\left\{(s_i, t_i,\hat{k}):
            \begin{aligned}
                T_i = h_p^{t_i} \wedge \\
				d_i = (f^{\hat{k}_b} K_{\sum,i,b}^{t_{i,b}})_{b \in [2]} \wedge \\
				dk_i = \left( (\hat{v}_b)^{\hat{k}} \cdot h^{s_{i,b} \cdot y_i}\right)_{b \in [2]} \wedge \\
				\mathrm{com}_i = (v)^{s_{i}}
            \end{aligned} 
			\right\}
        \end{equation*}  
		where $K_{\sum,i,b} = \prod_{i<i^*}^{n} T_{i^*,b} \cdot \left( \prod_{i>i^*}^n T_{i^*,b} \right)^{-1} \in G$, 
		$\hat{v}_b=\mathcal{H}_2(\ell_{y,b}), b \in [2], v= \mathcal{H}_1(\ell_{D})$.
		Outputs a functional decryption key share ${dk}_{y,i}$ and $\pi_{DK,i}$.
	\item \textbf{\texttt{VerifyDK}($({dk}_{y,i})_{i\in [n]}, \pi_{DK}, vk_{DK}, pp$) $\to  1 \text{ or } 0$: }
		Takes as input decryption key  shares $({dk}_{y,i})_{i\in [n]}$,
		proof $\pi_{DK}$ from $n$ clients and the verification key $vk_{DK}$. 
		For $i \in [n]$: it verifies $\pi_{DK,i}$ for the relation 
		$\mathcal{R}_{DKeyGenShare}$ and outputs $1$ for accepting or $0$ with 
		a set of malicious clients $\mathcal{CC}_{DK} \ne \emptyset$ for rejecting.
	\item \textbf{\texttt{DKeyComb}($({dk}_{y,i})_{i\in [n]}, \ell_{y},pk$) 
		$\to {dk} $:}
		Takes as input all of the functional decryption key shares $({dk}_{y,i})_{i \in [n]}$, 
		a function label $\ell_{y}$ and public key $pk$. It computes $ f^d =\left(\prod_{i \in [n]} d_{i,b}\right)_{b \in [2]} $ and solves the discrete logarithm 
		problem on $\mathcal{DG}$ to obtain $d$, 
		then computes $h^{dk_{y}}$  as
		 $$ 
			 h^{dk_{y}} = \left(\frac{\prod_{i \in [n]} dk_{i,b}}{(\hat{v}_b)^d}\right)_{b \in [2]} 
		 $$
		 where $\hat{v}_b=\left(\mathcal{H}_2(\ell_{y,b})\right), b \in [2]$. 
		Outputs the functional decryption key $dk=h^{dk_{y}}$.
	\item \textbf{\texttt{Decrypt}($\left(C_{\ell,i}\right)_{i \in [n]} , dk, y$) 
	$\to \left(\langle \vec{x}^{\top}, \vec{y}\rangle\right)_{j \in [m]}$ \text{or} $\bot$ : }
		Takes as input $n$ client with m-dimensions vector ciphertext 
		$\left(C_{\ell,i}\right)_{i \in [n]}$,
		a functional decryption key $h^{dk_{y}}$ and the function vector $y$.
		It computes $u_j=\mathcal{H}_1(\ell||j)$, and 
		$\left(E_j\right)_{j \in [m]} = \left(e(w_j, h)\right)_{j \in [m]}$, 
		where $w_j = \mathcal{H}_1^{'}(\ell_{Enc,j}||j)$.
		For $ j \in [m]$, it gets 
		$$
			E_j^{\langle \vec{x}^{\top}, \vec{y}\rangle} = \frac{\prod_{i \in [n]}e(c_{l,i,j}, h^{y_i})}
				                            {e(u_{j,1}, h^{dk_{y,1}})e(u_{j,2}, h^{dk_{y,2}})} 
		$$
		and solves the discrete logarithm in basis $E_j$ to return 
		$\langle \vec{x}^{\top}, \vec{y} \rangle$, where $ \vec{x}^{\top} = (x_{1,j}, x_{2,j}, \dots, x_{n,j}) \in \mathcal{M}^n$ denotes the vector consisting of the $j$-th component of each of $n$ clients.
		It outputs $(\langle \vec{x}^{\top}, \vec{y}\rangle)_{j \in [m]} $ or $\bot$.
\end{itemize}

\textbf{Correctness}.We have
\begin{align*}
	f^{\sum_{i \in [n]} \hat{k}_i} & =\prod_{i \in [n]} d_{i} \\
	d &= \sum_{i \in [n]} \hat{k}_i = \log_f f^{\sum_{i \in [n]} \hat{k}_i} \\
	h^{dk_{y}} &= \left(\frac{\prod_{i \in [n]} dk_{i,b}}{(\hat{v}_b)^d}\right)_{b \in [2]} \\
	& = \left( 
		\frac{\prod_{i \in [n]} (\hat{v}_b)^{\hat{k}_i} \cdot h^{s_{i,b} \cdot y_i}}{(\hat{v}_b)^{\sum_{i \in [n]} \hat{k}_i}}
	\right)_{b \in [2]}
	 \\
	& = \left( h^{\sum_{i \in [n]}s_{i,b} \cdot y_i}\right)_{b \in [2]}
\end{align*} 
\begin{align*}
	(E_j)^{\langle \vec{x}^{\top}, y\rangle} &= \frac{\prod_{i \in [n]}e(c_{l,i,j}, h^{y_i})}{e(u_{j,1}, h^{dk_{y,1}})e(u_{j,2}, h^{dk_{y,2}})} \\
	& = \frac{\prod_{i \in [n]}e( u^{s_i} \cdot (w_j)^{x_{i,j}}, h^{y_i})}{e(u_{j,1}, h^{dk_{y,1}})e(u_{j,2}, h^{dk_{y,2}})} \\
	& = e(w_j,h)^{\langle \vec{x}^{\top}, y\rangle}
\end{align*}
where $\vec{x}^{\top} = (x_{1,j}, x_{2,j}, \dots, x_{n,j}) \in \mathcal{M}^n$ denote the vector composed of the $j$-th 
parameter of the local models from each of the $n$ clients,
and then solves the discrete logarithm in basis $E_j$ to return 
$\langle \vec{x}^{\top}, y\rangle$.
Performing $m$ decryption operations on 
$\{\left\{c_{\ell,i,j}\right\}_{i \in [n]}\}_{j \in [m]}$
returns the result $\{ \langle \vec{x}^{\top}, y\rangle \}_{j \in [m]}$.

\begin{theorem}\label{theorem:ind}
	The decentralized verifiable  multi-client functional encryption for inner product is static-IND-secure 
	under the DDH, multi-DDH and HSM assumptions, 
	as Definition \ref{definition:ind}.
	More precisely, we have
	\begin{align*}
		\mathrm{Adv_{CC-DVFE}^{sta-ind}}(t, q_E, q_K) \le \  
		&q_E \mathrm{Adv}^\mathrm{zk}_{\prod_{Encrypt}}(t) &\\
        &+ q_K \mathrm{Adv}^\mathrm{zk}_{\prod_{DKeyGenShare}}(t) &\\
		&+ \mathrm{Adv^{std-ind}_{LDSUM}}(t, q_K) &\\
		&  + 2q_E(2\mathrm{Adv}^{ddh}_{\mathbb{G}_1}(t) + \frac{1}{p}) &\\
		&+ 2\mathrm{Adv}_{\mathbb{G}_1}^{ddh}(t + 4 q_E \times t_{\mathbb{G}_1}) &
	\end{align*}
	where
	\begin{itemize}
		\item[-] $\mathrm{Adv_{CC-DVFE}^{sta-ind}}(t, q_E, q_K)$ is the best advantage of any PPT adversary running in time $t$
 			with $q_E$ encryption queries and $q_K$ key share queries against the static-IND-secure game of the CC-DVFE scheme;
		\item[-] $\mathrm{Adv}^\mathrm{zk}_{\prod_{Encrypt}}(t)$ is the best advantage of any PPT adversary running in time $t$
 			against the zero-knowledge property of the $\prod_{Encrypt}$ scheme;
		\item[-] $\mathrm{Adv}^\mathrm{zk}_{\prod_{DKeyGenShare}}(t)$ is the best advantage of any PPT adversary running in time $t$
 			against the zero-knowledge property of the $\prod_{DKeyGenShare}$ scheme;
		\item[-] $\mathrm{Adv^{std-ind}_{LDSUM}}(t, q_K)$ is the best advantage of any PPT adversary running in time $t$
 		  with $q_K$ encryption queries against the IND-security game of the LDSUM scheme\cite{VMCFE}.
		\item[-] $\mathrm{Adv}^{ddh}_{\mathbb{G}_1}(t)$ is the best advantage of 
			breaking the DDH assumption in $\mathbb{G}_1$ within time $t$.
	\end{itemize}
\end{theorem}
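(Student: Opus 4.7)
The plan is to prove Theorem \ref{theorem:ind} by a hybrid sequence of games, progressively replacing information in the adversary's view with simulated or uniform quantities until the final game is independent of the challenge bit. Let $\mathrm{Game}_0$ be the real static-IND experiment where the challenger uses honest ciphertexts for the challenge message bit $b$ and issues honest proofs and key shares.

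The first two transitions sanitize the proofs. In $\mathrm{Game}_1$ I would replace every $\pi_{Encrypt,i}$ produced by the challenger with output of the zero-knowledge simulator of $\prod_{Encrypt}$; summed over the $q_E$ encryption queries this costs $q_E \mathrm{Adv}^{zk}_{\prod_{Encrypt}}(t)$. $\mathrm{Game}_2$ does the analogous swap for $\pi_{DKeyGenShare,i}$, costing $q_K \mathrm{Adv}^{zk}_{\prod_{DKeyGenShare}}(t)$. After these steps, the challenger no longer needs the real witnesses $(s_i, x_i, t_i, \hat{k}_i)$ to produce syntactically valid proofs, which frees later games to program ciphertext and key-share components inconsistently with the original secret keys.

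Next I would reduce the functional-key-share component to LDSUM. The share $dk_i = ((\hat{v}_b)^{\hat{k}_i} \cdot h^{s_{i,b} y_i})_{b\in[2]}$, together with the class-group component $d_i$ and public $T_i$, is exactly the LDSUM template of \cite{VMCFE} with inputs $s_{i,b} y_i$ and one-time pads $\hat{k}_i, t_i$. A straight-line reduction that forwards the adversary's key-share queries to the LDSUM challenger turns $\mathrm{Game}_2$ into $\mathrm{Game}_3$, in which the distribution of the $(dk_i, d_i)$ depends on the $s_{i,b}$ only through the aggregate $\sum_{i \in \mathcal{HC}} s_{i,b} y_i$; the distinguishing advantage is bounded by $\mathrm{Adv}^{std-ind}_{LDSUM}(t, q_K)$. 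From this point the secrets $s_i$ of honest clients are effectively hidden except for their linear image under $y$.

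The remaining games reduce the ciphertexts themselves to random via DDH in $\mathbb{G}_1$. Each encryption query contributes components of the form $u^{s_i} \cdot w_j^{x_{i,j}}$ with $u = \mathcal{H}_1(\ell)$ and $s_i \in \mathbb{Z}_p^2$; a hybrid over the $q_E$ labels, using DDH on tuples $(v, u, v^{s_{i,b}}, u^{s_{i,b}})$ for $b \in [2]$, replaces each $u^{s_i}$ by a uniform element of $\mathbb{G}_1^2$, contributing $2q_E(2\mathrm{Adv}^{ddh}_{\mathbb{G}_1}(t) + 1/p)$ (the $1/p$ term absorbing the bad event that an embedded exponent hits zero). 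A final global DDH reduction, with the $4 q_E t_{\mathbb{G}_1}$ overhead for consistently programming all $q_E$ hash outputs, contributes $2\mathrm{Adv}^{ddh}_{\mathbb{G}_1}(t + 4 q_E t_{\mathbb{G}_1})$. In the terminal game the ciphertexts are uniformly masked group elements whose joint distribution is independent of $b$, so the adversary's advantage is zero and the stated bound follows by summing the transition gaps.

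The hardest step will be the LDSUM reduction in $\mathrm{Game}_3$, because the commitments $\mathrm{com}_i = v^{s_i}$ appear inside both $vk_{CT}$ and $vk_{DK}$ and must remain consistent with what the LDSUM challenger implicitly chose for $s_i$. Since the simulated proofs of $\mathrm{Game}_1$--$\mathrm{Game}_2$ no longer fix $s_i$ as a witness, the reduction can publish $\mathrm{com}_i$ as an auxiliary random element that is only opened inside simulated proofs; but one must still argue that the honest-client secret keys that the adversary receives upon corruption remain consistent with the LDSUM-provided view, and that the legitimacy constraint on admissible challenges (no trivial distinguishing from legally queried $y$) precisely matches the constraint under which LDSUM is secure. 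Once this bookkeeping is in place, the DDH hybrid is a routine adaptation of the standard inner-product FE argument to the $m$-dimensional label set $(\ell_{Enc,j})_{j \in [m]}$.
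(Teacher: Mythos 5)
Your first three transitions (simulating $\pi_{Encrypt,i}$ and $\pi_{DKeyGenShare,i}$, then invoking LDSUM so that the honest key shares reveal the $s_i$ only through the aggregate $\sum_{i\in\mathcal{HC}} s_{i,b}\,y_i$) coincide with the paper's games $G_1$ and $G_2$ and correctly account for the first three terms of the bound. The gap is in the ciphertext phase. You end by claiming that the masks $u^{s_i}$ can be replaced by uniform elements so that ``the ciphertexts are uniformly masked group elements whose joint distribution is independent of $b$.'' That cannot be the terminal game: the adversary legitimately holds functional key shares for the queried vectors $y$ and can run \texttt{DKeyComb} and \texttt{Decrypt} itself, so the challenge ciphertexts must remain consistently decryptable to $\langle X^b_{1:n,j},y\rangle$; they are never information-theoretically independent of the plaintexts. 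Moreover, a reduction that randomizes $u^{s_i}$ while the very same $s_i$ still appears in the commitment $\mathrm{com}_i=v^{s_i}$ inside $vk_{CT}$ and (in aggregate) in the last honest key share does not embed a clean DDH instance; your tuple $(v,u,v^{s_{i,b}},u^{s_{i,b}})$ ignores these correlated occurrences.

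What actually makes the final game independent of $b$ is a plaintext switch, not a mask randomization. The paper first uses multi-DDH (its Lemma~1, the $2\mathrm{Adv}^{ddh}_{\mathbb{G}_1}(t+4q_E t_{\mathbb{G}_1})$ term) to place all random-oracle outputs $u=\mathcal{H}_1(\ell)$ in the span of $g^{\mathbf{a}}$, and programs $w_j=g^{\hat w_j}$ with known exponents. Then, in a hybrid over the $q_E$ oracle queries (the $2q_E(2\mathrm{Adv}^{ddh}_{\mathbb{G}_1}(t)+1/p)$ term), it moves the $q$-th $u$ out of the span and performs a purely statistical rewriting $s_i\mapsto s_i+(\hat a)^{\perp}\gamma_j\,(X^b_{i,j}-X^0_{i,j})$ with $\gamma_j=-\hat w_j/(\hat u(\hat a)^{\perp})$, which converts the encryptions of $X^b_i$ into encryptions of $X^0_i$. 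This rewriting is invisible only because the admissibility constraints of Definition~\ref{definition:ind} ($X^0_i=X^1_i$ for corrupted $i$, and $\langle X^0_{1:n,j},y\rangle=\langle X^1_{1:n,j},y\rangle$ for every queried $y$) force the corrupted key shares and the aggregated honest key share to be unchanged, and because $(\hat a)^{\top}(\hat a)^{\perp}=0$ keeps the other ciphertexts and the commitments unaffected; handling adaptive encryption queries additionally requires the selective-guessing step with the $(p^{2m}+1)^{-n}$ factor in the paper's Lemma~5. None of this machinery appears in your proposal, and the constraint-matching you do mention is attached to the LDSUM step, where it is not the issue. Without the plaintext-switching argument, your conclusion that the terminal game has advantage zero does not follow, so the proof as proposed is incomplete at its central step.
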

\noindent
	\begin{proof}
		It is worth noting that, to achieve a non-negligible advantage in winning the game, 
		the adversary $\mathcal{A}$ must leave at least two honest clients, each possessing 
		two distinct messages $(X_0, X_1)$ for encryption queries.
		We define games as follows: 
	
		\setlength{\parskip}{1em}
		\noindent
		\textbf{Game $G_0$:} The challenger plays as this game described in the Definition \ref{definition:ind}, with the set of 
		$\mathcal{CC}$ of the corrupted clients are established before initialization. 
		Note that the hash function $\mathcal{H}_1, \mathcal{H}_2, \mathcal{H}'_1,\mathcal{H}$ 
		is modeled as a random oracle RO onto 
		$\mathbb{G}_1^2, \mathbb{G}_2^2, \mathbb{G}_1, \mathbb{Z}_p$.
	
		\noindent
		\textbf{Game $G_1$:} This game is as $G_0$, except that for generating proofs: 
		$\pi_{CT}$ and $\pi_{DK}$ in the QEncrypt and QDKeyGenShare.
		\begin{itemize}
			\item[-]
			$\pi_\mathrm{CT}$ is produced by a simulator $\mathrm{Sim}_\mathrm{Encrypt}$ on the input
			$(\ell, \left\{\ell_\mathrm{Enc}\right\}_{j \in [m]}, C_{\ell,i}, \vec{x}_0, \mathrm{com}_i)$ if $\mathcal{C}_i$ is queried in 
			QEncrypt for the input 
			$(i, X^0, X^1 \in [0, 2^q-1]^{n \times m}, \ell,\left\{\ell_\mathrm{Enc}\right\}_{j \in [m]})$.
			\item[-]
			$\pi_{\mathrm{DK}}$ is produced by a simulator $\mathrm{Sim}_\mathrm{DKeyGenShare}$ on the input
			$\left(pk, \ell_{y}, dk_i, \mathrm{com}_i\right)$ if $\mathcal{C}_i$ is queried in 
			QDKeyGenShare for the input$\left(i, y \in [0, 2^q-1]^n, \ell_{y}\right)$.
		\end{itemize}
		Given $q_E$ encryption queries and $q_K$ functional key share queries, 
		with the zero-knowledge property of NIZK proofs, we have
        \begin{align*}
            & \left|\mathrm{Adv}_{G_0} - \mathrm{Adv}_{G_1}  \right| \\ 
            \le & \
			q_E \mathrm{Adv^{zk}_{\prod_{Encrypt}}}(t) + q_K \mathrm{Adv^{zk}_{\prod_{DKeyGenShare}}}(t)
        \end{align*}
		\noindent
		\textbf{Game $G_2$:} This game is as $G_1$, except for functional key share queries 
		$\mathrm{QDKeyGen}(i,y)$: 
		\begin{itemize}
			\item[-] if $i$ is a corrupted index, then the challenger computes  
				$$
					dk_{i,j} = \left( 
						(\hat{v}_b)^{\hat{k}_i} \cdot h^{s_{i,b}\cdot y_i} 
						\right)_{b \in [2]}
				$$
			\item[-] if $i$ is the last non-corrupted index, the challenger computes
				$$
					dk_i = \left( 
						(\hat{v}_b)^{\hat{k}_i} \cdot h^{\sum_{t \in \mathcal{HC}}s_{t,1} \cdot y_t}
					\right)_{b \in [2]}
				$$	
			\item[-] for other non-corrupted index $i$, the challenger computes
			$$
				dk_i = \left( (\hat{v}_b)^{\hat{k}_i} \cdot h^{0},
				\right)_{b \in [2]}
			$$
		\end{itemize}
		Given $q_K$ key share queries, by the (static) IND-security of the LDSUM scheme in 
		\cite{VMCFE}, we have 
		$\left|\mathrm{Adv}_{G_1} - \mathrm{Adv}_{G_2} \right| 
		\le \mathrm{Adv^{std-ind}_{LDSUM}}(t, q_K)$ under the HSM assumptions.
	
		\noindent
		\textbf{Game $G_3$:} We simulate the answers to $\mathcal{H}_1$ 
		to any new RO-query of $\mathbb{G}_1^2$
		by a truly random pair in $\mathbb{G}_1^2$ on the fly. 
		Namely, $(g^{a}, g^{b})$, $a, b, \overset{\$}{\gets} \mathbb{Z}_p$.
		The simulation remains perfect, 
		and so $\mathrm{Adv}_2 = \mathrm{Adv}_3$.
	
		\noindent
		\textbf{Game $G_4$:}
		We simulate the answers to $\mathcal{H}'_1$ to any new RO-query of $\mathbb{G}_1$ by a 
		truly random element of $w_j=g^{\hat{w}_j}$, with $\hat{w}_j \overset{\$}{\gets} \mathbb{Z}_p$. 
		$\mathrm{Adv}_3 = \mathrm{Adv}_4$.
	
		\noindent
		\textbf{Game $G_5$:}  We simulate the answers to $\mathcal{H}_1$ to any new RO-query of $\mathbb{G}^2_1$ 
		by a truly random pair in the span of $g^\mathbf{a}$ 
		for $\mathbf{a}:=\left(1,\ a\right)^T$, with $a \overset{\$}{\gets} \mathbb{Z}_p$.
		In other words, 
		$u = g^{\hat{u}} = g^{\mathbf{a} r_\ell } = \left(g^{r_\ell}, g^{a r_\ell}\right),
		r_{\ell} \overset{\$}{\gets} \mathrm{RF}_{\mathbb{Z}_p}(\ell)$.
		From the following Lemma \ref{lemma4_5}, 
		$\left|\mathrm{Adv}_4 - \mathrm{Adv}_5 \right|\le 
		2\mathrm{Adv}_{\mathbb{G}_1}^{ddh}(t + 4q_E \times t_{\mathbb{G}_1})$.
	
		\noindent
		\textbf{Game $G_6$:} We simulate any QEncrypt query as the encryption of $X_i^0$ instead of 
		$X_i^b$ and go back for the answers to $\mathcal{H}_1$ to any new RO-query by a truly random pair in 
		$\mathbb{G}^2_1$ and to $\mathcal{H}'_1$ to a truly random element in $\mathbb{G}_1$.
		From the following Lemma \ref{lemma5_6}, we have
		$\mathrm{Adv}_{5}-\mathrm{Adv}_{6} \le 2q_E(2\mathrm{Adv}^{ddh}_{\mathbb{G}}(t) + \frac{1}{p})$. 
	
		\noindent
		In addition, it is clear that in the Game $G_6$ the advantage of 
		any adversary is exactly $0$ since $b$ does not appear anywhere. In other words,
		$\mathrm{Adv}_6 = 0$, which concludes the proof.
	\end{proof}
	
	\begin{lemma}\label{lemma4_5}
		If the Multi-DDH assumption holds, then no polynomial-time adversary 
		can distinguish between Game $G_4$ and Game $G_5$ with a non-negligible advantage.
	\end{lemma}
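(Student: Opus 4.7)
The plan is to reduce the distinguishability between $G_4$ and $G_5$ to the Multi-DDH assumption, then invoke the standard random self-reducibility that relates Multi-DDH to the single-instance DDH problem in $\mathbb{G}_1$. The only difference between the two games lies in how the random oracle $\mathcal{H}_1$ is answered: in $G_4$, each response is a uniform pair in $\mathbb{G}_1^2$; in $G_5$, each response is a uniform element in the one-dimensional span of $\mathbf{a} = (1,a)^T$ with a single fresh $a \overset{\$}{\gets} \mathbb{Z}_p$ shared across all labels.

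First, I build a reduction $\mathcal{B}$ that, given a Multi-DDH challenge $\bigl(g,\ g^a,\ \{(g^{r_\ell}, g^{z_\ell})\}_{\ell}\bigr)$, decides the challenge bit by embedding the pairs into the $\mathcal{H}_1$ programming: on a fresh RO-query for $\ell$, $\mathcal{B}$ returns $u_\ell := (g^{r_\ell}, g^{z_\ell})$, where $z_\ell = a r_\ell$ under the real distribution and $z_\ell$ is uniform under the random distribution. All remaining components inherited from earlier games are simulated directly: $\mathcal{B}$ samples the secret keys $\{s_i, \hat{k}_i, t_i\}_{i \in [n]}$ itself (so it can form all $T_i$, $d_i$, $\mathrm{com}_i$ and therefore $\mathrm{vk}_{CT}, \mathrm{vk}_{DK}, pk$), answers $\mathcal{H}_2, \mathcal{H}$ consistently, and answers $\mathcal{H}'_1$ by $w_j = g^{\hat{w}_j}$ with $\hat{w}_j \overset{\$}{\gets} \mathbb{Z}_p$ (inherited from $G_4$). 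Encryption queries can be produced because the challenge pair $u_\ell$ is sufficient for computing $u_\ell^{s_i} \cdot \prod_j w_j^{x_{i,j}}$ using the known $s_i$ and $\hat{w}_j$; the proof $\pi_{\mathrm{Encrypt},i}$ is generated via $\mathrm{Sim}_{\mathrm{Encrypt}}$ and the functional key shares via the rewritten rules of $G_2$, neither of which depend on $a$.

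Consequently, when the Multi-DDH challenge is real, $\mathcal{B}$ perfectly simulates $G_5$, and when it is random, $\mathcal{B}$ perfectly simulates $G_4$; the distinguishing advantage is therefore bounded by $\mathrm{Adv}^{\mathrm{multi\text{-}ddh}}_{\mathbb{G}_1}$ with up to $q_E$ instances. To finish, I apply the standard self-reduction from Multi-DDH to DDH: a tight hybrid using random re-randomization $(u_1 \mapsto u_1^{\rho} g^{\sigma},\ u_2 \mapsto u_2^{\rho} (g^a)^{\sigma})$ with $\rho, \sigma \overset{\$}{\gets} \mathbb{Z}_p$ turns a single DDH sample into $q_E$ correlated Multi-DDH samples, yielding the factor of $2$ and the extra cost $4 q_E \cdot t_{\mathbb{G}_1}$ from the four exponentiations per re-randomization.

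The main obstacle will be ensuring the reduction is sound across all oracles simultaneously: in particular, because $a$ appears nowhere outside the programmed $u_\ell$ component, I must verify that every ciphertext, commitment, verification key, and simulated proof produced by $\mathcal{B}$ is computable from the public DDH instance and the locally sampled $(s_i, \hat{k}_i, t_i, \hat{w}_j)$, and that the joint distribution of $\bigl(\mathcal{H}_1(\ell)\bigr)_\ell$ under the real Multi-DDH exactly matches the $G_5$ distribution (i.e., all pairs lie in the common rank-$1$ subspace determined by $a$). Once this consistency is confirmed, the advantage gap follows immediately from the Multi-DDH-to-DDH hybrid.
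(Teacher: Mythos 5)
Your proposal takes essentially the same route as the paper: both reduce the $G_4$/$G_5$ gap to the Multi-DDH assumption by embedding the challenge pairs as the programmed $\mathcal{H}_1$ outputs (with the real case giving the rank-one span of $(1,a)^T$ as in $G_5$ and the random case giving uniform pairs as in $G_4$), and then invoke the tight random self-reducibility relating Multi-DDH to DDH (Definition~\ref{def_multiDDH}, cited from DMCFE) to arrive at the bound $2\,\mathrm{Adv}^{ddh}_{\mathbb{G}_1}(t+4q_E\times t_{\mathbb{G}_1})$; your write-up is in fact more explicit than the paper's about how the reduction simulates keys, commitments, ciphertexts, simulated proofs and key shares from the locally sampled $(s_i,\hat{k}_i,t_i,\hat{w}_j)$. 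One bookkeeping remark: in the paper the factor $2$ arises from the coin-guessing advantage conversion in the reduction to Multi-DDH, not from the Multi-DDH-to-DDH re-randomization hybrid (which is tight), but this does not change the final bound.
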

	
	\noindent
	\begin{proof}
		Given a multi-DDH instance 
		$$\left((g^x,(g^{y_q}, g^{z_q})_{q \in [q_E]}) \mid x, y_q \overset{\$}{\gets} \mathbb{Z}_p\right),$$ 
		the challenger flips 
		an unbiased coin with $\{0,1\}$ and obtains a bit $\mu\in\{0,1\}$.
		if $\mu = 0, z_q = x y_q$; if $\mu = 1, z_q \overset{\$}{\gets} \mathbb{Z}_p$.
		Suppose there exists an PPT adversary $\mathcal{A}$ who can distinguishes 
		$$A = \left\{ g^a, (g^{r_q}, g^{a r_q})_{q \in [q_E]} \mid r_q \overset{\$}{\gets} \mathrm{RF}_{\mathbb{Z}_p}(\ell_q),a \overset{\$}{\gets} \mathbb{Z}_p, q \in [q_E] \right\}$$
            and
        $$B = \left\{ g^a, (g^b, g^c)_{q \in [q_E]} \mid a,b,c \overset{\$}{\gets} \mathbb{Z}_p, q \in [q_E] \right\}$$
		by a advantage $\mathrm{Adv}(t)$, 
		where $\mathrm{Adv}(t)$ denotes the advantage of an adversary in 
		distinguishing $G_4$ from $G_5$ within time $t$.
		There exists a simulator $\mathcal{B}$ who can use 
		$\mathcal{A}$ to break  the multi-DDH assumption as follows.
		$\mathcal{B}$ sets $C=(g^x,(g^{y_q}, g^{z_q})_{q \in [q_E]})$.
		$\mathcal{B}$ sends $C$ to $\mathcal{A}$ and make the adversary guess that C is equal to A or B. 
		If $\mathcal{A}$ the believes that $C=A$ outputs $\omega=0$, $C=B$ outputs $\omega=1$.
		$\mathcal{B}$ outputs the its guess $\mu'$ on $\mu$: $\mu'= \omega$.
		It follows from Definition \ref{def_multiDDH} that 
		the total advantage with which $\mathcal{B}$ can break the 
		multi-DDH assumption in $\mathbb{G}_1$ is
		$\mathrm{Adv}_{\mathbb{G}_1}^{ddh}(t + 4q_E \times t_{\mathbb{G}_1})$.
		Therefore,
		\begin{align*}
			    & \mathrm{Adv}_{\mathbb{G}_1}^{ddh}(t + 4q_E \times t_{\mathbb{G}_1}) \\
			=   & |\operatorname{Pr}\left[\mu^{\prime}=\mu \mid \mu=0\right] \times \operatorname{Pr}[\mu=0] \\
				& - \operatorname{Pr}\left[\mu^{\prime}=\mu \mid \mu=1\right] \times \operatorname{Pr}[\mu=1]|\\
			\ge &  (\frac{1}{2}+\mathrm{Adv}(t)) \times \frac{1}{2} - \frac{1}{2} \times \frac{1}{2} \\
			=   & \frac{1}{2} \mathrm{Adv}(t)
		\end{align*}
		Therefore, $\left|\mathrm{Adv}_4 - \mathrm{Adv}_5 \right|\le 
		2\mathrm{Adv}_{\mathbb{G}_1}^{ddh}(t + 4q_E \times t_{\mathbb{G}_1})$.
	\end{proof}
	
	\begin{lemma}\label{lemma5_6}
		If the DDH assumption holds, then no polynomial-time adversary 
		can distinguish between Game $G_5$ and Game $G_6$ with a non-negligible advantage.
	\end{lemma}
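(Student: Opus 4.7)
The plan is to prove Lemma \ref{lemma5_6} via a hybrid argument over the $q_E$ encryption queries. I would define games $H_0, H_1, \ldots, H_{q_E}$ where $H_k$ answers the first $k$ encryption queries as encryptions of $X^0$ and the remaining $q_E-k$ queries as encryptions of $X^b$, while keeping every other element of Game $G_5$ untouched. Then $H_0 = G_5$ and $H_{q_E} = G_6$, so it suffices to show that $|\mathrm{Adv}_{H_k} - \mathrm{Adv}_{H_{k+1}}| \le 2\mathrm{Adv}_{\mathbb{G}_1}^{ddh}(t) + 1/p$ and telescope over the $q_E$ hybrids; the outer factor of $2$ in the stated bound absorbs the standard two-sided distinguishing reduction.

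To bound one hybrid step, focus on the $(k{+}1)$-th encryption query, with some label $\ell^\star$ whose random-oracle image is $u = \mathcal{H}_1(\ell^\star) = (g^{r_{\ell^\star}}, g^{a r_{\ell^\star}})$ in Game $G_5$. I would insert two DDH-based sub-transitions flanking an information-theoretic step. In the first sub-transition, I embed a DDH challenge $(g^a, g^{r_{\ell^\star}}, Z)$ by programming $\mathcal{H}_1(\ell^\star) := (g^{r_{\ell^\star}}, Z)$: when $Z = g^{a r_{\ell^\star}}$ the view matches $H_k$, and when $Z$ is uniform the view is identical to an intermediate game $H_k^\star$ in which $u_{\ell^\star}$ is uniform over $\mathbb{G}_1^2$. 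This costs $\mathrm{Adv}_{\mathbb{G}_1}^{ddh}(t)$. In the third sub-transition, I reverse this embedding, paying another $\mathrm{Adv}_{\mathbb{G}_1}^{ddh}(t)$ to restore the $G_5$-style $u_{\ell^\star}$ with the message now replaced by $X^0$, yielding $H_{k+1}$.

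The middle step switches the plaintext inside $H_k^\star$ from $X^b$ to $X^0$. Writing $u^{s_i} \cdot w_j^{x_{i,j}} = g^{r_{\ell^\star} s_{i,1} + r' s_{i,2} + \hat w_j x_{i,j}}$, I would re-parametrize the honest clients' secret keys as $s_{i,2}^\prime := s_{i,2} + \Delta_{i,j}$ so that the resulting ciphertext at label $\ell^\star$ is identically distributed to an encryption of $X^0$. The shifts $\Delta_{i,j}$ are well defined because $r'$ is uniform and independent, and with probability at least $1-1/p$ it is nonzero. The crucial consistency check is that this re-parametrization does not change the adversary's other views: the ciphertexts at other labels depend on $s_i$ only through $s_{i,1} + a s_{i,2}$, which I can absorb by correspondingly adjusting the $r_{\ell}$-based randomness; and the simplified decryption key $h^{\sum_{t \in \mathcal{HC}} s_{t,1} y_t}$ given by Game $G_2$ is preserved because the admissibility constraint of the static-IND game forces $\langle X^0_{1:n,j} - X^b_{1:n,j}, y_t \rangle = 0$ for every queried $y_t$.

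The main obstacle I anticipate is exactly this consistency accounting in the information-theoretic step. Game $G_5$ has already simplified the random oracles and Game $G_2$ has reshaped the decryption-key distribution, and I must verify that the re-parametrization of $s_i$ is simultaneously consistent with (a) every earlier and later ciphertext query under other labels, whose $u$ values retain the form $(g^{r_\ell}, g^{a r_\ell})$, (b) the proofs $\pi_{\mathrm{Encrypt},i}$ and $\pi_{\mathrm{DKeyGenShare},i}$, which are now simulated from Game $G_1$ onward and thus independent of the actual witnesses, and (c) the aggregated key share $h^{\sum_{t} s_{t,1} y_t}$ handed out for the last honest client. Once each of these is checked, the statistical distance collapses to the $1/p$ term, completing the hybrid and the lemma.
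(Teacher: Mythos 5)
Your overall blueprint (a hybrid over encryption queries with two DDH transitions sandwiching an information-theoretic plaintext switch) matches the paper's proof, but two of your concrete steps fail. First, the re-parametrization $s_{i,2}\mapsto s_{i,2}+\Delta_{i,j}$ is the wrong direction of shift. Ciphertexts under every other label $\ell\ne\ell^\star$ use $u_\ell=(g^{r_\ell},g^{ar_\ell})$, so the shift changes their exponents by $r_\ell\,a\,\Delta_{i,j}$, and it also changes the public commitments $\mathrm{com}_i=v^{s_i}$ (in $G_5$ the value $v=\mathcal{H}_1(\ell_D)$ also lies in the span of $g^{\mathbf a}$ and is part of $vk_{CT}$ handed to the adversary at initialization). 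Your proposed repair, ``adjusting the $r_\ell$-based randomness,'' cannot work: the required correction depends on the client index $i$ and the dimension $j$, whereas $r_\ell$ is a single scalar shared by all clients and dimensions under that label. The paper avoids all of this by shifting $s_i$ along $(\hat a)^{\bot}=(-a,1)^{T}$, i.e. $s_i\mapsto s_i+(\hat a)^{\bot}\gamma_j(X^b_{i,j}-X^0_{i,j})$, which is annihilated by every element in the span of $g^{\mathbf a}$ and hence leaves all other-label ciphertexts and the commitments untouched with no compensation; only the $q$-th label, whose $u$ has been made uniform (with $\hat u(\hat a)^{\bot}\ne 0$ up to the $1/p$ term), feels the shift, and choosing $\gamma_j=-\hat w_j/(\hat u(\hat a)^{\bot})$ converts the encryption of $X^b$ into one of $X^0$.

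Second, your hybrid runs over individual encryption queries, i.e. one client slot at a time, and this breaks the key-share consistency you invoke. After Game $G_2$ the adversary sees the aggregate $h^{\sum_{t\in\mathcal{HC}} s_{t}\cdot y_t}$ carried by the last honest client's share; shifting a single honest client's key changes this aggregate by $\Delta_{i,j}\,y_i$, and the admissibility condition of the static-IND game only forces $\sum_{i}(X^b_{i,j}-X^0_{i,j})y_i=0$, a cancellation across all clients under the same label, not per client. This is precisely why the paper hybridizes over RO-queries (labels) and switches all encryptions under the $q$-th label simultaneously, with $\gamma_j$ independent of $i$, so that the induced change in the aggregate key share vanishes by the admissibility constraint; if instead you shift all honest clients at once, as your plural ``honest clients' secret keys'' suggests, you are no longer performing your per-query hybrid. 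Finally, you silently use that the $s_i$ are independent of the challenge plaintexts when renaming them; since QEncrypt queries are adaptive, the paper needs the selective-guessing step with the $(p^{2m}+1)^{-n}$ loss (Step 1 in the proof of Lemma~\ref{lemma2_3}) to justify this independence, and your sketch needs an analogous argument.
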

	
	\begin{proof}
		The gap between $G_5$ and $G_6$ will be proven 
		using a hybrid technique on the RO-queries. Accordingly, we index the following games 
		by $q$, where $q = 1, \cdots, q_E$.
		\begin{itemize}
			\item $G_{5.1.1}$: This is exactly game $G_5$. Thus, $\mathrm{Adv}_5 = \mathrm{Adv}_{5.1.1}$
			\item $G_{5.q.2}$: We start by modifying the distribution of the output of 
				the \( q \)-th RO-query in \( \mathbb{G}^2_1 \), changing it from uniformly 
				random in the span of \( g^{\mathbf{a}} \) to uniformly random 
				over \( \mathbb{G}^2_1 \). Then, we switch to 
				$\mathbf{u} = u_1 \cdot (\hat{a})+u_2 \cdot (\hat{a})^{\bot}$ where 
				$u_1 \overset{\$}{\gets} \mathbb{Z}_p, \ u_2 \overset{\$}{\gets} \mathbb{Z}^*_p$
				and $\mathbf{a}:=\left(1,\ a\right)^T, \
				(\hat{a})^{\bot}:=\left(-a,\ 1\right)^T $, 
				which only changes the adversary view by a statistical distance of $\frac{1}{p}$.
				From the following Lemmas \ref{lemma1_2}, 
				we can get $\left|\mathrm{Adv}_{5.q.1} - \mathrm{Adv}_{5.q.2} \right| 
					\le 2\mathrm{Adv}^{ddh}_{\mathbb{G}_1}(t) + \frac{1}{p}$.
			\item $G_{5.q.3}:$ We now change the ciphertext response to encryption queries, 
			shifting $\{c_{\ell,i,j}\}_{j \in [m]}= 
			\{(u_j)^{s_i} \cdot (w_j)^{X^b_{i,j}}\}_{j \in [m]}$
			to $\{c_{\ell,i,j}\}_{j \in [m]}= \{(u_j)^{s_i} \cdot (w_j)^{X^0_{i,j}}\}_{j \in [m]}$ 
			where $u_j$ corresponds to the $q$-th RO-query of $\mathbb{G}^2_1$ and 
			$w_j$ corresponds to the $q$-th RO-query of $\mathbb{G}_1$.
			From the following Lemmas \ref{lemma2_3}, we can get 
			$\mathrm{Adv}_{5.q.2} = \mathrm{Adv}_{5.q.3}$.
			\item $G_{5.q+1.1}:$ This transition is the reverse of that 
				$G_{5.q.1} $ to $ G_{5.q.2}$, namely, we switch back 
				the distribution of $u$ computed on the $q$-th RO-query from uniformly 
				random over $\mathbb{G}^2$ (conditioned on the fact that 
				$\hat{u}(\hat{a})^{\bot} \ne 0$) 
				to uniformly random in the span of $g^\mathbf{a}$. Thus, we can get 
				$\left|\mathrm{Adv}_{5.q.3}-\mathrm{Adv}_{5.q+1.1} \right|\le 
				2\mathrm{Adv}^{ddh}_{\mathbb{G}}(t) + \frac{1}{p}$.
		\end{itemize}
		Therefore, $\left| \mathrm{Adv}_{5}-\mathrm{Adv}_{6} \right| \le 
		2q_E(2\mathrm{Adv}^{ddh}_{\mathbb{G}}(t) + \frac{1}{p})$. 
	\end{proof}
	
	\begin{lemma}\label{lemma1_2} 
		For any PPT adversary, let
			$\mathrm{Adv}(t)$ denote its advantage in distinguishing the distributions
			$$\mathcal{D}_0 = \{ g, g^r, g^{a r} \mid a \overset{\$}{\gets} \mathbb{Z}_p,\ r = RF_{\mathbb{Z}_p}(\ell) \}$$
			and
			$$\mathcal{D}_1 = \{ g, g^{u_1 - u_2 a}, g^{u_1 a + u_2} \mid a, u_1, u_2 \overset{\$}{\gets} \mathbb{Z}_p \}.$$
			Let $\mathrm{Adv}_{\mathbb{G}_1}^{ddh}(t)$ denote the advantage in solving the DDH problem in $\mathbb{G}_1$.
			Then it holds that $\mathrm{Adv}(t) \le 2\mathrm{Adv}_{\mathbb{G}_1}^{ddh}(t) $.
	\end{lemma}
	\begin{proof}
		Given a DDH instance  
		$(g^x,g^y,g^z \mid x, y\overset{\$}{\gets} \mathbb{Z}_p)$, 
		the challenger flips an unbiased coin with $\{0,1\}$ and obtains a bit $\mu \overset{\$}{\gets} \{0,1\}$.
		if $\mu = 0, z=xy$; if $\mu = 1, z \overset{\$}{\gets} \mathbb{Z}_p$.
		Suppose there exists an PPT adversary $\mathcal{A}$ who can distinguishes 
		$A = \{g, g^r,g^{ar} \mid a \overset{\$}{\gets} \mathbb{Z}_p, r = RF_{\mathbb{Z}_p}(\ell)\}$
		and 
		$B = \{g, g^{u_1-u_2 a},g^{u_1a+u_2} \mid a, u_1,u_2 \overset{\$}{\gets} \mathbb{Z}_p\}$
		within time $t$ by a advantage $\mathrm{Adv}(t)$, there exists a simulator $\mathcal{B}$ who can use 
		$\mathcal{A}$ to break  the DDH assumption as follows. \\
		 $\mathcal{B}$ sets $C=(X, Y, Z)$.
		$\mathcal{B}$ sends $C$ to $\mathcal{A}$ and make the adversary guess that $C$
		is equal to $A$ or $B$. 
		If $\mathcal{A}$ believes that $C=A$ outputs $\omega=0$; $C=B$ outputs $\omega=1$.
		$\mathcal{B}$ outputs $\mu'= \omega$.
		Let $\mathrm{Adv}_{\mathbb{G}_1}^{ddh}(t)$ denote the best advantage of 
		breaking the DDH assumption in $\mathbb{G}_1$ within time $t$.
		Therefore, 
		\begin{align*}
			& \mathrm{Adv}_{\mathbb{G}_1}^{ddh}(t) \\
            =   &\left|\operatorname{Pr}\left[\mu^{\prime}=\mu \middle| \mu=0\right] \operatorname{Pr}[\mu=0]-\operatorname{Pr}\left[\mu^{\prime}=\mu \middle| \mu=1\right] \operatorname{Pr}[\mu=1]\right|\\
			\ge &  (\frac{1}{2}+\mathrm{Adv}(t)) \times \frac{1}{2} - \frac{1}{2} \times \frac{1}{2} \\
			 =  &\frac{1}{2} \mathrm{Adv}(t)
		\end{align*}
		Therefore, $\mathrm{Adv}(t) \le 2 \mathrm{Adv}_{\mathbb{G}_1}^{ddh}(t)$.
	\end{proof}

	\begin{lemma}\label{lemma2_3}
		In the adversary's point of view, game $G_{5.q.2}$ is the same as game $G_{5.q.3}$, 
		thus, $\mathrm{Adv}_{5.q.2} = \mathrm{Adv}_{5.q.3}$, for $q=1,\cdots, q_E$.
	\end{lemma}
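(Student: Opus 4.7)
The plan is to prove the lemma by an information-theoretic change of variables on the honest clients' secret keys that maps the view of $G_{5.q.2}$ exactly onto the view of $G_{5.q.3}$. The key structural fact is that in $G_{5.q.2}$ the $q$-th random-oracle response has been switched from $u = g^{u_1 \mathbf{a}}$ to $u = g^{u_1 \mathbf{a} + u_2 \mathbf{a}^{\perp}}$ with $u_2 \in \mathbb{Z}_p^{*}$. Decomposing each honest client's secret along the dual basis as $\alpha_i = \mathbf{a}^{T} s_i$ and $\beta_i = (\mathbf{a}^{\perp})^{T} s_i$, the $\beta_i$-component appears in the adversary's view only through the factor $g^{u_2 \beta_i}$ inside the $q$-th ciphertext. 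Every other observable---the ciphertexts of the other queries, the commitments $\mathrm{com}_i = v^{s_i}$, and the aggregate $\sum_{t \in \mathcal{HC}} s_{t,b} y_t$ released through the last honest client's decryption-key share in $G_2$---depends on $s_i$ only through $\alpha_i$, since by $G_5$ the hash responses $u_{\ell'}$ for the other labels and the commitment base $v$ all lie in the span of $g^{\mathbf{a}}$.

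Based on this, I would define the bijection $s_i \mapsto s_i + \eta_i \mathbf{a}^{\perp}$ for each honest client, choosing the $\eta_i$'s so that the induced shift $u_2 \eta_i (a^2 + 1)$ in the $q$-th ciphertext exponent cancels the message switch from $X^b_{i,j}$ to $X^0_{i,j}$. The three invariance checks are then: (i) the non-$q$-th ciphertexts $u_{\ell'}^{s_i} = g^{r_{\ell'} \alpha_i}$ are untouched; (ii) the commitments $\mathrm{com}_i$ are untouched for the same reason, because the $\mathbf{a}^{\perp}$-direction lies in the kernel of the linear map induced by $v$; and (iii) the aggregated sum $\sum_{t \in \mathcal{HC}} s_{t,b} y_t$ picks up a correction proportional to $\sum_{t \in \mathcal{HC}} \eta_t y_t$, which vanishes by the IND-security restriction $\sum_{t \in \mathcal{HC}} (X^b_{t,j} - X^0_{t,j}) y_t = 0$ enforced on every queried $y$. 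These three facts together imply that the two transcripts are drawn from identical distributions, yielding $\mathrm{Adv}_{5.q.2} = \mathrm{Adv}_{5.q.3}$.

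The hard part will be reconciling the per-dimension nature of the message change, $\hat{w}_{q,j}(X^b_{i,j} - X^0_{i,j})$, with the fact that a single scalar $\eta_i$ per client produces only one exponent shift: for $m \ge 2$ a direct single-shift argument cannot simultaneously absorb all $m$ coordinate changes. I would handle this by recasting the transition $G_{5.q.2} \to G_{5.q.3}$ as an inner hybrid across the dimensions $j^{*} = 1, \ldots, m$, switching one coordinate at a time with the matched shift $\eta_i = \hat{w}_{q,j^{*}}(X^b_{i,j^{*}} - X^0_{i,j^{*}})/(u_2 (a^2 + 1))$, and absorbing the spurious residual perturbation on the remaining coordinates $j \neq j^{*}$ by simultaneously re-randomising the corresponding $w_{q,j}$-values that were made uniformly random in $G_4$. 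Each sub-step is a measure-preserving reparametrisation of the joint randomness, and iterating through all $m$ dimensions yields the claimed equality of advantages.
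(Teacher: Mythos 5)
Your core mechanism is the same as the paper's Step 2: rewrite each honest client's key as $s_i \mapsto s_i + (\hat{a})^{\bot}\cdot(\text{shift})$ so that the $q$-th ciphertext turns from an encryption of $X^b_i$ into one of $X^0_i$, and check that the shift is invisible elsewhere (other ciphertexts and $\mathrm{com}_i$ live in the span of $g^{\mathbf{a}}$; the key-share sums are unchanged because the finalize constraints force $\sum_i y_i(X^b_{i,j}-X^0_{i,j})=0$). However, you omit the paper's Step 1 entirely: since the shift amount is a function of the challenge plaintexts, the claim that $s_i$ and the shifted $s_i$ are identically distributed requires $s_i$ to be independent of those plaintexts. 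The paper obtains this only by first guessing the pairs $(X^0_i,X^1_i)$ (a selective variant of the games, with the loss factor $\left(p^{2m}+1\right)^{-n}$ applied to both $G_{5.q.2}$ and $G_{5.q.3}$), and it explicitly notes that the rewrite may fail for adaptive QEncrypt queries. Your direct substitution in the adaptive game needs either this complexity-leveraging step or a careful conditional-independence argument for the $(\hat{a})^{\bot}$-component of $s_i$, neither of which you supply.

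More seriously, your repair of the genuinely hard point (one $2$-dimensional $s_i$ masking $m$ coordinates) does not work as described. In your inner hybrid step $j^*$ the shift is $\eta_i\propto \hat{w}_{j^*}\left(X^b_{i,j^*}-X^0_{i,j^*}\right)$, so every other coordinate $j\neq j^*$ of client $i$'s $q$-th ciphertext acquires the residual factor $w_{j^*}^{-(X^b_{i,j^*}-X^0_{i,j^*})}$, which depends on the client index $i$. Re-randomising the common base $w_{q,j}$ changes $c_{i,j}$ only by a factor whose exponent is proportional to that client's own coordinate $X_{i,j}$; a client-independent re-randomisation of a shared base cannot cancel a client-dependent perturbation proportional to $X^b_{i,j^*}-X^0_{i,j^*}$ for all clients simultaneously, so each sub-step of your inner hybrid is not measure-preserving. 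The paper avoids the inner hybrid altogether: it performs a single rewrite using the per-dimension, client-independent coefficient $\gamma_j=-\hat{w}_j/\bigl(\hat{u}(\hat{a})^{\bot}\bigr)$, so that all $m$ coordinates are converted at once, and it checks the (possibly $m$ per-dimension) key shares against the per-dimension constraints. Whatever reservations one may have about that step, your proposed route is not equivalent to it, and as written the dimension-by-dimension absorption argument fails.
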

	
	\begin{proof}	
		We then prove this does not change the 
		adversarys view from $G_{5.q.2}$ to $G_{5.q.3}$. 
		Obviously, if the output of the $q$-th RO-query is not used by QEncrypt-queries, 
		then the games $G_{5.q.2}$ and $G_{5.q.3}$ are identical. However, 
		we can also demonstrate that this holds when the 
		$q$-th RO-queries are genuinely involved in QEncrypt-queries.
		The prove can be divided into two step.
		\begin{itemize}
			\item[-] \textbf{Step 1:} We show that there exists a PPT adversary $\mathcal{B^*}$ 
			such that, for $t = 2,3$,
			$\mathrm{Adv}_{5.q.t} = \left(p^{2m} + 1\right)^{-n} \mathrm{Adv}^*_{5.q.t}(\mathcal{B^*})$, 
			holds. Here, the games $G^*_{5.q.2}$ and $G^*_{5.q.3}$ are 
			selective variants of games $G_{5.q.2}$ and $G_{6.q.3}$, respectively.
			Fisrtly, the
			adversary $\mathcal{B^*}$ guesses for all
			$i\in [n], Z_i \overset{\$}{\gets} \mathbb{Z}^{2 \times m}_p \cup \{\bot\}$, 
			which it sends to its selective game $G^*_{5.q.t}$.
			Specifically, each guess $Z_i$ is either a pair of vector $(X_i^0, X_i^1)$ 
			queried to QEncrypt, or $\bot$, indicating no query was made to QEncrypt. 
			Then, it simulates $\mathcal{A}$'s view using its own oracles.
			When $\mathcal{B}^*$ guesses correctly, denoted by event $E$, 
			it simulates $\mathcal{A}$'s view exactly as in $G_{5.q.t}$.
			If the guess is wrong, then $\mathcal{B}$ stops the simulation and 
			outputs a random bit $\beta$. Since event $E$ happens with probability
			$(p^{2m} + 1)^{-n}$ and 
			is independent of the view of adversary $\mathcal{A}$, 
			we can get that $\mathrm{Adv}^*_{5.q.t}(\mathcal{B^*})$ is equal to 
			\begin{align*}
				& | \operatorname{Pr}\left[G_{5.q.t}^{\star} \middle| b=0, E\right] 
					\cdot \operatorname{Pr}[E]+
					\frac{\operatorname{Pr}[\neg E]}{2} \\
				&	- \operatorname{Pr}\left[G_{5.q.t}^{\star} \middle| b=1, E\right]
					\cdot \operatorname{Pr}[E] +
					\frac{\operatorname{Pr}[\neg E]}{2}
				 | \\
				= &\operatorname{Pr}[E]\cdot
				| 
					\operatorname{Pr}\left[G_{5.q \cdot t}^{\star} \middle| b=0, E\right] - \operatorname{Pr}\left[G_{5.q \cdot t}^{\star} \middle| b=1, E\right]
				| \\
				 = &\left(p^{2m} + 1\right)^{-n} \cdot \operatorname{Adv}_{5.q.t}.
			\end{align*}

			\item[-] \textbf{Step 2:} We assume the matrix $(Z_i)_{i\in[n]}$ sent by 
			$\mathcal{B}^*$ are consistent. That is, Specifically, it does not cause the game to 
			terminate prematurely and 
			Finalize on $\beta$ does not return a random bit independent of $b'$, 
			conditioned on $E'$. Now, we demonstrate that the games $G^*_{5.q.2}$ and $G^*_{5.q.3}$ 
			are identically distributed. To prove this, we leverage the fact that the 
			following two distributions are identical, for any choice of $\gamma_j$:
			$$
				\left\{\left( s_i\right)_{i \in [n], Z_{i,j}=(X_{i,j}^0,X_{i,j}^1)}\right\}_{j \in [m]}
			$$
            and 
            $$
                \left\{\left(s_i + (\hat{a})^\bot \cdot \gamma_j (X_{i,j}^b-X_{i,j}^0)\right)_{i \in [n], Z_{i,j}=(X_{i,j}^0,X_{i,j}^1)}\right\}_{j \in [m]}
			$$
			where $(\hat{a})^{\bot}:= (-a,1)^T \in \mathbb{Z}^2_p$ and $s_i\gets \mathbb{Z}^2_p$, 
			for all $i = 1,\cdots, n$. They are identical because the \( s_i \) values are 
			independent of the \( Z_{i,j} \), a property that holds due to the selective nature 
			of our setting. This independence may not necessarily apply in the case of adaptive 
			QEncrypt queries.
			Thus, we can re-write $s_i$ into 
			$s_i + (\hat{a})^\bot \cdot \gamma_j (X_{i,j}^b-X_{i,j}^0)$ without 
			changing the distribution of the game. \\
			We now show the additional terms 
			$\Delta_{i,j} = (\hat{a})^\bot \cdot \gamma_j (X_{i,j}^b-X_{i,j}^0)$ 
			manifest in the adversary's view.
			\begin{enumerate}
				\item They do not appear in the output of QCorrupt.
				\item They might appear in $\text{QDKeyGen}(i,y)$. 
				In practice, if different clients have the same function for each 
				dimension aggregation, only one key can be generated, and if each 
				dimension aggregation function is different, $m$ keys can be generated.
				For generality, we consider the case of m keys.
				\begin{itemize}
				\item[-] if $i$ is a corrupted index, then the challenger computes  
					$
						\{dk_{i,j}\}_{j \in [m]} = \{( 
							(\hat{v}_b)^{\hat{k}_i} \cdot h^{(s_{i,b}+\Delta_{i,j,b})\cdot y_i}
						)_{b \in [2]}\}_{j \in [m]}.
					$
					As $i\in \mathcal{CC}, Z_i\ne\bot, X_i^1 = X_i^0$,  
					we can get that $\left\{\Delta_{i,j} = 0^2\right\}_{j \in [m]}$. 
					Therefore in this case, the decryption key remains the same upon change.
				\item[-] if $i$ is the last honest index, for $j \in [m]$, the challenger computes
					\small
					$$
						dk_{i,j} = ( 
							(\hat{v}_b)^{\hat{k}_i} \cdot h^{\sum_{t \in \mathcal{HC}}(s_{t,b}+\Delta_{t,j,b}) \cdot y_t}
						)
					$$
					\normalsize
					By the constraints for $E'$,  as $i\in \mathcal{HC}$, $Z_i\ne\bot$
					and $\{\langle X^{\top 0}_j, \vec{y} \rangle = \langle X^{\top 1}_j, \vec{y} \rangle\}_{j \in [m]}$,
					we can get that,  for $j \in [m]$, 
					$$
						\sum_{i \in [n]:Z_{i,j}=(X_{i,j}^0,X_{i,j}^1)} y_i(X_{i,j}^b - X_{i,j}^0)  = 0.
					$$
					Consider the constraints on queries to corrupt clients, as $i\in \mathcal{CC}$ and $X^0_i = X^1_i$,
					for $j \in [m]$,
					$$\sum_{i \in \mathcal{CC}:Z_{i,j}=(X_{i,j}^0, X_{i,j}^1)} y_i(X_{i,j}^b - X_{i,j}^0)  = 0 $$ 
						is hold,
					and we can get that, for $j \in [m]$, 
					$$\sum_{i \in \mathcal{HC}}\Delta_{i,j} \cdot y_i = \textbf{0}.$$ 
					Thus, for $j \in [m]$,
					$$\sum_{t \in \mathcal{HC}}(s_{t}+\Delta_{t,j}) \cdot y_t = 
					\sum_{t \in \mathcal{HC}} s_{t} \cdot y_t.$$
					Therefore in this case, the decryption key remains the same upon change.
					\item[-] for other honest index $i$, the challenger computes
					$$
						dk_i = \left( (\hat{v}_b)^{\hat{k}_i} \cdot h^{0}\right)_{b \in [2]}
					$$
					This case does not involve the encryption keys.
				\end{itemize}
				So in $\text{QDKeyGen}(i,y)$, it's indistinguishable before and after the transformation.
				\item They appear in the output of the QEncrypt-queries 
				which use $u_j$ computed on the q-th RO-query. Since for all others, 
				the vector $u_j$ lies in the span of $g^{\hat{a}}$, i.e., $u_j=g^{\hat{u}_j}, w_j = g^{\hat{w}_j}$
				and $(\hat{a})^{\top}(\hat{a})^{\bot} = 0$. We thus have, for $j \in [m]$,
				$$
					C_{i,j} = (u_j)^{s_i +(\hat{a})^\bot \cdot \gamma_j (X_{i,j}^b-X_{i,j}^0)}
					(w_j)^{X_{i,j}^b}.
				$$
				Since $\hat{u}_j (\hat{a})^{\bot} \ne 0$, we choose 
				$\gamma_j = \frac{-\hat{w}_j}{\hat{u}_j (\hat{a})^{\bot}} \mod{p}$
				and then $C_{i,j}= (u_j)^{s_i}(w_j)^{X_{i,j}^0}$. 
				We stress that $\gamma_j$ is independent of the index $i$, and so this 
				simultaneously converts all the encryptions of $X_i^b$ into encryptions 
				of $X_i^0$. Finally, reverting these statistically  perfect changes, 
				we obtain that $C_i$ is identically distributed to 
				$\{(u_j)^{s_i}(w_j)^{X_{i,j}^0}\}_{j \in [m]}$, 
				as in game $G^*_{5.q.3}$.
			\end{enumerate}
			Thus, when event $E'$ happens, the games are identically distributed. 
			When $E$ happens, the games both return 
			$\beta \overset{\$}{\gets} \{0, 1\}$: 
			$\mathrm{Adv}^*_{5.q.2}(\mathcal{B^*}) = \mathrm{Adv}^*_{5.q.3}(\mathcal{B^*})$.
		\end{itemize}
		As a conclusion, the change from $G_{5.q.2}$ to $G_{5.q.3}$ is imperceptible 
		from the point of view of the adversary $\mathcal{A}$. 
		Therefore, $\mathrm{Adv}_{5.q.2} = \mathrm{Adv}_{5.q.3}$, for $q=1,\cdots, q_E$.
	\end{proof}

\begin{theorem}\label{theorem:ver}
	The decentralized verifiable multi-client functional encryption for inner product supports 
	verifiability in the random oracle, as Definition \ref{definition:ver}.
	More precisely, we have
	\begin{align*}
		& \mathrm{Adv}_{\text{CC-DVFE}}^{\text{verif}}\left(t, q_C, q_F\right) \le 
		\epsilon_1 + \\
		&  q_C \cdot \max \left\{
		\mathrm{Adv}_{\prod_{Encrypt}}^{\text{snd}}(t),
		q_F \cdot \mathrm{Adv}_{\prod_{\text{DKeyGenShare}}}^{\text{snd}}(t)
		\right\}
	\end{align*}
	where
	\begin{itemize}
		\item[-] $\mathrm{Adv}_{\text{CC-DVFE}}^{\text{verif}}\left(t, q_C, q_F\right)$ 
				is the best advantage of any PPT adversary running in time t against 
				the verifiability game in Definition \ref{definition:ver} with $q_C$ corruption queries and 
				$q_F$ functions for the finalization phase
		\item[-] $\mathrm{Adv}_{\prod_{Encrypt}}^{\text{snd}}(t)$ is the best advantage of any PPT adversary running in time $t$
			against the soundness of $\prod_{\text{Encrypt}}$.
		\item[-] $\mathrm{Adv}_{\prod_{\text{DKeyGenShare}}}^{\text{snd}}(t)$ is the best advantage of any PPT adversary running in time $t$
			against the soundness of $\prod_{\text{DKeyGenShare}}$.
	\end{itemize}
\end{theorem}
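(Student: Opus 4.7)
The plan is to construct a reduction that transforms any adversary $\mathcal{A}$ winning the verifiability game into an attacker against the soundness of either $\prod_{Encrypt}$ or $\prod_{DKeyGenShare}$, absorbing a negligible statistical slack into the term $\epsilon_1$. First, I would unpack the winning condition for the verifiability game: $\mathcal{A}$ must output ciphertexts $(C_{\ell,i})_{i \in [n]}$ with proofs $\pi_{Encrypt,i}$ all accepted by \texttt{VerifyCT}, together with, for some queried function $y$ among the $q_F$ candidates, decryption key shares $(dk_{i,y},\pi_{DKeyGenShare,i})_{i \in [n]}$ all accepted by \texttt{VerifyDK}, such that \texttt{Decrypt} returns a value different from the unique inner product dictated by the committed plaintexts.

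Next, I would exploit that each $\mathrm{com}_i = v^{s_i}$ appears inside \emph{both} relations $\mathcal{R}_{Encrypt}$ and $\mathcal{R}_{DKeyGenShare}$. Since $v = \mathcal{H}_1(\ell_D)$ is a generator tuple in $\mathbb{G}_1^2$, the commitment is perfectly binding and fixes a unique $s_i$ per client. Applying the soundness extractor of $\prod_{Encrypt}$ to each of the at most $q_C$ corrupted slots, I would extract witnesses $(s_i, x_i)$ consistent with $\mathrm{com}_i$ and with $V_i = \prod_{j \in [m]} C_{\ell,i,j}$, which determines a unique plaintext $x_i$ per client up to a total error of $q_C \cdot \mathrm{Adv}^{\mathrm{snd}}_{\prod_{Encrypt}}(t)$. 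Similarly, the soundness extractor for $\prod_{DKeyGenShare}$, invoked across the $q_C$ slots and the up-to-$q_F$ functions for which key shares are issued, extracts witnesses $(\hat{k}_i, t_i, s_i)$ consistent with $T_i$, $d_i$, $\mathrm{com}_i$, and $dk_{i,y}$, at a cost of $q_C \cdot q_F \cdot \mathrm{Adv}^{\mathrm{snd}}_{\prod_{DKeyGenShare}}(t)$. Combining these two bounds with a $\max$ (since for each corrupted slot we pay either one ciphertext-soundness term or one block of $q_F$ key-share-soundness terms, whichever dominates) yields the multiplicative $q_C$ factor in the theorem statement.

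Once the extractions succeed, I would invoke the correctness computation already carried out in the construction: the binding of $\mathrm{com}_i$ forces the $s_i$ extracted from $\pi_{Encrypt,i}$ and $\pi_{DKeyGenShare,i}$ to coincide, so the combined key satisfies $h^{dk_y} = (h^{\sum_i s_{i,b} y_i})_{b \in [2]}$, and the pairing telescoping identity shown in the correctness derivation ensures that \texttt{Decrypt} outputs exactly $\{\langle x_{1:n,j}, y\rangle\}_{j \in [m]}$, contradicting $\mathcal{A}$'s success condition. The residual term $\epsilon_1$ absorbs the statistical defect from (i) the binding of $\mathrm{com}_i$ in the presence of the extractor, (ii) collisions among the random oracles $\mathcal{H}_1, \mathcal{H}_2, \mathcal{H}'_1, \mathcal{H}$ that could let $\mathcal{A}$ separate a valid statement from an invalid one, and (iii) the negligible probability that \texttt{DKeyComb}'s discrete-log recovery in the easy subgroup $F$ fails.

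The main obstacle I anticipate is the consistency argument tying together the two independent extractors: a dishonest client could attempt to use different $s_i$ values in the encryption proof and in the key-share proof, which would completely break the correctness chain. The shared commitment $\mathrm{com}_i$ is the only link between the two relations, and I would need to argue rigorously that any mismatch forces a double-opening of $\mathrm{com}_i$ under a fixed random-oracle value of $v$, which either violates the binding property of the Pedersen-style commitment (contributing to $\epsilon_1$) or contradicts the soundness of one of the two proof systems. A secondary subtlety is that the key-share extractor must be invoked for every one of the $q_F$ queried functions simultaneously against the \emph{same} $s_i$; handling this via a hybrid over functions, rather than a single extraction, is what introduces the $q_F$ factor inside the $\max$.
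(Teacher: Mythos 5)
Your handling of the second winning condition of the verifiability game is essentially the paper's argument: if all proofs in the adversary's transcript verify, the soundness of $\prod_{Encrypt}$ and $\prod_{DKeyGenShare}$, together with the perfectly binding commitment $\mathrm{com}_i = v^{s_i}$ (with $v = \mathcal{H}_1(\ell_D)$ modeled as a random oracle) forcing the same $s_i$ in both relations, implies via the correctness derivation that \texttt{Decrypt} returns the genuine inner products, so the adversary can only succeed by producing a ciphertext or key share that passes verification while violating its relation; the reduction then guesses one of the $q_C$ corrupted indices (and, for key shares, one of the $q_F$ functions), which is exactly where the factor $q_C \cdot \max\{\mathrm{Adv}^{\mathrm{snd}}_{\prod_{Encrypt}}(t),\, q_F \cdot \mathrm{Adv}^{\mathrm{snd}}_{\prod_{DKeyGenShare}}(t)\}$ comes from in the paper as well.

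However, there is a genuine gap: you never address the \emph{first} winning condition of Definition \ref{definition:ver}. The game defines $\mathrm{Adv}^{\mathrm{ver}}_{\mathrm{DVFE}} = |\Pr(\mathrm{E1}) + \Pr(\mathrm{E2})|$, where E1 is the event that verification rejects ($\texttt{VerifyCT}=0$ or $\texttt{VerifyDK}=0$) and the accused set $\mathcal{CC}_{CT} \cup \mathcal{CC}_{DK}$ contains an \emph{honest} client, i.e., a corrupted client manages to frame an honest one. In the theorem statement, $\epsilon_1$ is precisely the bound on $\Pr(\mathrm{E1})$; the paper bounds it by observing that the only elements of other clients entering an honest client's statements are the broadcast values $\{T_k\}_{k\in[n]}$, which do not appear in $\mathcal{R}_{Encrypt}$ at all and whose membership in the class group $\hat{G}$ is explicitly checked in $\prod_{DKeyGenShare}$, so a corrupted client cannot cause an honest client's proof to fail except by broadcasting a malformed encoding that is itself publicly detectable. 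Your proposal instead reinterprets $\epsilon_1$ as a statistical slack term (commitment binding, random-oracle collisions, failure of the discrete logarithm recovery in $F$), which neither matches the role of $\epsilon_1$ in the bound nor bounds the framing event; as written, the E1 case of the game is left entirely unproven, so the stated inequality does not follow from your argument alone. Adding the missing case analysis for E1 along the lines above would close the gap.
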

\begin{proof}
		If we suppose that there exists a PPT adversary $\mathcal{A}$ 
		that can win the verifiability game in 
		Definition \ref{definition:ver} 
		with a non-negligible probability. 
		As \textbf{Finalize}, it can be divided into the following two conditions
		\begin{itemize}
			\item If $\mathcal{A}$ wins the game by winning the first condition 
			with an advantage of $\epsilon_1$. In order to accuse an honest sender $\mathcal{C}_i$, 
			$\mathcal{A}$ has to broadcast some malicious share 
			that makes the proof of correct generation for ciphertext or functional key share
			of $\mathcal{C}_i$ invalid. 
			By the design of the scheme, the only used elements 
			that belongs to another client  are $\left\{T_k\right\}_{k \in [n]}$ 
			in the public keys, which are not used in the 
			relation $\mathcal{R}_{Encrypt}$. 
			As the only requirement is that $\{T_k\}_{k \in [n]}$ lie 
			in the class group $\hat{G}$, submitting malformed group encodings 
			can compromise the generation and verification processes of honest clients.
			However, this is a trivial attack and 
			can be excluded, as each $\left\{T_k\right\}_{k \in [n], k \ne i}$ can be 
			efficiently verified to be in the class group $\hat{G}$  in the 
			$\prod_{\text{DKeyGenShare}}$ and the public will already 
			know it is the corrupted client who broadcast a malicious share. 
			Thus, the $\epsilon_1$ is negligible.
			\item If $\mathcal{A}$ wins the game by winning the second condition. 
			In this case, we have
			\[\texttt{VerifyCT}\left((C_{\ell,i})_{i \in [n]},\pi_{CT, i}, vk_{CT}\right) = 1\]
			and 
			\small
			\[\texttt{VerifyDK}\left((dk_{y,i})_{i\in [n]}, \pi_{DK,i}, vk_{DK}, pp\right) = 1.\]
			\normalsize
			This means the transcript output by $\mathcal{A}$ satisfies 
			the relations  $\mathcal{R}_{Encrypt}$ and $\mathcal{R}_{DKeyGenShare}$.
			From $\mathcal{R}_{\text{DKeyGenShare}}$, the tuple 
			$( \mathrm{vk}_{\mathrm{CT}}, \mathrm{vk}_{\mathrm{DK}}, \mathrm{pk})$ 
			is generated by the \texttt{KeyGen} algorithm, where the secret keys are given by 
			$sk_i = (s_{i}, \hat{k}_i, t_{i})$, and the encryption keys are defined 
			as $ek_i = s_i$. For each $i \in [n]$ and for each inner product function $\boldsymbol{y}$, 
			the decryption key $dk_{y,i}$ is generated by the \texttt{DKeyGenShare} algorithm, 
			taking $\left(sk_i, \ell_{y}, pk\right)$ as input.
			From $\mathcal{R}_{\text{Encrypt}}$, the ciphertext $C_{\ell, i}$ and the encryption 
			proof $\pi_{\text{Encrypt}, i}$ are generated by the \texttt{Encrypt} algorithm, using as 
			input the message $\vec{x}_i \in \mathcal{M}$ and the encryption key $s'_i$ for each $i \in [n]$.
			We model the hash function $\mathcal{H}_1$ as a random oracle onto $\mathbb{G}^2_1$. 
			Then $\mathrm{com}_i$ is perfectly binding. From above, we have $s_i = s'_i$. By the proved 
			correctness of the scheme, the decryption process with input 
			$C_\ell = \left\{C_{\ell,i}\right\}_{i \in [n]}$ and 
			$h^{dk_{y}} = \text{DKeyComb}\left(\{dk_{y,i}\}_{i \in [n]}, \ell_{y},pk\right)$ 
			will output the inner product $\left\{ \langle \vec{x}^{\top}, y\rangle \right\}_{j \in [m]}$ 
			for all vectors $y$, then $\mathcal{A}$ fails in the game.
			So $\mathcal{A}$ must either constructs a malicious ciphertext or 
			a malicious functional key share.
			If $\mathcal{A}$ wins the game with a 
			non-negligible probability, there exists a simulator 
			$\mathcal{B}$ who can use $\mathcal{A}$ to 
			breaking the soundness $\prod_{Encrypt}$ or $\prod_{DKeyGenShare}$.
			It can be constructed as follows:
			\begin{itemize}
				\item[-] Assuming that $\mathcal{A}$ wins by 
				constructing a malicious ciphertext with the proof.
					Given $q_C$ corrupted clients, 
					$\mathcal{B}$ plays as a challenger in the game with $\mathcal{A}$, 
					after $\mathcal{A}$ finalized the game, $\mathcal{B}$ guesses an index $i$ 
					from the corrupted set $\mathcal{MC}_\mathcal{A}$
					and outputs the instance 
					$\left(\ell, C_{\ell,i}, \mathrm{com}_i, \pi_{i,\text{Encrypt}}\right)$ from $\mathcal{A}'$s 
					transcript. In the case $\mathcal{A}$ wins the game, 
					the probability that $\mathcal{B}$ breaks the soundness of 
					$\prod_{Encrypt}$ is  $\frac{1}{q_C}$.
				\item[-] Assuming that $\mathcal{A}$ wins by 
				constructing a malicious functional key share with the proof.
					Given $q_C$ corrupted clients and $q_F$ inner-product functions $y$ to be finalized, 
					$\mathcal{B}$ outputs one in $q_C \cdot q_F$ instances of 
					$\left((T_i)_{i\in[n]}, d_{i}, \mathrm{com}_i, dk_{i, y}\right)$ from $\mathcal{A}$, 
					which incurs a security loss of $q_C \cdot q_F$.
			\end{itemize}
			To finalize, we have
            \begin{equation*}
                \begin{aligned}
                &\mathrm{Adv}_{\text{CC-DVFE}}^{\text{verif}}\left(t, q_C, q_F\right) \le 
                \epsilon_1 + \\
                &q_C \cdot \max \left\{
                \mathrm{Adv}_{\prod_{Encrypt}}^{\text{snd}}(t),
                q_F \cdot \mathrm{Adv}_{\prod_{\text{DKeyGenShare}}}^{\text{snd}}(t)
                \right\}.
                \end{aligned}
            \end{equation*}
			As $\epsilon_1$, $\operatorname{Adv}_{\prod_{\text{Encrypt}}}^\text{snd}(t)$ and 
			$\operatorname{Adv}_{\prod_{\text{DKeyGenShare}}}^\text{snd}(t)$ are negligible, 
			and $q_C$ and $q_F$ are polynomially bounded, the proof is complete.
		\end{itemize}
	\end{proof}
	
The zero-knowledge proof algorithms related to \texttt{Encrypt} and \texttt{VerifyCT} 
are instantiated in Appendix \ref{zkp_enc}, 
while those related to \texttt{DKeyGenShare} and \texttt{VerifyDK} 
are instantiated in Appendix \ref{zkp_dkg}.
The above two zero-knowledge proofs are transformed into non-interacting 
zero-knowledge proofs using the Fiat-shamir transform \cite{Fiat-shamir}, 
which can be applied to the CC-DVFE scheme. 

\section{VFEFL DESIGN} 
\subsection{High-Level Description of VFEFL}
\begin{figure*}[t]
	\centering
	\includegraphics[width=0.95\textwidth]{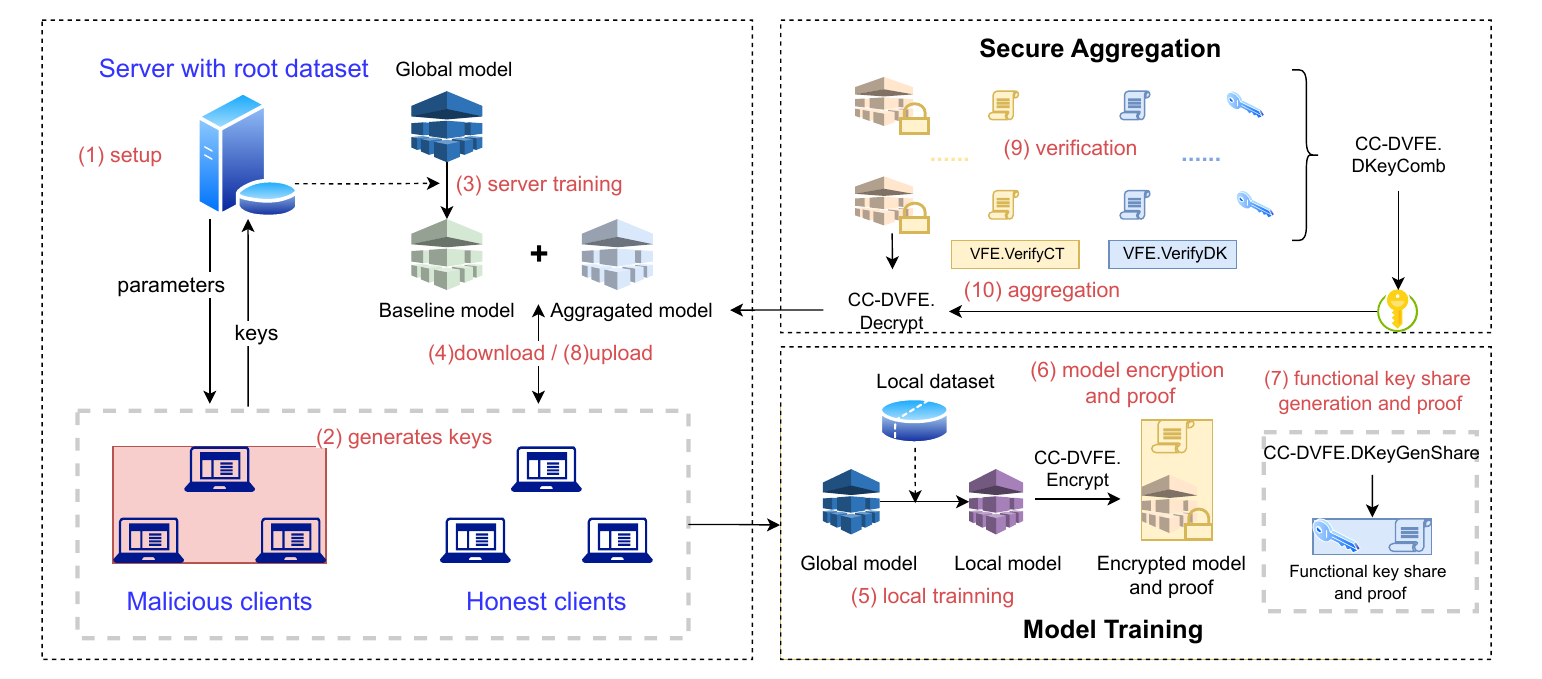}
	\caption{The workflow and main components of VFEFL}
	\label{Framework}
\end{figure*}
The overview of VFEFL is depicted in the Figure \ref{Framework}. VFEFL utilizes the 
CC-DVFE to achieve a privacy and Byzantine FL among Server $\mathcal{S}$ and 
multiple clients $\left\{\mathcal{C}_i\right\}_{i \in [n]}$. 
Specifically, VFEFL consists of three phases.
\textbf{1) Setup  Phase}: $\mathcal{S}$ initializes the the public parameters $pp$, 
the set of clients $\mathcal{CS}$, and the global initial model $W_0$, which
are sent to $\mathcal{CS}$ (Step (1)). 
Each client independently generates its keys, broadcasts the corresponding 
the public keys and verification keys (Step (2)). \textbf{2) Model Training Phase}: 
In the $t$-th training epoch, denoting the last global model as $W^{t-1}$,
$\mathcal{S}$ updates the baseline model $W^t_0$ based on $D_0$ (Step (3)) and 
broadcasts it. 
Each participant downloads the $W^t_0$ and $W^{t-1}$ (Step (4)),
trains a local model $W^t_i$ from $W^{t-1}$
based on the local dataset $D_i$ (Step (5)), encrypts it (Step (6)), and 
generates the functional key share (Step (7)) along with corresponding proofs, which are 
then sent to $\mathcal{S}$ (Step (8)).
\textbf{3) Secure Aggregation Phase}: $\mathcal{S}$ collects and verifies all 
encrypted models $\left(E_i\right)_{i \in \mathcal{CS}}$ and functional key shares 
with the proofs (Step (9)).
If all verifications pass, the process moves to the next step;
otherwise, the server $\mathcal{S}$ requires some clients to resend the ciphertext 
or key share based on the verification results and re-verifies them. 
$\mathcal{S}$ aggregates the authenticated functional key share, decrypts the ciphertexts, and 
normalizes the decryption results to get the final global update $W^t$ (Step (10)) and 
send it to the $\mathcal{CS}$.

\subsection{Setup  Phase}
The server $\mathcal{S}$ collects all participating clients to initialize the client 
set $\mathcal{CS}$ and randomly sets the initial model parameters $W_0$. Then, 
the server runs the \texttt{CC-DVFE.SetUp} algorithm to generate the public parameters $pp$ 
and broadcasts $\left(\mathcal{CS}, W_0, pp\right)$. Each client $\mathcal{C}_i$ independently 
executes \texttt{CC-DVFE.KeyGen}($pp$) to generate its keys pair
$\left(ek_i = s_i, sk_i = (s_i, \hat{k}_i, t_i), T_i \right)$,
broadcasts $T_i$,
and eventually generate the public key $pk = (T_i, d_i)_{i \in [n]}$ verification keys  $\left(vk_{CT,i}=\mathrm{com}_i, \ 
vk_{CT,i}=(T_i, d_i, \mathrm{com}_i)\right)$ according to $\mathcal{CS}$.
Each client sends the verification keys $vk_{CT,i}$ and $vk_{DK,i}$ to $\mathcal{S}$.

\subsection{Model Training Phase}
Model training consists of three parts: local training, model encryption and proof 
and functional key share generation and proof.
\begin{enumerate}
	\item \textbf{Local Training:} In the $t$-th training iteration, after retrieving the global 
		model $W^{t-1}$ from the server, each client $\mathcal{C}_i$ optimizes the local 
		model using its private dataset $D_i$ by minimizing the corresponding loss 
		function $\text{Loss}$ to obtain the updated model $W_i^t$
		$$
			W_i^t = \operatorname*{arg \min}_{W^t_i} \text{Loss}(W^t_i, D_i)
		$$ 
		For benign clients, the local model is updated using stochastic gradient descent 
		(SGD) for \( R_l \) iterations with a local learning rate \( lr \) as:
		$$
			w^r \gets w^{r-1} - lr \, \nabla \text{Loss}(D'_b; w)
		$$
		where $r=1,..,R_l$, $D'_b$ is a randomly sampled mini-batch from the 
		local dataset \( D_i \), $W_i^t=\vec{w}^{R_l}$.
		For malicious clients, they perhaps chooses a random vector as a model, 
		or perhaps crafts it so that the global model can deviate maximally from the 
		right direction.
		Similarly, $\mathcal{S}$ trains the baseline model $W^t_0$ for this epoch $t$ 
		based on the root dataset $D_0$ and broadcast $W^t_0$ to the $\mathcal{C}$.
	\item \textbf{Model Encryption and Proof:} To prevent leakage of local models \( W_i^t \)
	 	to the server $\mathcal{S}$, $\mathcal{C}_i$  encrypts and generates a proof of 
		it before uploading. 
		Specifically, for an honest client, $\mathcal{C}_i$ applies the 
		\texttt{CC-DVFE.Encrypt}($ek_i, W^t_i, W^t_0, \ell, \left(\ell_{Enc,j}\right)_{j \in [m]}, pp$) 
		algorithm for encryption and outputs $E_{\ell,i} $ and $\pi_{CT, i}$. Then 
		$\mathcal{C}_i$ sends the encrypted model $E_{\ell,i}$ and the ciphertext proof
		$\pi_{CT, i}$ to the $\mathcal{S}$.
		For malicious clients, they may not use the correct encryption key or the correct 
		encryption algorithm for encryption in order to disrupt the aggregation process.
	\item \textbf{Functional Key Share Generation and Proof:} To ensure that malicious clients generate 
		functional key shares honestly, we require them to provide corresponding 
		zero-knowledge proofs during this process. For an honest client, 
		$\mathcal{C}_i$ takes $y_i$ from the previous step and processes it 
		using the ReLU function, resulting in $y'_i = \text{ReLU}(y_i)$.
		$\mathcal{C}_i$ executes \texttt{CC-DVFE.DKeyGenShare}($sk_i, pk, y'_i, \ell_y , pp$) to 
		generate the corresponding decryption key share $dk_i$ and 
		the proof $\pi_{DK,i}$ for it.
		For a malicious client, $\mathcal{C}_i$ may try to use an invalid secret key or 
		provide a malformed decryption key share, aiming to compromise the 
		decryption result.
\end{enumerate}
\subsection{Secure Aggregation Phase}
	The secure aggregation phase consists of three stages:
	ciphertext verification , functional key share verification and aggregation.
	\begin{enumerate}
	\item \textbf{Ciphertext Verification:} 
		In this phase, the server $\mathcal{S}$  verifies 
		all encrypted models $\left\{E_{\ell,i}\right\}_{i \in \mathcal{C}}$ from clients  using 
		\texttt{CC-DVFE.VerifyCT}($\{C_{\ell,i}\}_{i \in \mathcal{C}},\pi_{CT, i}, vk_{CT}$). If all 
		ciphertexts are valid, the process continues. Otherwise, the server 
		identifies malicious clients $\mathcal{CC}_{CT}$, and 
		requires all the clients in $\mathcal{CC}_{CT}$ to regenerate the ciphertext.		
	\item \textbf{Functional Key Share Verification:} 
		In this phase, the server $\mathcal{S}$ collects and verifies all functional key shares from clients
		by executing \texttt{CC-DVFE.VerifyDK}
		($\{dk_{y,i}\}_{i\in \mathcal{C}}, \pi_{DK,i},vk_{DK}, pp$).
		If all ciphertexts pass the verification, the process proceeds to the next 
		phase. Otherwise, the set of malicious clients $\mathcal{CC}_{DK}$
		whose key shares failed verification is identified and $\mathcal{S}$ 
		requires all clients in $\mathcal{CC}_{DK}$ to regenerate the functional key share.
	\item \textbf{Aggregation:} 
		In this phase, the server $\mathcal{S}$ performs $\texttt{CC-DVFE.DKeyComb}
		\left(\{dk_{y,i}\}_{i \in \mathcal{C}}, \ell_{y}, pk\right)$
		to aggregate the verified functional keys shares and obtain the decryption key $h^{dk_{y}}$, 
		and \texttt{CC-DVFE.Decrypt}($\left\{E_{\ell,i}\right\}_{i \in \mathcal{C}} , h^{dk_{y}}, y$) 
		is executed to decrypt the verified models to gain $W^*$.
		Then $\mathcal{S}$ further normalizes 
		$W^*$ to obtain $W^t$ according to the baseline model $W^t_0$. 
		Finally, $\mathcal{S}$  obtains the global 
		model $W^t$ for the epoch $t$ and send it to the $\mathcal{CS}$ for the next
		training epoch utill the global model converges.
	\end{enumerate}

	VFEFL repeats the model training and secure aggregation phases until the global model 
	converges, resulting in the final global model.

\section{ANALYSIS} \label{analysis}
\subsection{Privacy}\label{analysis_privacy}
\begin{theorem}
	According to the security model of VFEFL, we say this VFEFL is secure if any view 
	$\textbf{REAL}_{\mathcal{A}}^{VFEFL}$  of $\mathcal{A}$ in the real world is computationally 
	indistinguishable from \(\textbf{IDEAL}_{\mathcal{A*}}^{VFEFL}\) of  $\mathcal{A}^*$
	in the ideal world, namely
	$$
		\textbf{REAL}_{\mathcal{A}}^{\mathrm{VFEFL}}
				(\lambda, \hat{x}_\mathcal{S},\hat{x}_\mathcal{C}, \hat{y}) 
		\stackrel{c}{\approx} 
		\textbf{IDEAL}_{\mathcal{A}^*}^{f_{\mathrm{VFEFL}}}
				(\lambda, \hat{x}_\mathcal{S},\hat{x}_\mathcal{C}, \hat{y}),
	$$
	where $\hat{y}$ denotes intermediate results, 
	$\hat{x}_\mathcal{S},\hat{x}_\mathcal{C}$ denote the 
	inputs of $\mathcal{S}$ and $\mathcal{C}$.
\end{theorem}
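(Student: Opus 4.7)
The plan is to establish computational indistinguishability via the simulation paradigm, reducing security of the VFEFL protocol to the static-IND security of the underlying DVFE scheme (Theorem~\ref{theorem:ind}), its verifiability (Theorem~\ref{theorem:ver}), and the zero-knowledge property of $\prod_{Encrypt}$ and $\prod_{DKeyGenShare}$. We split the analysis into two corruption patterns allowed by the threat model: (i) the adversary $\mathcal{A}$ controls the honest-but-curious server $\mathcal{S}$ (with arbitrarily many corrupted clients, provided at least two clients remain honest); (ii) $\mathcal{A}$ controls only a strict subset $\mathcal{CC}\subsetneq\mathcal{C}$ of clients. In each case we exhibit a PPT simulator $\mathcal{A}^*$ that, given only the inputs of corrupted parties and the output delivered by $f_{\mathrm{VFEFL}}$, produces a transcript indistinguishable from $\text{REAL}_{\mathcal{A}}^{\mathrm{VFEFL}}$.

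For case (i), $\mathcal{A}^*$ plays the role of all honest clients towards $\mathcal{A}$. It runs \texttt{DVFE.SetUp} and \texttt{DVFE.KeyGen} honestly for the honest clients, broadcasts the corresponding public and verification keys, and in each training epoch $t$ it receives from $f_{\mathrm{VFEFL}}$ only the aggregated quantities $\{\langle x_{1:n,j},y\rangle\}_{j\in[m]}$ that the real server would eventually recover (equivalently, $W^*$ and hence $W^t$). It then fabricates ciphertexts $C_{\ell,i}$ for each honest $i$ as encryptions of an arbitrary fixed dummy vector, and produces the corresponding encryption proofs $\pi_{Encrypt,i}$ by invoking the zero-knowledge simulator $\mathrm{Sim}_{Encrypt}$ on the fabricated ciphertexts. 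Decryption-key shares $dk_{i,y}$ are generated for the honest clients so that, when combined with the shares that $\mathcal{A}$ produces for corrupted clients, the resulting $h^{dk_{y}}$ decrypts to exactly the values revealed by $f_{\mathrm{VFEFL}}$; this is possible because one honest share (the ``last honest index'' in the proof of Theorem~\ref{theorem:ind}) can be programmed using the ideal output, while the remaining honest shares are simulated as in game $G_2$ of that proof. The associated proofs $\pi_{DKeyGenShare,i}$ are produced by $\mathrm{Sim}_{DKeyGenShare}$.

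For case (ii), the simulator is simpler: $\mathcal{A}^*$ simulates the server and the honest clients honestly on dummy local models, again replacing real proofs by simulated proofs and using $f_{\mathrm{VFEFL}}$ to learn any output eventually delivered to $\mathcal{A}$. In both cases, indistinguishability is shown by a hybrid argument: $H_0$ is the real execution; $H_1$ replaces every $\pi_{Encrypt,\cdot}$ and $\pi_{DKeyGenShare,\cdot}$ produced by honest parties with simulated proofs, with gap bounded by $q_E\mathrm{Adv}^{\mathrm{zk}}_{\prod_{Encrypt}}+q_K\mathrm{Adv}^{\mathrm{zk}}_{\prod_{DKeyGenShare}}$; $H_2$ replaces honest ciphertexts by encryptions of dummy vectors and adjusts one honest key share so that the aggregate still reveals only the inner products delivered by $f_{\mathrm{VFEFL}}$, with gap bounded by $\mathrm{Adv}^{\mathrm{sta\text{-}ind}}_{\mathrm{DVFE}}$ as analyzed in Theorem~\ref{theorem:ind}; $H_2$ is syntactically equivalent to the ideal execution. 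Normalization by $\|W^t_0\|/\|W^*\|$ is performed on the decrypted plaintext and so adds no additional leakage beyond what $f_{\mathrm{VFEFL}}$ reveals.

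The main obstacle is consistency of the simulated key shares with the programmed aggregate output: the simulator must be able to open the combined functional decryption key to a value that yields exactly the inner products specified by $f_{\mathrm{VFEFL}}$, while the individual shares remain indistinguishable from those generated by an honest \texttt{DKeyGenShare}. This is precisely the hurdle overcome in game $G_2$ of Theorem~\ref{theorem:ind} by reducing to the IND-security of the LDSUM scheme, and we reuse that reduction verbatim, together with the verifiability guarantee of Theorem~\ref{theorem:ver} to rule out the possibility that a malicious client forces the honest server onto a transcript the simulator cannot continue. A secondary subtlety is the programming of the random oracles $\mathcal{H}_1,\mathcal{H}_1',\mathcal{H}_2,\mathcal{H}$; the simulator will program these lazily in the same manner as games $G_3$--$G_5$ of the proof of Theorem~\ref{theorem:ind}, which is sound as long as the adversary makes only polynomially many queries.
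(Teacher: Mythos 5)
Your overall strategy -- a simulation-based hybrid argument that swaps honest proofs for ZK-simulated ones and honest ciphertexts for dummy encryptions, charging the gaps to $\mathrm{Adv}^{\mathrm{zk}}_{\prod_{Encrypt}}$, $\mathrm{Adv}^{\mathrm{zk}}_{\prod_{DKeyGenShare}}$ and $\mathrm{Adv}^{\mathrm{sta\text{-}ind}}_{\mathrm{DVFE}}$ -- is the same route the paper takes (its Hyb1--Hyb4 replace the real local models by random vectors $\Theta^t_i$, invoke static-IND security of DVFE, and argue the verification and aggregation steps add nothing). Your treatment of the corruption cases and of programming the honest key shares via the game-$G_2$/LDSUM argument is in fact more explicit than the paper's.

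However, there is a concrete gap in your reduction step $H_1\to H_2$. You let the simulator receive \emph{only} the aggregated inner products from $f_{\mathrm{VFEFL}}$ and encrypt an \emph{arbitrary} fixed dummy vector for each honest client. This does not go through: (a) the per-client weight $y_i=\mathrm{ReLU}\bigl(\langle W^t_i,W^t_0\rangle/\langle W^t_i,W^t_i\rangle\bigr)$ is necessarily public in the real protocol -- it appears in the statement of $\mathcal{R}_{Encrypt}$ and is the function vector the server uses in \texttt{DKeyComb}/\texttt{Decrypt} -- so the simulator must be given the $y_i$'s as part of the leakage (this is exactly why the theorem statement includes $\hat{y}$, ``the $y$ and intermediate results,'' in both views, and why the paper's Hyb2 explicitly argues that revealing $y_i$ is a tolerable leakage akin to $\|W^t_i\|$ and a cosine similarity); and (b) the static-IND game of Definition~\ref{definition:ind} declares the adversary's output random unless every challenge pair $(X^0_i,X^1_i)$ for an honest client satisfies $\frac{\langle X^0_i,x_0\rangle}{\langle X^0_i,X^0_i\rangle}=\frac{\langle X^1_i,x_0\rangle}{\langle X^1_i,X^1_i\rangle}$ and the queried function vector preserves the inner products. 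An ``arbitrary fixed dummy vector'' violates these finalize conditions, so the hybrid gap cannot be bounded by $\mathrm{Adv}^{\mathrm{sta\text{-}ind}}_{\mathrm{DVFE}}$ as claimed. The fix is the one the paper implicitly uses: give the simulator the $y_i$'s (and the aggregate) from $\hat{y}$, choose the dummy vectors $\Theta^t_i$ subject to the same weight $y_i$ and the same aggregate, and then invoke static-IND security; with that adjustment your argument matches the paper's proof.
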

\begin{proof}
	To prove the security of VFEFL, it suffices to demonstrate the confidentiality 
	of the privacy-preserving defense strategy, as it is the only component involving 
	the computation of private data of clients, for the unauthorized entities $\mathcal{S}^*$
	(the server or eavesdrops).	For the curious entities $\mathcal{S}^*$, 
	\( \textbf{REAL}^{\mathrm{VFEFL}}_\mathcal{A} \) contains 
	intermediate parameters and encrypted local models \( \{ E^t_i \}_{i=[n]} \)
	with corresponding proofs $\{\pi_{CT, i}\}_{i=[n]}$, collected during the execution of the VFEFL protocols.
	Additionally, we construct a PPT simulator $\mathcal{A}^*$ to execute 
	\( f_{\mathrm{VFEFL}} \), 
	simulating each step of the privacy-preserving defensive aggregation strategy.
	The detailed proof is provided below.

	\textbf{Hyb1.} We initialize a set of system parameters based on distributions 
	indistinguishable from those in \( \textbf{REAL}^{\mathrm{VFEFL}}_\mathcal{A} \) 
	during the actual protocol execution, including public parameters $pp$, 
	initial models $W_0$, and the keys.

	\textbf{Hyb2.} In this hybrid, we alter the behavior of the simulated client 
	\( \mathcal{C}^*_i\), 
	having it use the selected random vector
	 \( \Theta^t_i \) as its local model \( W^t_i \). As only the 
	original contents of ciphertexts have changed, the static-IND security property of VFEFL 
	guarantees that $\mathcal{S}^*$ cannot distinguish the view of $\Theta^t_i$ 
	from the view of original \( W^t_i \). And in $\pi_{Encrypt, i}$, 
	the $y_i$ also be sent to the server. 
	Existing reconstruction attacks \cite{xu2022agic} cannot recover the model vector with limited 
	information of $\|W^t_i\|$. $y_i$ can be viewed as a variant of $\|W^t_i\|$ and
	$cosine(W^t_i,W^t_0)$ (which is also revealed in \cite{BSRFL})
	when baseline models $W^t_0$ are known. Thus, It is a negligible security threat 
	to $W^t_i$.

	\textbf{Hyb3.} In this hybrid, the server uses zero-knowledge $\pi_{Encrypt}$ of 
	encrypted random vector $\Theta^t_i$ instead of blinded models \( W^t_i \) to 
	verify the ciphertexts to obtain the result of $1$ or $0$, 
	without knowing the further information of inputs. Hence, this hybrid 
	is indistinguishable from the previous one.

	\textbf{Hyb4.} In this hybrid, the decryption key is computed by \texttt{CC-DVFE.DKeyComb}
		and the aggregated model $W^{*}$(before normalization) 
		is computed by the \texttt{CC-DVFE.Decrypt} algorithm. $\mathcal{S}$ can 
		obtain $\Theta^*= \sum_{i \in \mathcal{C}}\left(y_i \Theta^{t}_i\right)$.
		In addition, the local model $\Theta^{t}_i$ is encrypted while the randomness 
		of these functional key shares is still preserved. This randomness ensures that 
		the data is not compromised during the aggregation and decryption process. 
		The probability of $\mathcal{S}$ guessing the specific $\Theta^t_i$ correctly is 
		negligible. Hence, this hybrid is indistinguishable from the previous one.
	
	In \textbf{Hyb4}, ${\mathcal{A}^*}$ holds the view $\textbf{IDEAL}_{\mathcal{A}^*}^{f_{\mathrm{VFEFL}}}$,
	which is converted from \( \textbf{REAL}^{\mathrm{VFEFL}}_\mathcal{A} \) and 
	is indistinguishable from it.
	Thus, the above analysis demonstrates that the output of 
	\( \textbf{IDEAL}_{\mathcal{A}^*}^{f_{\mathrm{VFEFL}}} \) is computationally indistinguishable 
	from the output of \( \textbf{REAL}_{\mathcal{A}}^{\mathrm{VFEFL}} \). 
	Therefore, we conclude the proof of the privacy of VFEFL.
\end{proof}

\subsection{Robustness}
We provide provable guarantees on the Byzantine robustness of the proposed defensive 
aggregation. We show that under the following assumptions, the difference between the global 
model $W^t$ learned under attack by the aggregation condition and the optimal global model 
$W^{opt}$ is bounded. Next, we first describe the needed assumptions and then describe 
our theoretical results.

\textbf{Assumption 1. }
	The expected loss function \( F(w) \) is \( \mu \)-strongly convex and differentiable 
	over the space \( \Theta \) with an \( L \)-Lipschitz continuous gradient. Formally, 
	for any \( w, \hat{w} \in \Theta \), the following conditions hold:
	\begin{align*}
		&F(w) \geq F(\hat{w}) + \langle\nabla F(\hat{w}), (w - \hat{w})\rangle + \frac{\mu}{2} \|w - \hat{w}\|^2. \\
		&\|\nabla F(w) - \nabla F(\hat{w})\| \leq L \|w - \hat{w}\|.
	\end{align*}

\textbf{Assumption 2. }
	The local training dataset \( D_i \) (for \( i = 1, 2, \dots, n \)) of each client 
	and the root dataset \( D_0 \) are independently sampled from the distribution 
	\( \mathcal{X} \).

\begin{theorem}\label{theorem:robust}
	For any number of malicious clients, the difference between the 
	global model $W^t$ learned by VFEFL and the optimal global model \( W^{opt}\) 
	(in the absence of attacks) is bounded. 
	\begin{align*}
		& \| W^t-W^{opt} \| \le \\
		& 3\left(\sqrt{(1 - 2\eta \mu+\eta^2 L^2)}\right)^{lr} \|W_0 - W^{opt}\| +2\|W^{opt}\|
	\end{align*}
	where $\eta$ is the local learning rate
	and $W_0$ is the initial model.
	When $\|\sqrt{(1 - 2\eta \mu+\eta^2 L^2)}\| < 1$, we have $\operatorname{Lim}_{t \to \infty} | W^t-W^{opt} \| \le 2\|W^{opt}\|$.
\end{theorem}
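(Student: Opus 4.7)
\medskip

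\noindent\emph{Proof plan.} The plan is to decompose the error via the triangle inequality with the baseline model $W_0^t$ as a pivot, exploiting the fact that the final normalization step in the aggregation rule forces $\|W^t\| = \|W_0^t\|$, and then to invoke Assumption 1 to control how close $W_0^t$ sits to $W^{opt}$ after local optimization on the clean root dataset $D_0$.

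\medskip

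\noindent\textbf{Step 1 (Reduction to the baseline model).} First I would write
\begin{equation*}
\|W^t - W^{opt}\| \;\le\; \|W^t - W_0^t\| + \|W_0^t - W^{opt}\|.
\end{equation*}
By construction $W^t = (\|W_0^t\|/\|W^*\|)\,W^*$, so $\|W^t\| = \|W_0^t\|$. Hence by the triangle inequality
\begin{equation*}
\|W^t - W_0^t\| \;\le\; \|W^t\| + \|W_0^t\| \;=\; 2\|W_0^t\| \;\le\; 2\|W_0^t - W^{opt}\| + 2\|W^{opt}\|.
\end{equation*}
Substituting back gives $\|W^t - W^{opt}\| \le 3\|W_0^t - W^{opt}\| + 2\|W^{opt}\|$. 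Notice that this step is where the robustness of the aggregation rule enters: regardless of what direction the adversarially influenced $W^*$ points in, the magnitude is clipped to $\|W_0^t\|$, which cuts off all amplification attacks.

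\medskip

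\noindent\textbf{Step 2 (Contraction for the baseline).} Next I would bound $\|W_0^t - W^{opt}\|$ using Assumption 1. Since $W_0^t$ is produced by running local gradient descent on the clean dataset $D_0$ starting from $W_0$ with step size $\eta$, a single gradient step satisfies the standard strongly-convex/smooth contraction
\begin{equation*}
\|w - \eta \nabla F(w) - W^{opt}\|^2 \;\le\; (1 - 2\eta\mu + \eta^2 L^2)\,\|w - W^{opt}\|^2,
\end{equation*}
obtained by expanding the square, using $\mu$-strong convexity to lower bound $\langle \nabla F(w), w - W^{opt}\rangle$ and $L$-smoothness to upper bound $\|\nabla F(w)\|$. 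Iterating this for the $lr$ local steps yields
\begin{equation*}
\|W_0^t - W^{opt}\| \;\le\; \bigl(\sqrt{1 - 2\eta\mu + \eta^2 L^2}\bigr)^{lr}\,\|W_0 - W^{opt}\|.
\end{equation*}

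\medskip

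\noindent\textbf{Step 3 (Combination and limit).} Combining Steps 1 and 2 gives
\begin{equation*}
\|W^t - W^{opt}\| \;\le\; 3\bigl(\sqrt{1 - 2\eta\mu + \eta^2 L^2}\bigr)^{lr}\,\|W_0 - W^{opt}\| + 2\|W^{opt}\|,
\end{equation*}
which is exactly the claimed bound. When the contraction factor is strictly less than one, the first term vanishes as the training proceeds, giving $\lim_{t \to \infty} \|W^t - W^{opt}\| \le 2\|W^{opt}\|$.

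\medskip

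\noindent\textbf{Main obstacle.} The conceptually interesting part is Step~1: the argument does not attempt to say anything direct about the (possibly adversarial) aggregate $W^*$, but instead relies on the normalization to the reference norm $\|W_0^t\|$ to absorb arbitrary misbehaviour in a single crude application of the triangle inequality. The routine analytic work sits in Step~2, where one must be careful that the per-step contraction constant $\sqrt{1-2\eta\mu+\eta^2L^2}$ matches the form appearing in the theorem, and that iterating over $lr$ local SGD steps (rather than full gradient steps) is justified under Assumption 2, which guarantees that $D_0$ is drawn from the same distribution as the clients' data, so no extra bias term is introduced. These caveats are the main points where the exposition needs to be tightened.
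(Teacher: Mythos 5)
Your proposal is correct and follows essentially the same route as the paper: the same triangle-inequality decomposition through the baseline model $W_0^t$ exploiting $\|W^t\|=\|W_0^t\|$ to get $3\|W_0^t-W^{opt}\|+2\|W^{opt}\|$, followed by the same strongly-convex/smooth per-step contraction (the paper's Lemma~\ref{lemma_w0}) iterated over the $lr$ gradient steps on $D_0$. Your only deviation is skipping the paper's intermediate bound $\|W^t-W_0^t\|\le\|W^t+W_0^t\|$ (which uses $\langle W^t,W_0^t\rangle>0$) in favor of the direct triangle inequality $\|W^t-W_0^t\|\le\|W^t\|+\|W_0^t\|$, which is harmless since both yield the same constant.
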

\begin{proof}\label{proof:robust}
	We denote by $\mathcal{HC}$ the set of clients whose $y$ is positive in the $t$th global iteration. 
	$W^{t}_i$ is the local model of client $i$ in the interaction $t$ and $W^t_0 $
	is the baseline model in the interaction $t$.
	The aggregation rule can be written as:
	$$
		W^*=\sum_{i \in \mathcal{HC}} \frac{\langle W^{t}_i, W^t_0\rangle}{\langle W^t_i, W^t_i\rangle} W^t_i
		\text{ and }
		W^t= \frac{\left\| W^t_0 \right\|}{\left\| W^* \right\|} W^*
	$$
	We have the following equations for the tth global iteration:
	\begin{align*}
		& \|W^t-W^{opt}\| \\
		& =\left\|W^t-W^t_0+W^t_0-W^{opt}\right\| \\
		& \leq\left\|W^t-W^t_0\right\|+\left\|W^t_0-W^{opt}\right\| \\
		& \stackrel{(a)}{\leq}\left\|W^t+W^t_0\right\|+\left\|W^t_0-W^{opt}\right\| \\
		& \leq\left\|W^t\right\|+\left\|W^t_0\right\|+\left\|W^t_0-W^{opt}\right\| \\
		& \stackrel{(b)}{=} 2\left\|W^t_0\right\|+\left\|W^t_0-W^{opt}\right\| \\
		& =2\left\|W^t_0-W^{opt}+W^{opt}\right\|+\left\|W^t_0-W^{opt}\right\| \\
		& \leq 2\left\|W^t_0-W^{opt}\right\|+2\|W^{opt}\|+\left\|W^t_0-W^{opt}\right\| \\
		& =3\left\|W^t_0-W^{opt}\right\|+2\|W^{opt}\| \\
		& \stackrel{(c)}{\leq} 3\left(\sqrt{(1 - 2\eta \mu+\eta^2 L^2)}\right)^{lr} \|W_0 - W^{opt}\| +2\|W^{opt}\|
	\end{align*}
	where $W_0^t$ is a base model trained by the server on a clean root dataset,
	\( lr \) is the total number of server-side training iterations over $t$ rounds of iteration,
	(a) is because $\langle W^t_i, W^t_0\rangle > 0$ for $i \in \mathcal{HC}$ then 
	$\langle W^t, W^t_0\rangle > 0$; (b) is because $\|W^t\|=\|W^t_0\|$;
	(c) is based on lemma \ref{lemma_w0}.
	Thus, we conclude the proof.
\end{proof}

\subsection{Verifiability, self-contained and fidelity}
	The verifiability of VFEFL is inherently dependent on the verifiability 
	of CC-DVFE Scheme. This dependency ensures two critical properties:
it prevents malicious clients from framing honest ones,
and ensures that authenticated ciphertexts and keys can be correctly decrypted.
	Thus, the clients successfully proves to the server that both the inputs and behavior 
	follow the protocol specification without compromising the privacy of the model.
	
	The proposed scheme is self-contained, as it can be deployed in the most basic 
	federated learning framework without two non-colluding servers and
	any additional trusted third parties. Furthermore, 
	it achieves high fidelity by ensuring VFEFL introduces no noise during aggregation, 
	thereby preserving the accuracy of the global model in the absence of attacks.
	This fidelity is validated through experiments in Section \ref{experiments}, 
	demonstrating the scheme's reliability in maintaining model performance.

\section{EXPERIMENTS}\label{experiments}
\subsection{Experimental Settings}
In this section, we evaluate the model accuracy, Byzantine robustness and efficiency 
of the federated learning architecture VFEFL.

\textbf{1)Implementation.}
We conduct our experiments on a local machine equipped with a 13th Gen Intel(R) Core(TM) i5-1340P CPU @ 1.90GHz and 16GB RAM.
And we approximate floating-point numbers to two 
decimal places and convert them into integers for cryptographic computations, a common 
practice in cryptographic schemes. 
The final decryption step involves computing a discrete logarithm, which introduces a 
substantial computational overhead. Specifically, the problem reduces to solving for 
\( x \) in the equation \( a = g^x \), where \( g \) is an element of a bilinear group 
and \( x \) is a relatively small integer. To efficiently tackle this, we utilize the 
baby-step giant-step algorithm \cite{BSGS}, which is known for its effectiveness in 
computing discrete logarithms in such contexts.
\textbf{2)Dataset and Models.}
On data set allocation, we set aside a portion of the dataset as a trusted root dataset 
$D_0$ for the server and then randomly divide the remaining data into clients. 

We use multiple datasets from different domains, 
and evaluate the accuracy of the final global model on the entire test set.
The publicly available datasets are the following.
	\begin{itemize}
		\item[-] \textbf{MNIST:} 
			The dataset consists of handwritten digits, organized into 10 classes, with 
			60,000 samples for training and 10,000 samples for testing. Each sample is 
			a grayscale image with a resolution of \(28 \times 28\) pixels. 
		\item[-] \textbf{Fashion-MNIST:} The dataset consists of grayscale images of fashion items, 
			divided into 10 categories. It provides 60,000 training samples and 
			10,000 test samples, with each image having dimensions of $28 \times 28$ pixels. 

		\item[-] \textbf{CIFAR-10:} This dataset consists of color images belonging to 
			10 distinct classes, with 50,000 training samples and 10,000 test samples. 
			Each image has a resolution of \(32 \times 32\) pixels in RGB format.  
	\end{itemize}

\textbf{3)Attack Scenario.}
We investigate four types of attacks: Gaussian attack, Scaling attacks,  
Adaptive attack and Label Flip attack. 
In our experiments, 20\% of clients are designated as malicious under 
untargeted poisoning attacks, while this proportion is reduced to 10\% 
for targeted poisoning attacks.
The specific details of each attack are outlined as follows:
	\begin{itemize}
		\item[-] \textbf{Gaussian Attack\cite{fang2020local}(GA):} 
			Malicious clients may make the server's global model 
			less accurate by uploading random models. 
			This attack randomly generates local models on compromised clients, 
			which is designed using the approach and parameters from \cite{fang2020local}.
		\item[-] \textbf{Scaling Attack(SA):} On the basis of the above Gaussian attack, 
			the malicious client enhances the attack by amplifying its own model 
			parameters by a scaling factor \(\lambda \gg 1\) before submitting them to the 
			server, thereby disproportionately influencing the global model. 
			Based on prior work, we set the scaling factor 
			\( \lambda = n \), where \( n \) represents the number of clients.

		\item[-] \textbf{Adaptive Attack\cite{FLtrust}(AA):} 
			For each aggregation rule, 
			we define a corresponding objective function that quantifies the gap between 
			the global model aggregated solely from honest clients and the global model 
			after the participation of malicious clients and the solution of the 
			optimization problem is used as a model for the attack, 
			enabling the malicious client models to maximize the attack's effectiveness. 
			We use the default parameter settings in 
			\cite{FLtrust} and the inverse direction of the average of all models for 
			each dimension as the initial malicious model.
		\item[-] \textbf{Label Flipping Attack(LFA):} Our work adopts the label-flipping attack 
			strategy outlined in \cite{FLtrust}. Specifically, in all experiments, 
			the malicious clients flip its label $l$ to $M - l - 1$, where 
			$M$ is the total number of labels and $l \in \{0, 1, \cdots , M-1\}$.
		\end{itemize}

\textbf{4)Evaluation Metrics.}
	In this experiment, we use the accuracy (AC) of the global model as the primary 
	performance metric. For the LF attack, in addition to model accuracy, 
	we also consider the attack success rate (ASR) as an evaluation criterion. 
	The attack success rate is defined as the probability that a sample 
	with label \( l \) is misclassified as \( M-l-1 \).	
\subsection{Experimental Results}
	VFEFL achieves the three defense goals:
	First, VFEFL achieves fidelity by maintaining testing error rates comparable to FedAvg
	under benign conditions. Second, it demonstrates robustness by effectively countering attacks from 
	malicious clients. Finally, its efficiency remains within acceptable bounds, ensuring feasibility 
	for real-world deployment.

	First, when there is no attack, our scheme's accuracy rates similar to FedAvg.
	As we can see in the Figure \ref{fig:no_attack}, the accuracy of our scheme is 
	largely in line with FedAvg across multiple datasets.
	For instance, on MNIST, the accuracy of both FedAvg and VFEFL is close to 99\%.
	Our aggregation rules take into account all client updates when there is no attack, 
	so the accuracy of the global model is guaranteed. In the comparison in the table, 
	we chose comparison schemes that are all capable of implementing fidelity protection.

	\begin{figure*}[!t]
  \centering
  \subfloat[MNIST]{
    \includegraphics[width=0.3\linewidth]{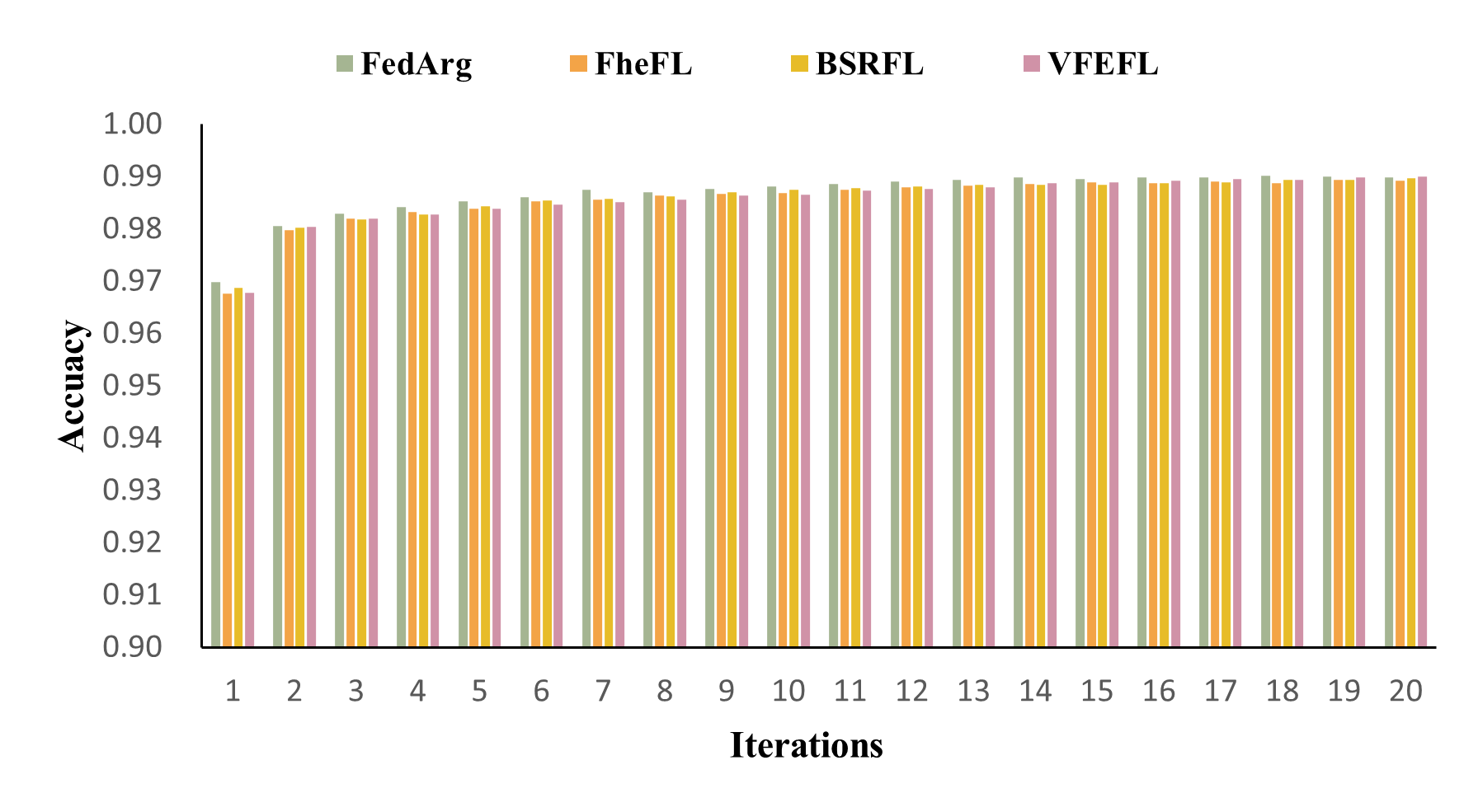}
    \label{fig:mnist}
  }
  \hfil
  \subfloat[Fashion-MNIST]{
    \includegraphics[width=0.3\linewidth]{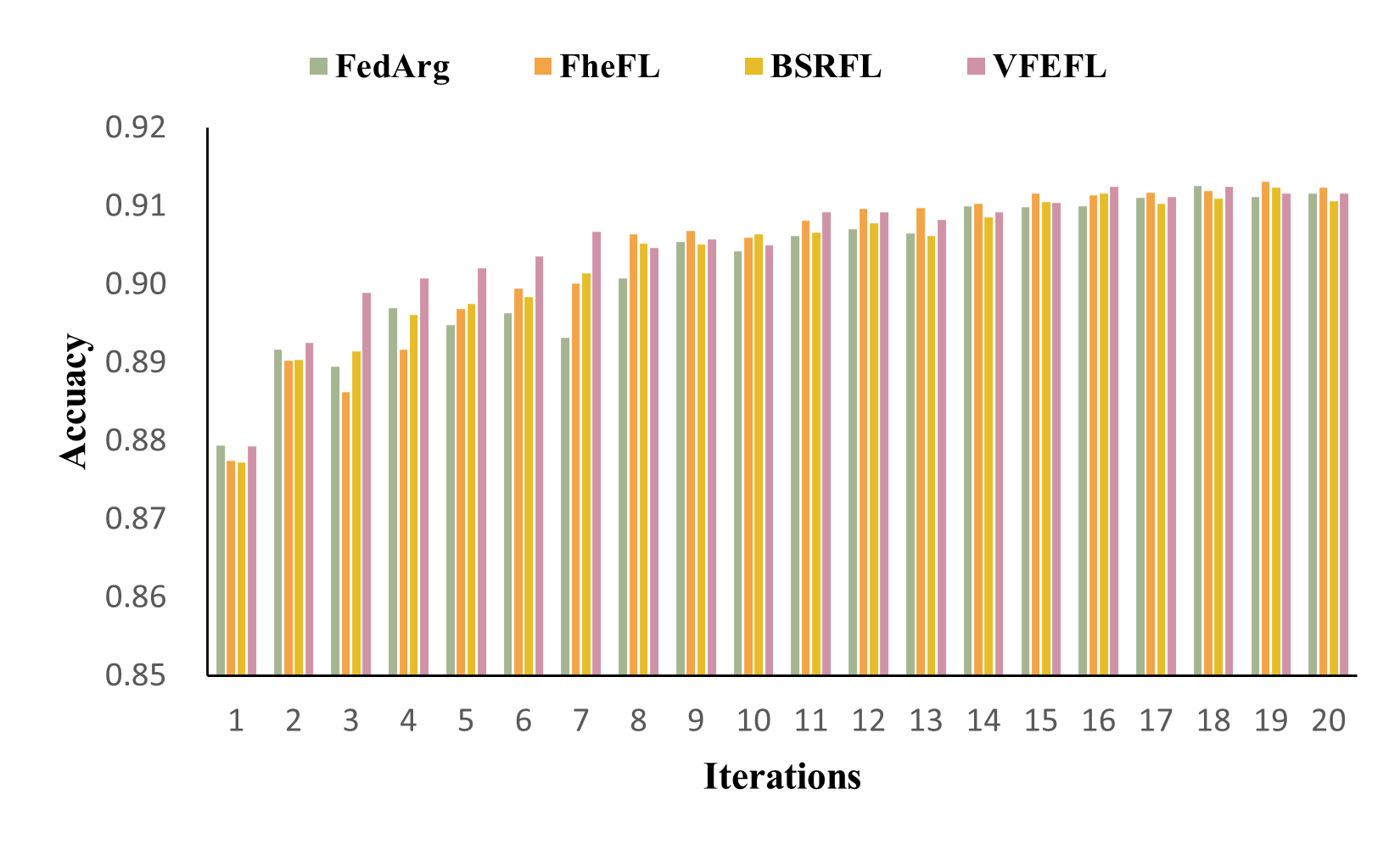}
    \label{fig:fashion}
  }
  \hfil
  \subfloat[CIFAR-10]{
    \includegraphics[width=0.3\linewidth]{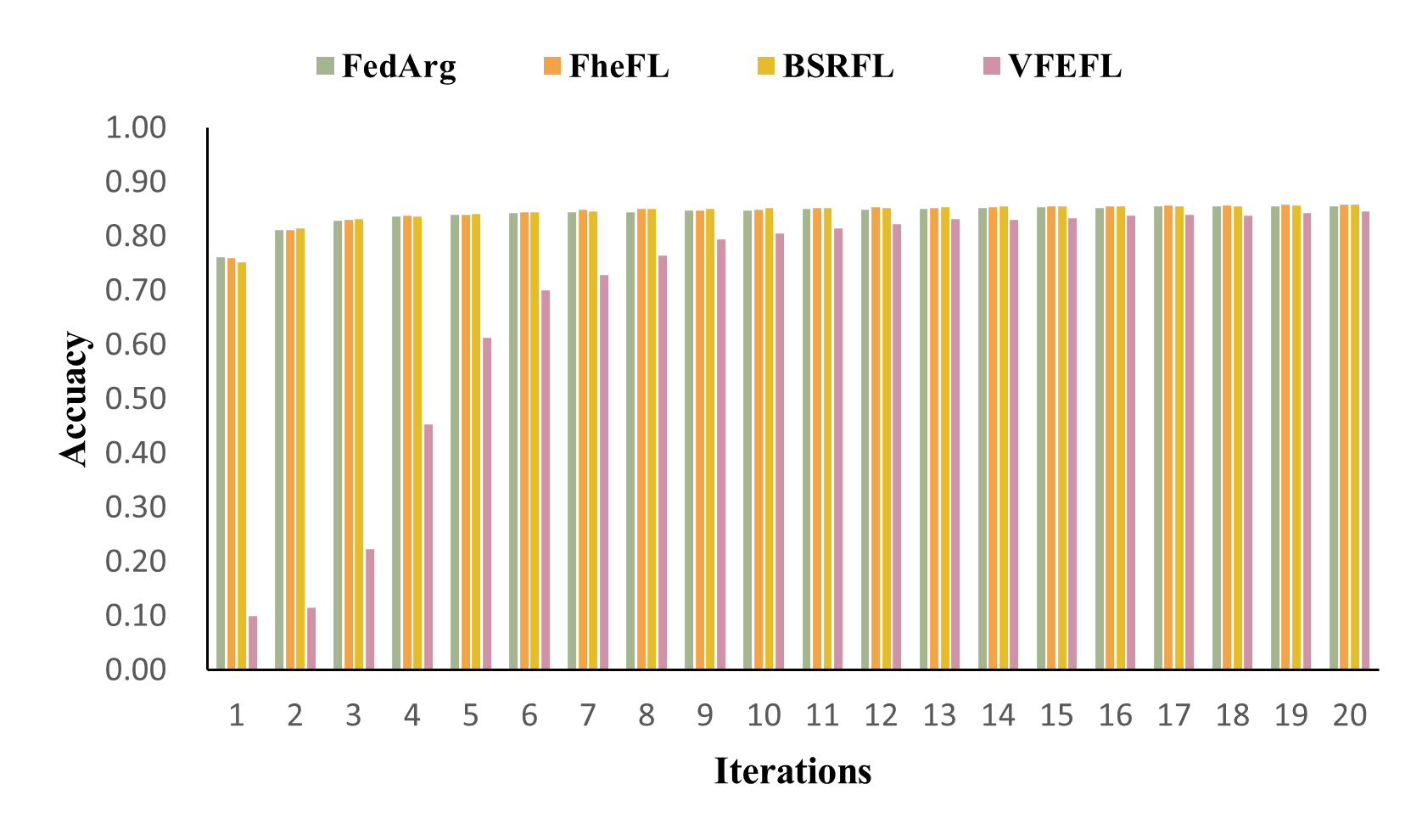}
    \label{fig:cifar}
  }
  \caption{Model accuracy on three datasets in the absence of attacks: (a) MNIST, (b) Fashion-MNIST, and (c) CIFAR-10.}
  \label{fig:no_attack}
\end{figure*}

	Secondly, the proposed scheme demonstrates strong robustness. 
	As illustrated in Figures \ref{fig:mnist_attacks}, \ref{fig:fashionmnist_attacks} and \ref{fig:cifar_attacks}, 
	across three different datasets, the global model trained under our scheme consistently maintains 
	high accuracy when subjected to Gaussian attack, Scaling attack, and Adaptive attack. 
	In the case of Gaussian attacks, the accuracy of the global model drops by less than 0.1\%. 
	The other three baseline methods also exhibit strong robustness against Gaussian attack,
	however, they fail to defend effectively against Scaling attacks. 
	This vulnerability arises from the fact that these methods neither constrain nor 
	validate the structural characteristics of local models submitted by malicious clients. 
	As a result, when adversaries employ models with sufficiently large magnitudes, 
	they can exert a high influence on the final aggregated model.
	In contrast, thanks to the robust 
	aggregation rule in our scheme, which effectively verifies and restricts the norm of 
	local models continues to 
	sustain high accuracy even in the presence of Scaling attacks. 
	In the presence of a Adaptive attack, both BSR-FL and VFEFL are successfully 
	aggregated into a final global model that achieves sufficiently high accuracy.
	Overall, these experimental results validate that our federated learning framework 
	possesses strong robustness against various types of untargeted poisoning attacks.

	\begin{figure*}[!t]
  \centering
  \subfloat[GA]{
    \includegraphics[width=0.3\linewidth]{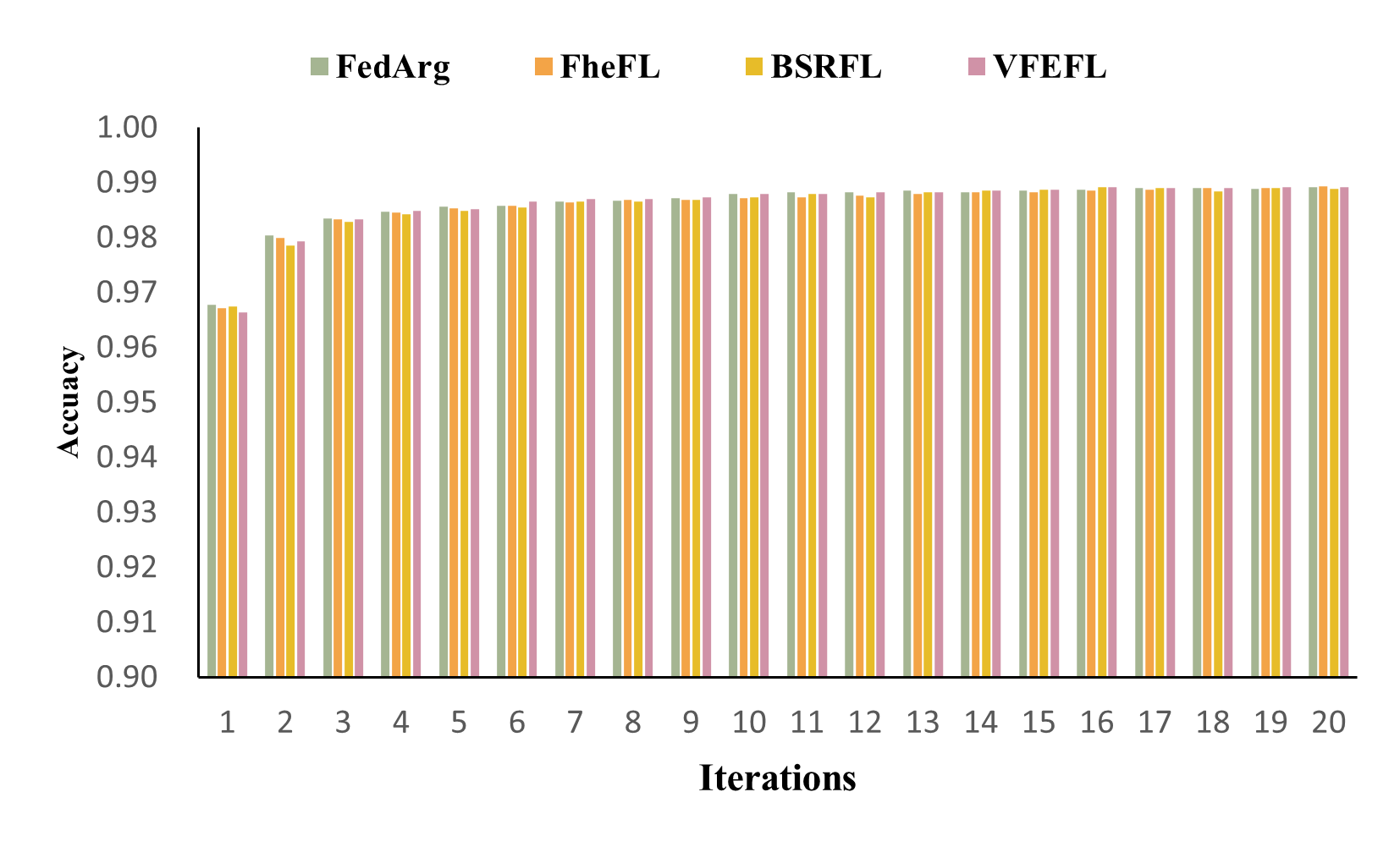}
    \label{fig:mnist_ga}
  }
  \hfil
  \subfloat[SA]{
    \includegraphics[width=0.3\linewidth]{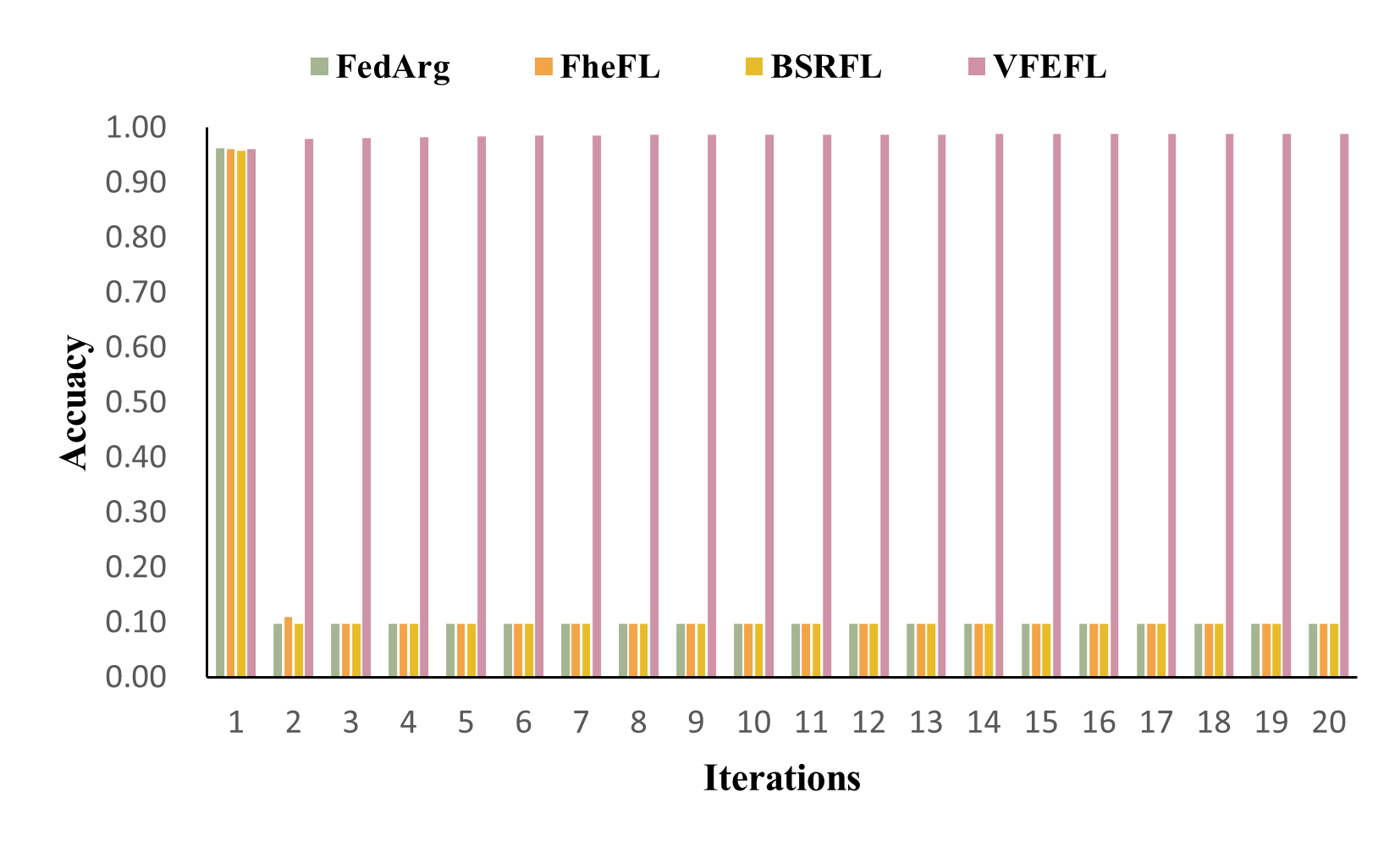}
    \label{fig:mnist_sa}
  }
  \hfil
  \subfloat[AA]{
    \includegraphics[width=0.3\linewidth]{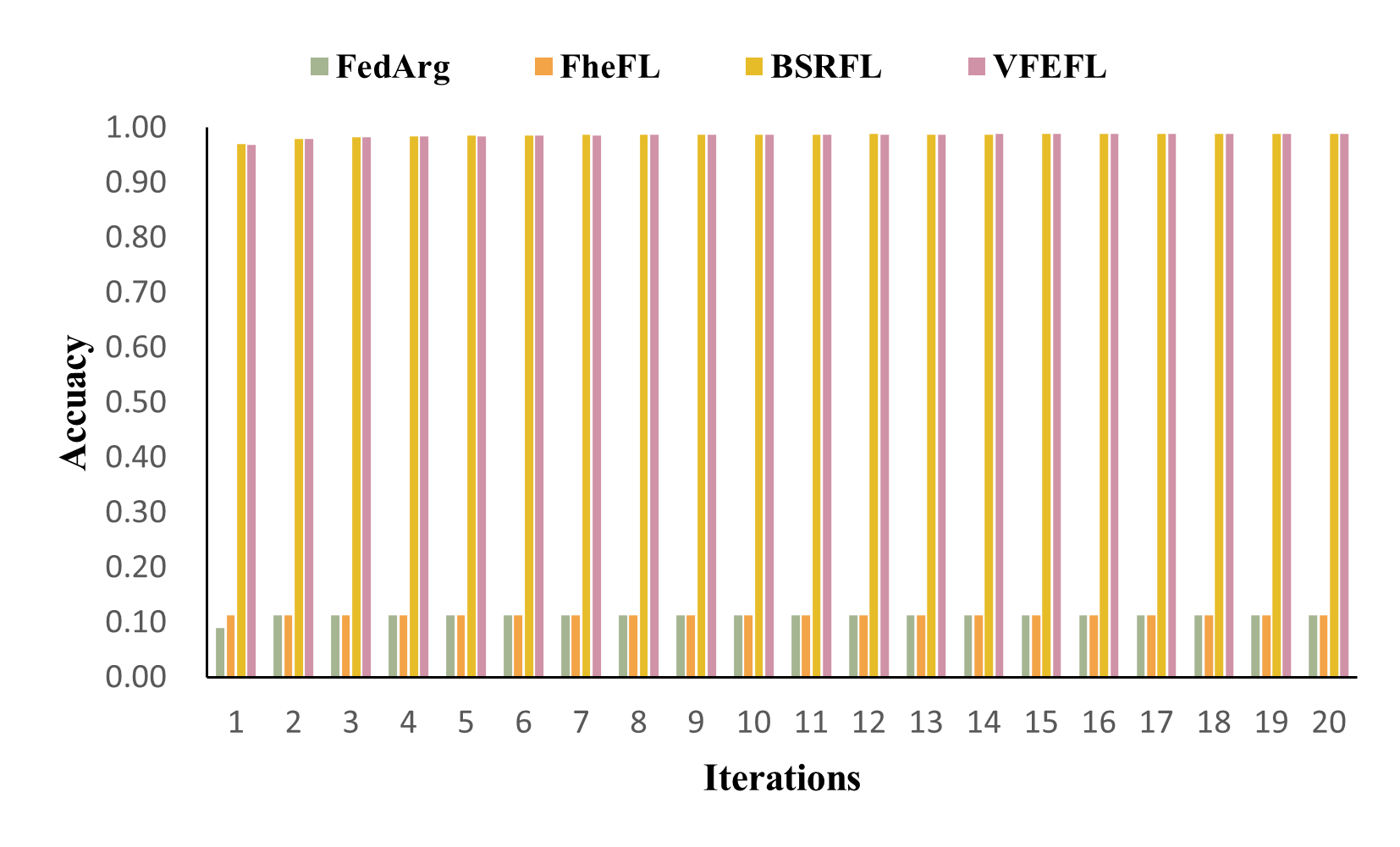}
    \label{fig:mnist_aa}
  }
  \caption{Model accuracy under different attacks on MNIST: (a) GA, (b) SA, and (c) AA.}
  \label{fig:mnist_attacks}
\end{figure*}
\begin{figure*}[!t]
  \centering
  \subfloat[GA]{
    \includegraphics[width=0.3\linewidth]{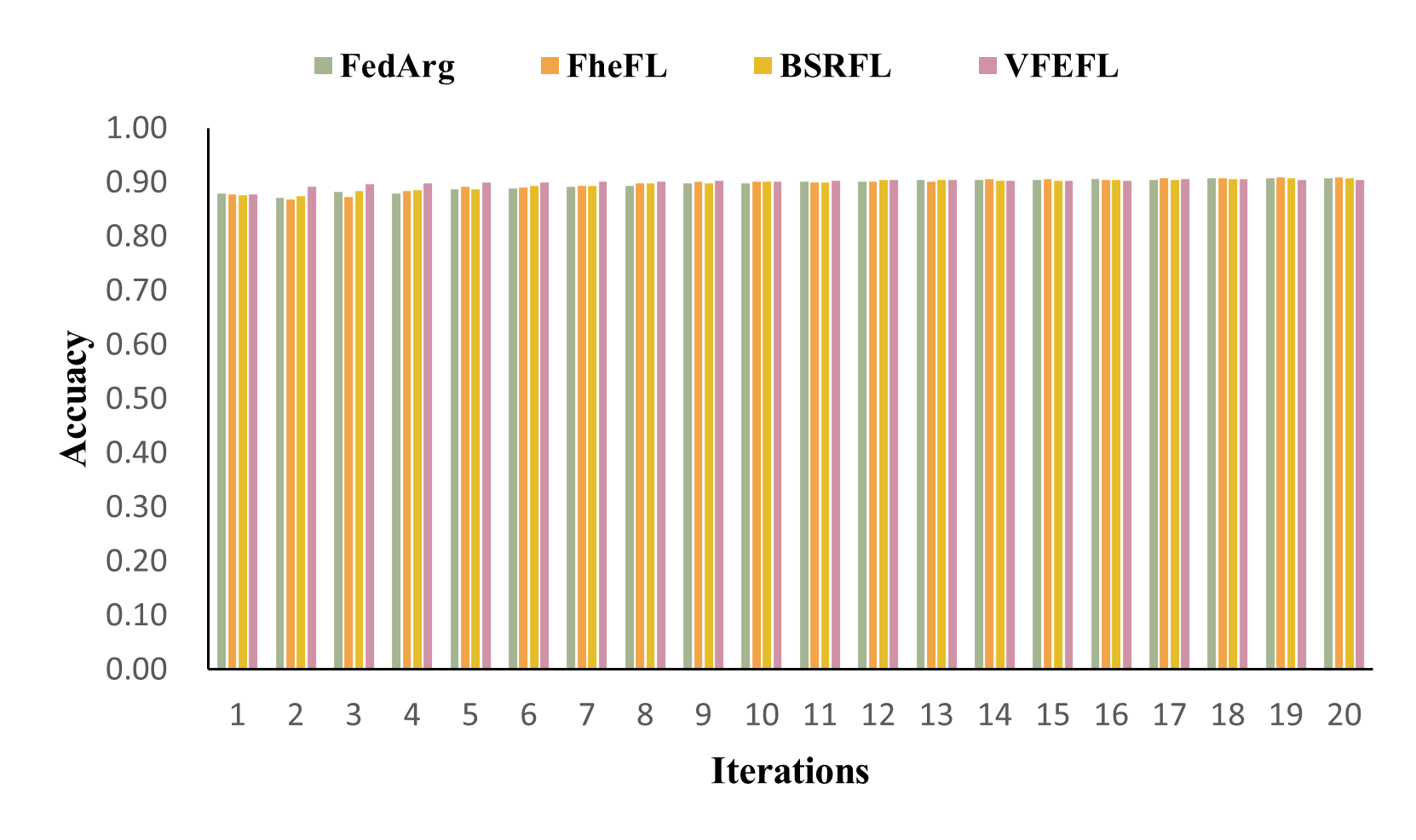}
    \label{fig:fashion_ga}
  }
  \hfil
  \subfloat[SA]{
    \includegraphics[width=0.3\linewidth]{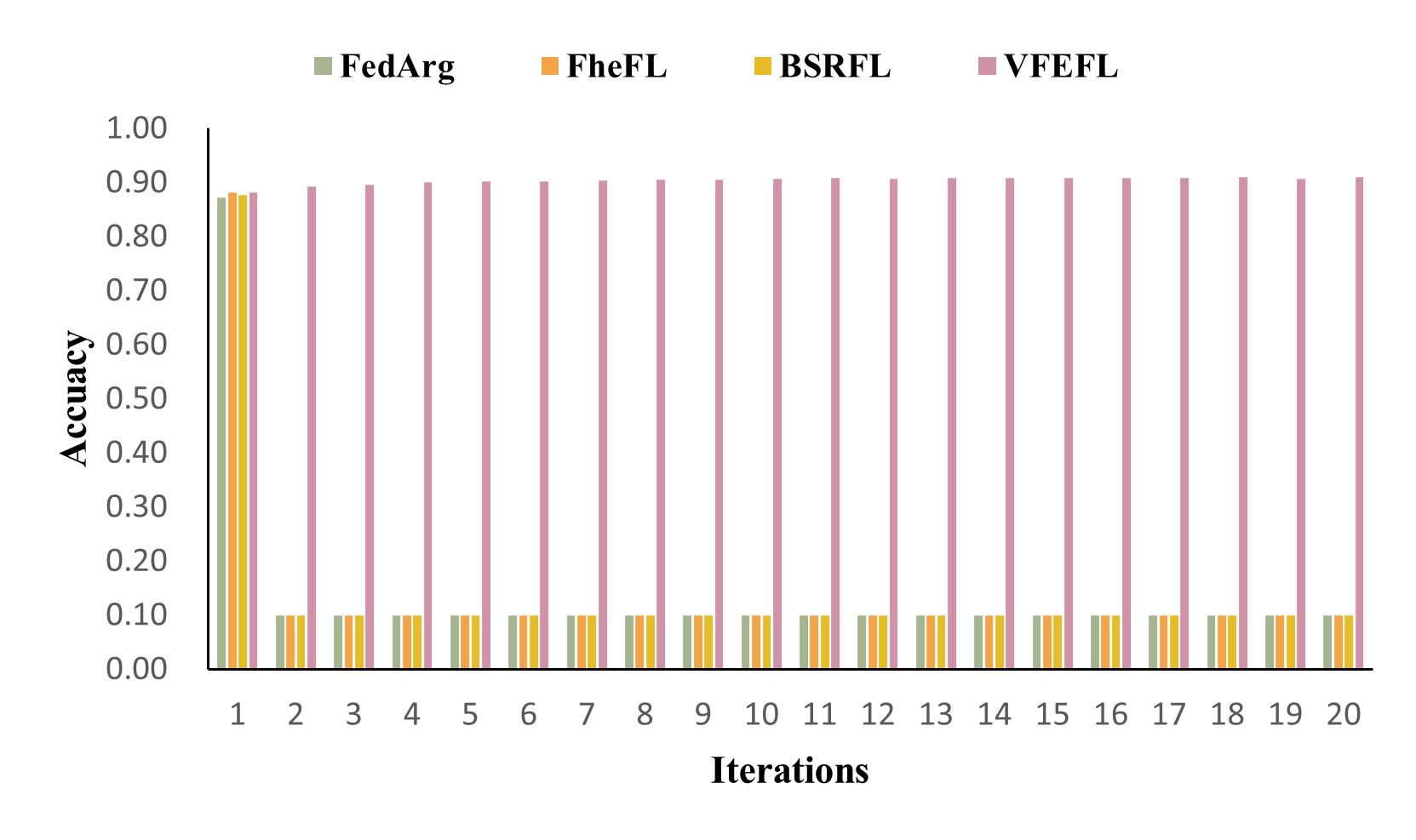}
    \label{fig:fashion_sa}
  }
  \hfil
  \subfloat[AA]{
    \includegraphics[width=0.3\linewidth]{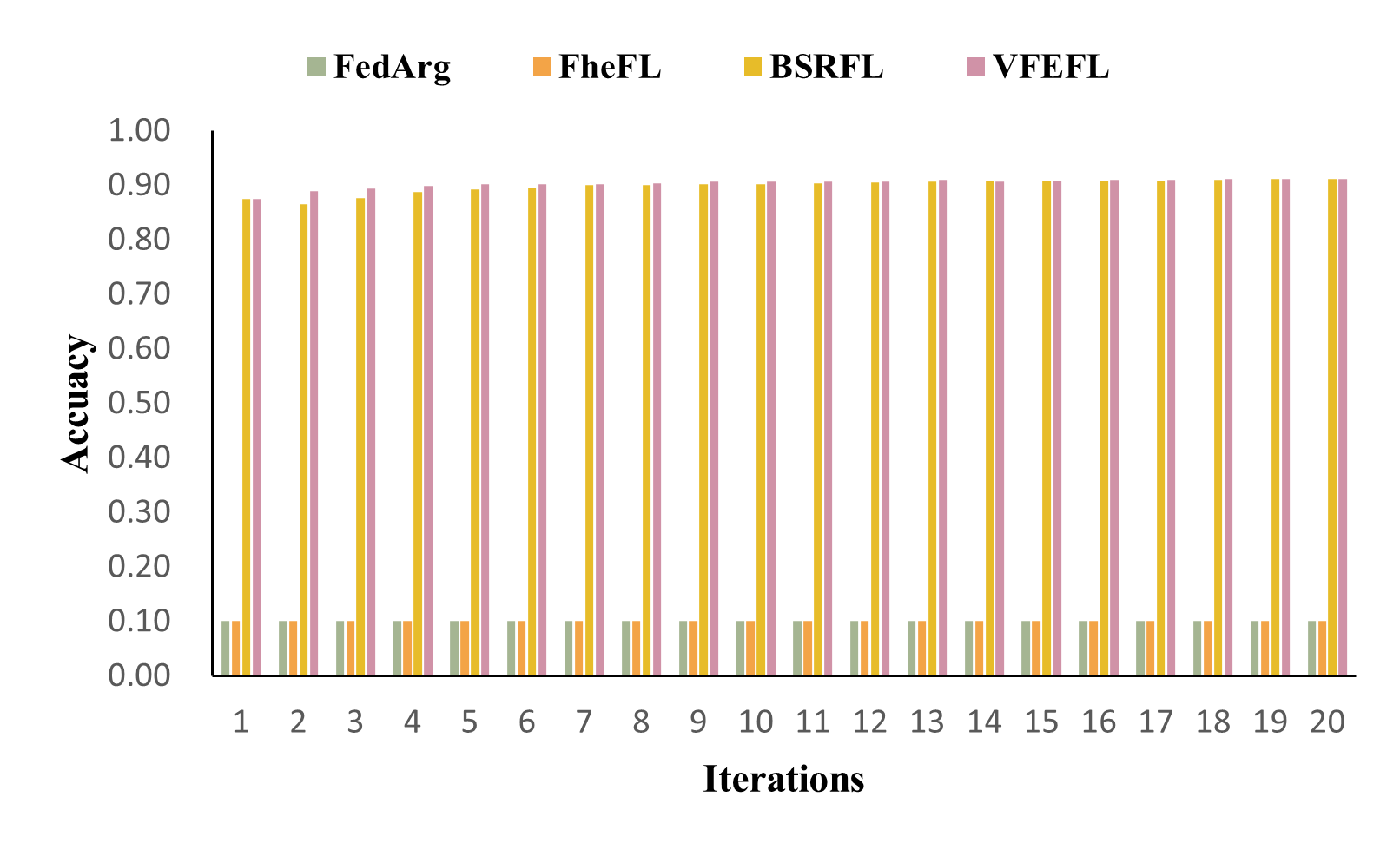}
    \label{fig:fashion_aa}
  }
  \caption{Model accuracy under different attacks on Fashion-MNIST: (a) GA, (b) SA, and (c) AA.}
  \label{fig:fashionmnist_attacks}
\end{figure*}
\begin{figure*}[!t]
  \centering
  \subfloat[GA]{
    \includegraphics[width=0.3\linewidth]{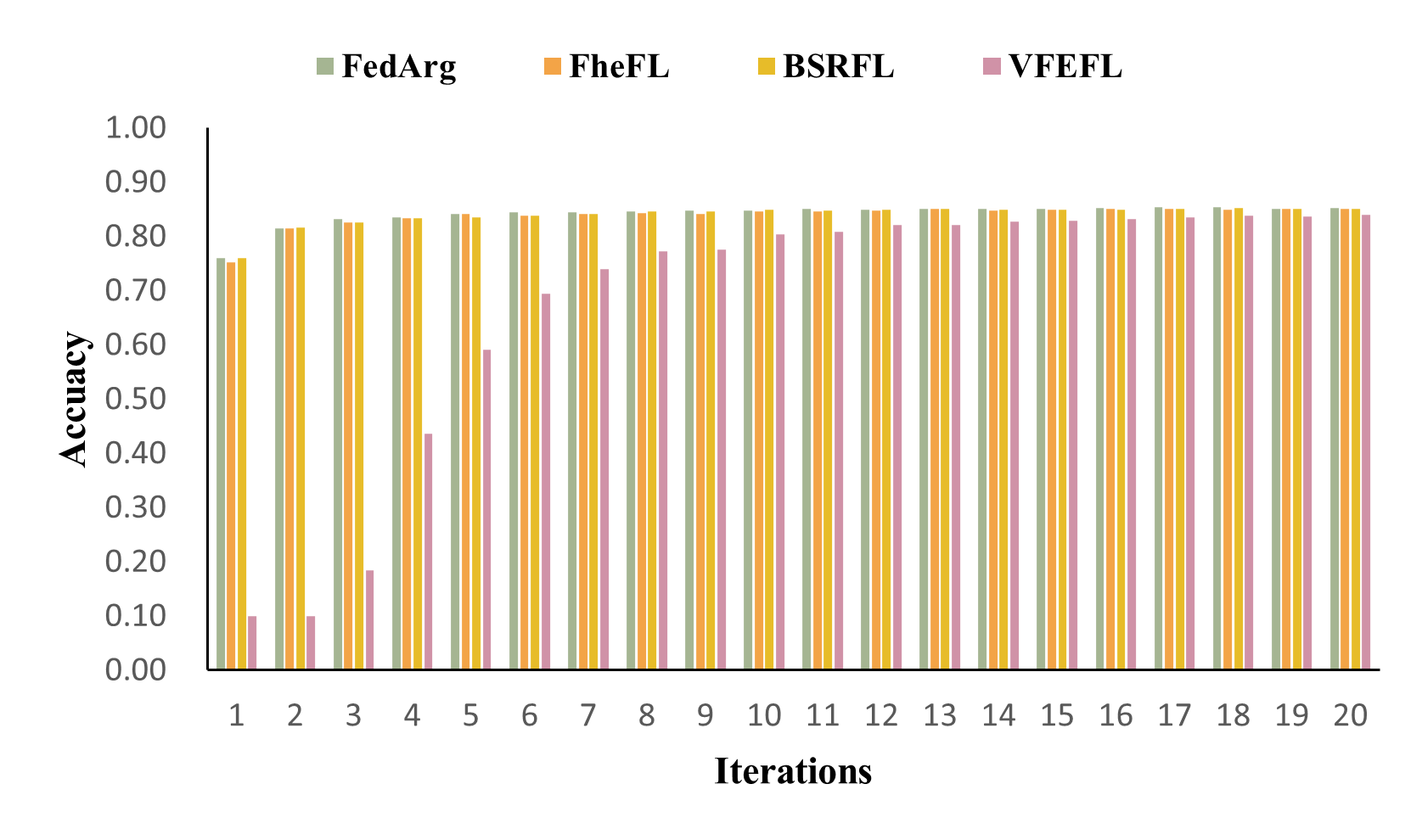}
    \label{fig:cifar_ga}
  }
  \hfil
  \subfloat[SA]{
    \includegraphics[width=0.3\linewidth]{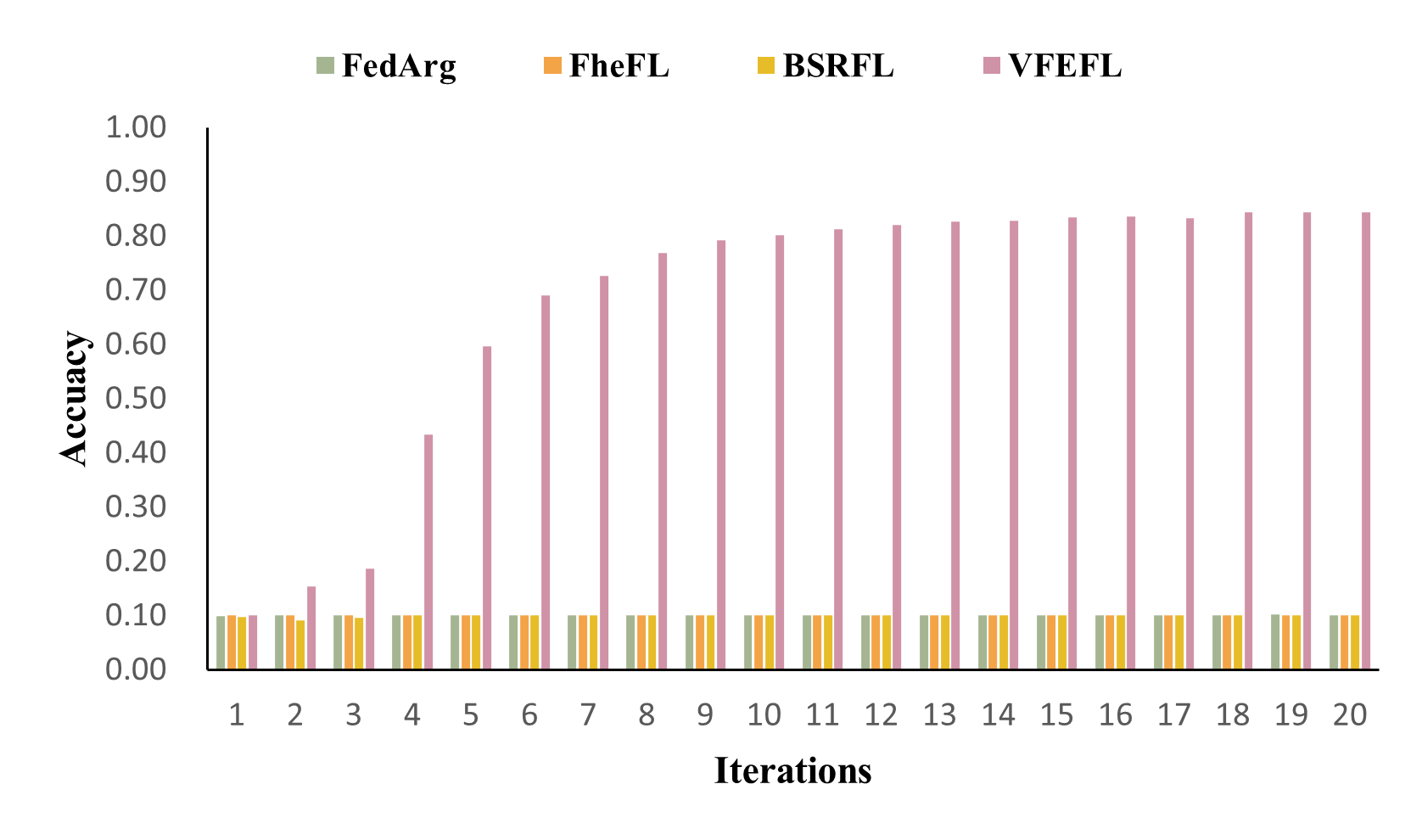}
    \label{fig:cifar_sa}
  }
  \hfil
  \subfloat[AA]{
    \includegraphics[width=0.3\linewidth]{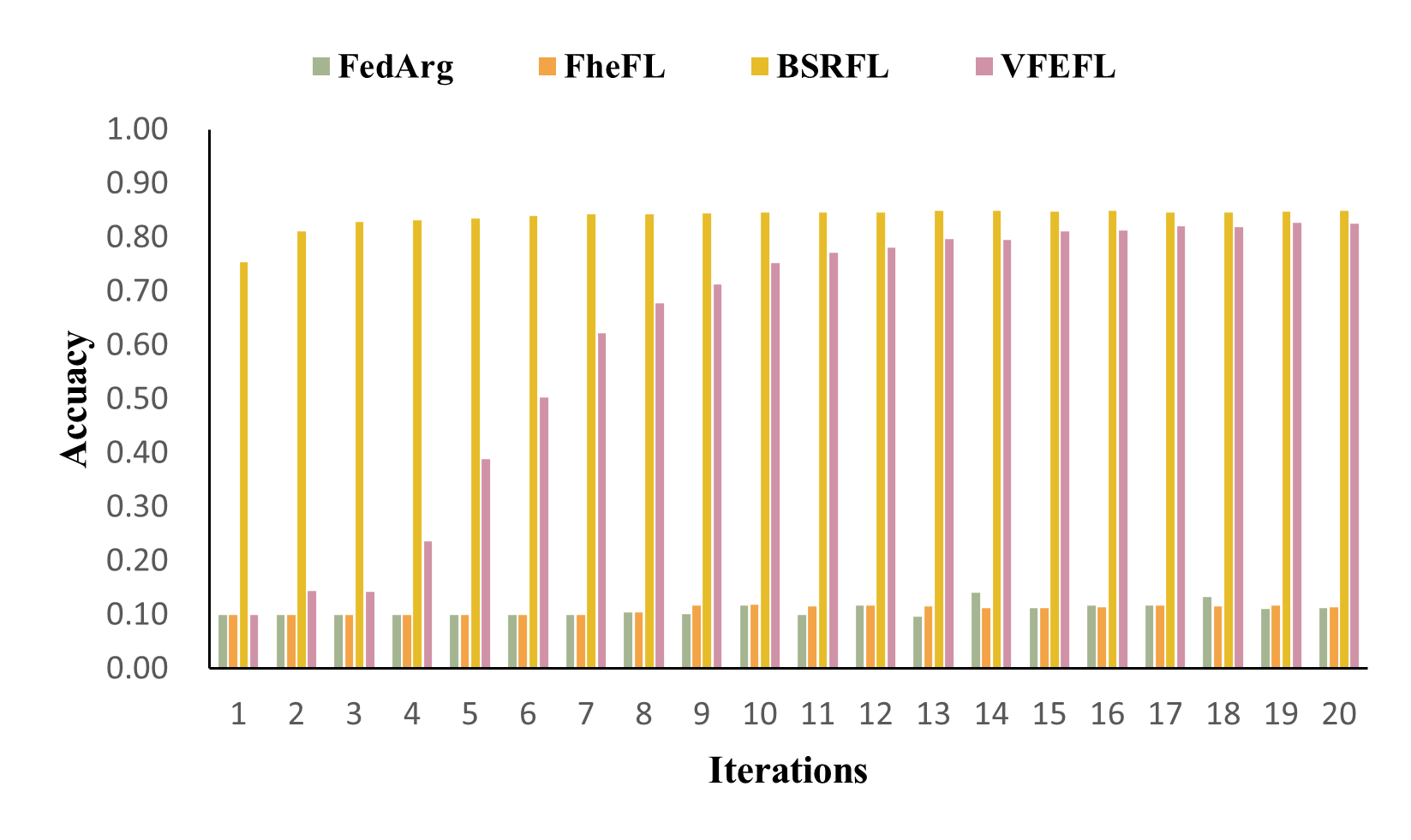}
    \label{fig:cifar_aa}
  }
  \caption{Model accuracy under different attacks on CIFAR-10: (a) GA, (b) SA, and (c) AA.}
  \label{fig:cifar_attacks}
\end{figure*}

	As illustrated in Figures \ref{fig:LF_mnist}, \ref{fig:LF_fashion} and \ref{fig:LF_cifar}, 
	the proposed scheme consistently achieves high accuracy under LF attacks across 
	all three datasets, while effectively suppressing the attack success rate. 
	This highlights the robustness of our approach in defending against LF attacks.
	But in the CIFAR10 dataset, the global models obtained from the training of 
	our scheme converge relatively slowly, as shown in Fig \ref{fig:LF_cifar}. 
	This phenomenon is 
	mainly due to the fact that the adopted aggregation rule normalises the 
	norm of each local model based on the root model. 
	This design effectively suppresses the malicious client's strategy to enhance 
	the attack effectiveness by amplifying the parameters of the local models, 
	as demonstrated by the scaling attack in Figs \ref{fig:mnist_sa}, 
	\ref{fig:fashion_sa}, and \ref{fig:cifar_sa}. 
	However, this 
	normalisation mechanism also increases the number of iterations required for 
	model convergence to some extent. Nevertheless, after about 20 iterations, 
	the global model trained by our scheme is still able to achieve 
	satisfactory accuracy. In addition, the global model trained by the proposed 
	scheme not only maintains high classification accuracy on MNIST and 
	Fashion-MNIST datasets, but also demonstrates better defence capability 
	and performance stability against malicious attacks  compared to other schemes
	in multiple iterations. 
	The above results further validate the robustness and effectiveness of 
	this paper's method in malicious environments.

	\begin{figure}[!t]
  \centering
  \subfloat[AC under LF attacks]{
    \includegraphics[width=0.45\linewidth]{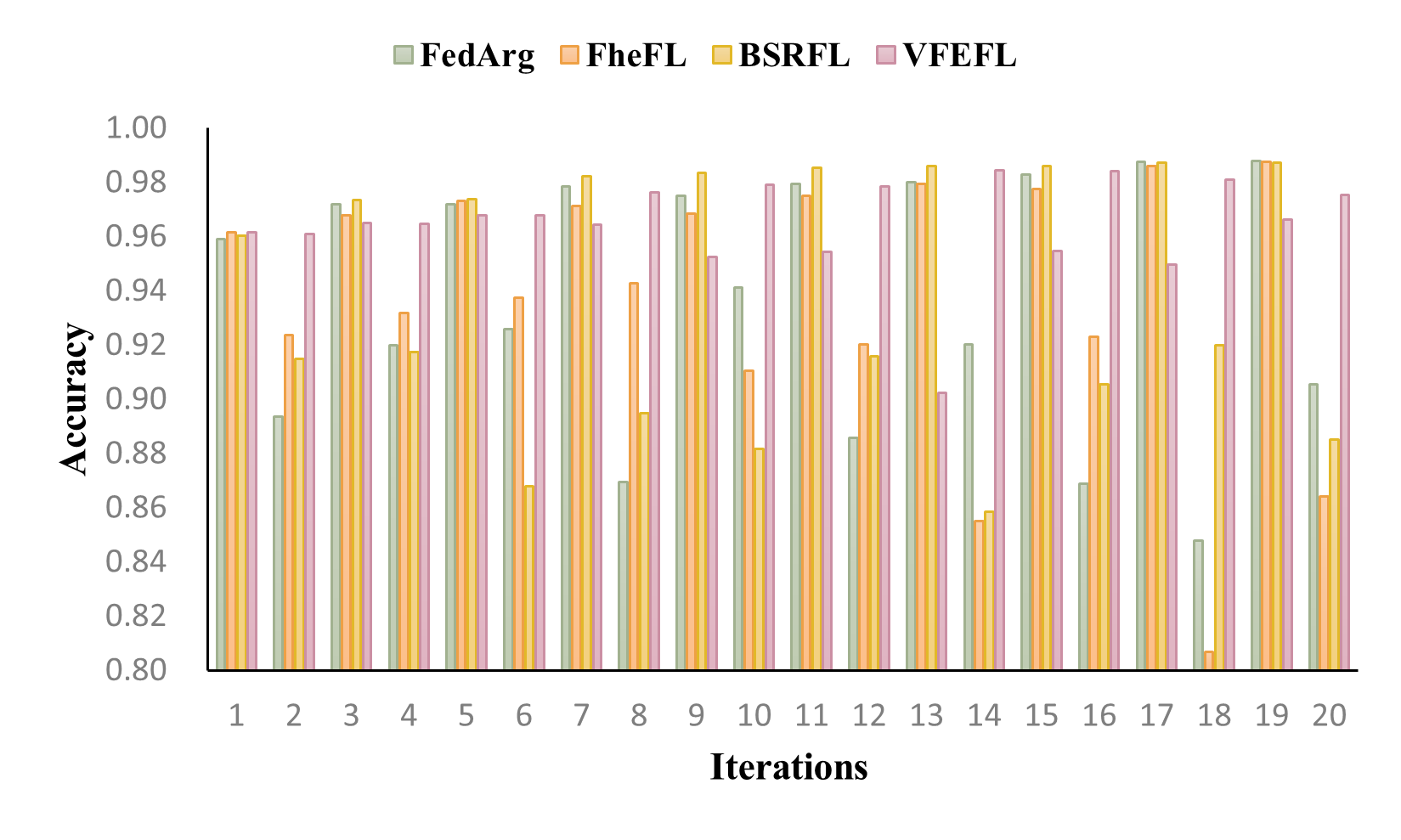}
    \label{fig:lf_mnist_ac}
  }
  \hfil
  \subfloat[ASR under LF attacks]{
    \includegraphics[width=0.45\linewidth]{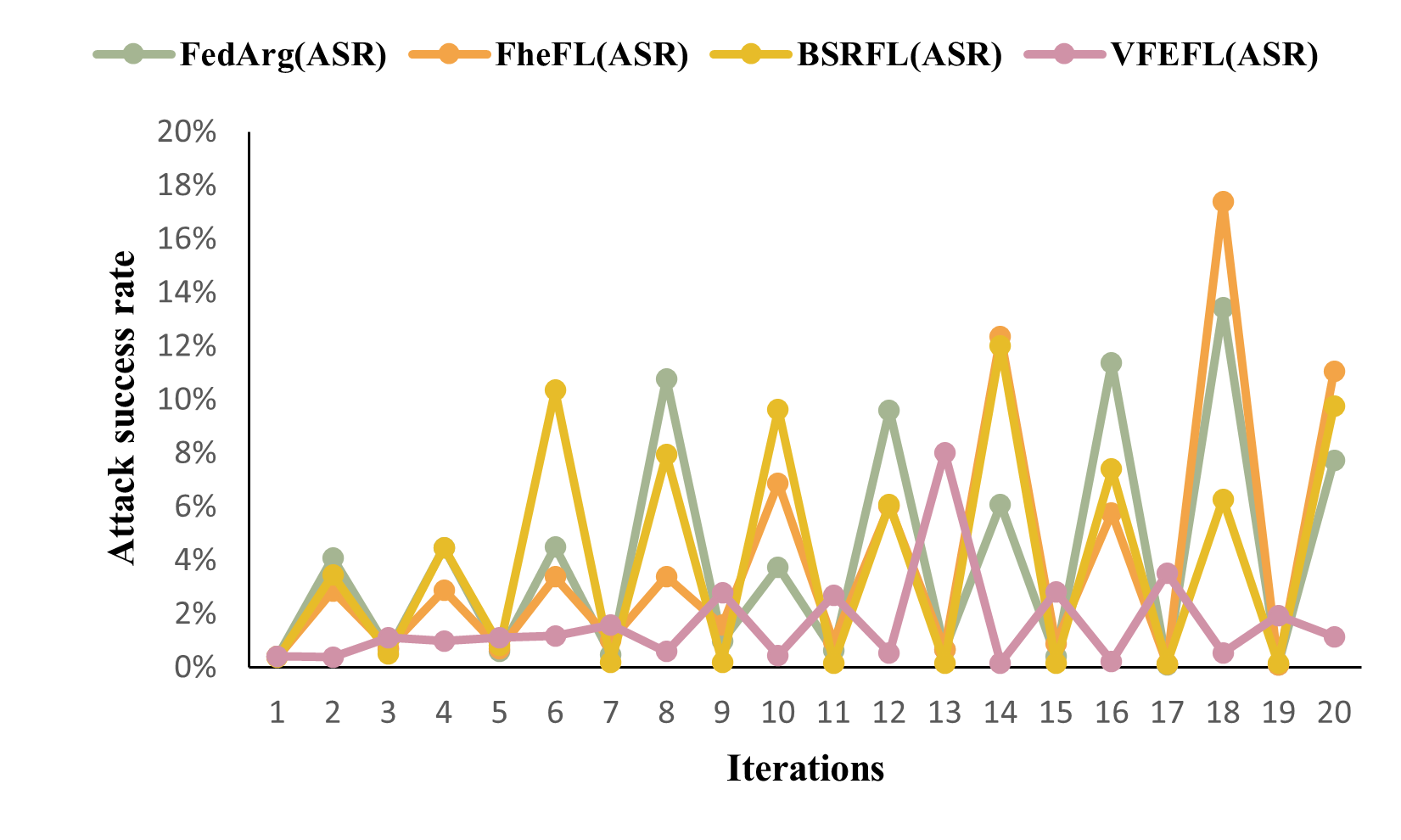}
    \label{fig:lf_mnist_asr}
  }
  \caption{AC and ASR under LF attacks on MNIST: (a) accuracy (AC) and (b) attack success rate (ASR).}
  \label{fig:LF_mnist}
\end{figure}
\begin{figure}[!t]
  \centering
  \subfloat[AC under LF attacks]{
    \includegraphics[width=0.45\linewidth]{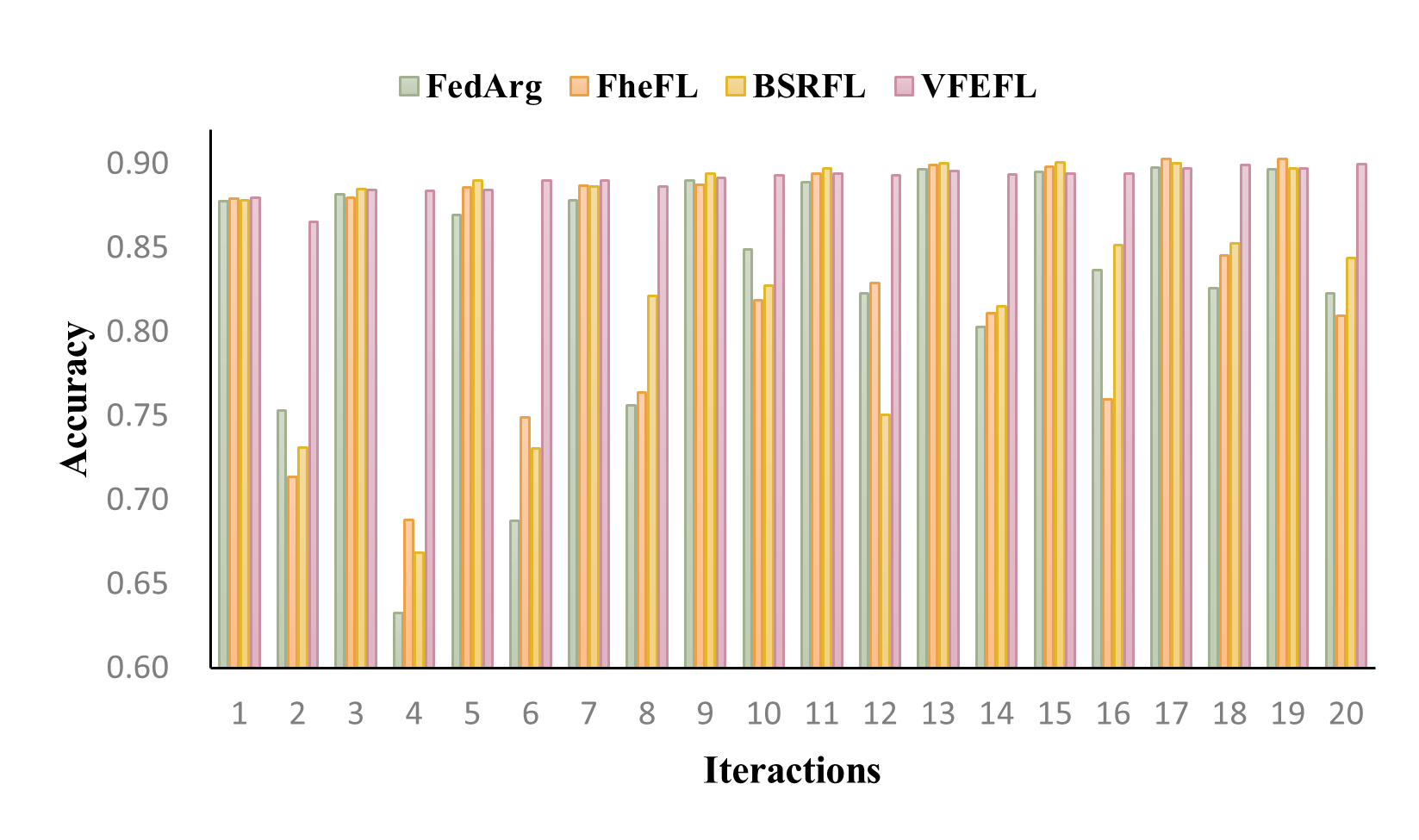}
    \label{fig:lf_fashion_ac}
  }
  \hfil
  \subfloat[ASR under LF attacks]{
    \includegraphics[width=0.45\linewidth]{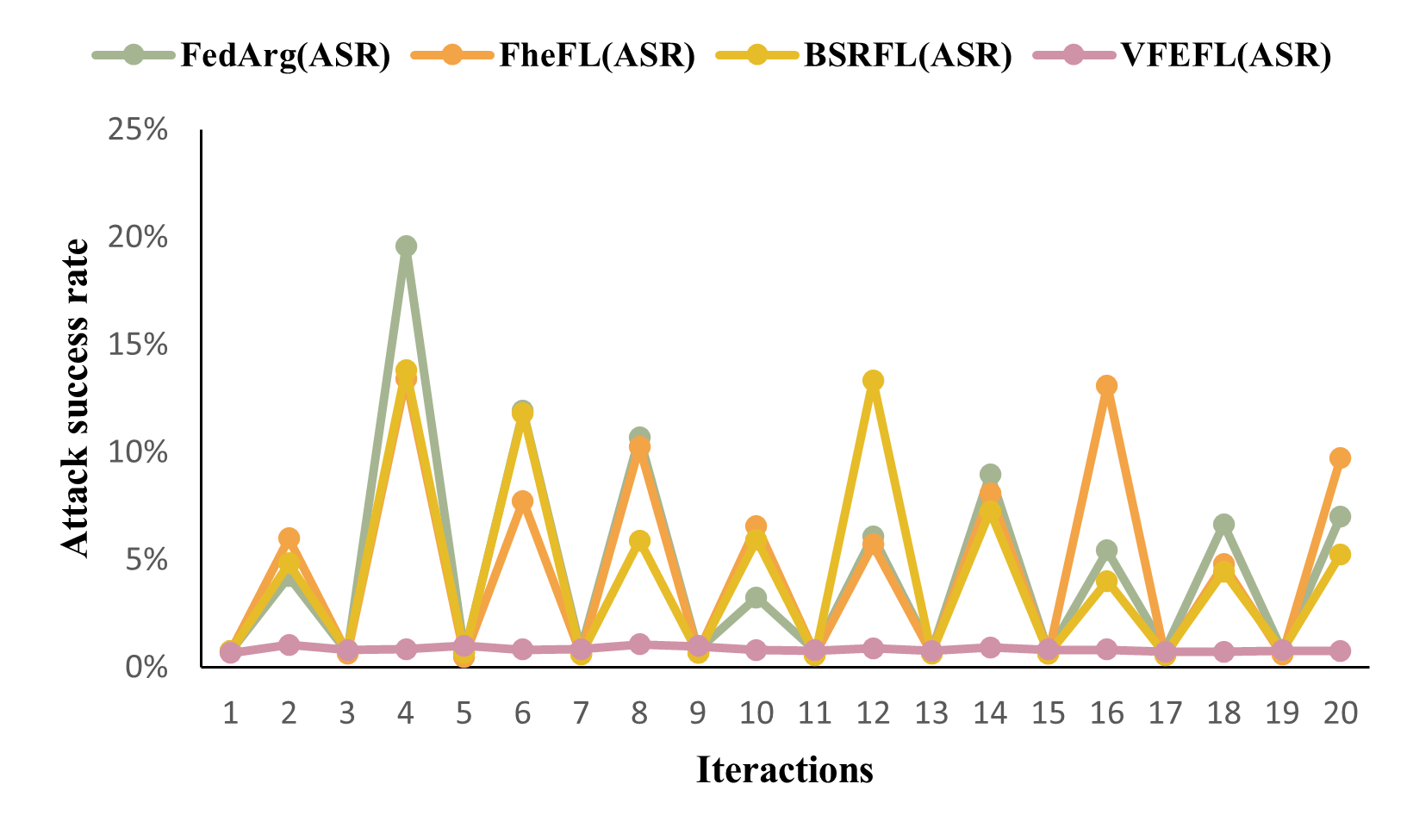}
    \label{fig:lf_fashion_asr}
  }
  \caption{AC and ASR under LF attacks on Fashion-MNIST: (a) accuracy (AC) and (b) attack success rate (ASR).}
  \label{fig:LF_fashion}
\end{figure}
\begin{figure}[!t]
  \centering
  \subfloat[AC under LF attacks]{
    \includegraphics[width=0.45\linewidth]{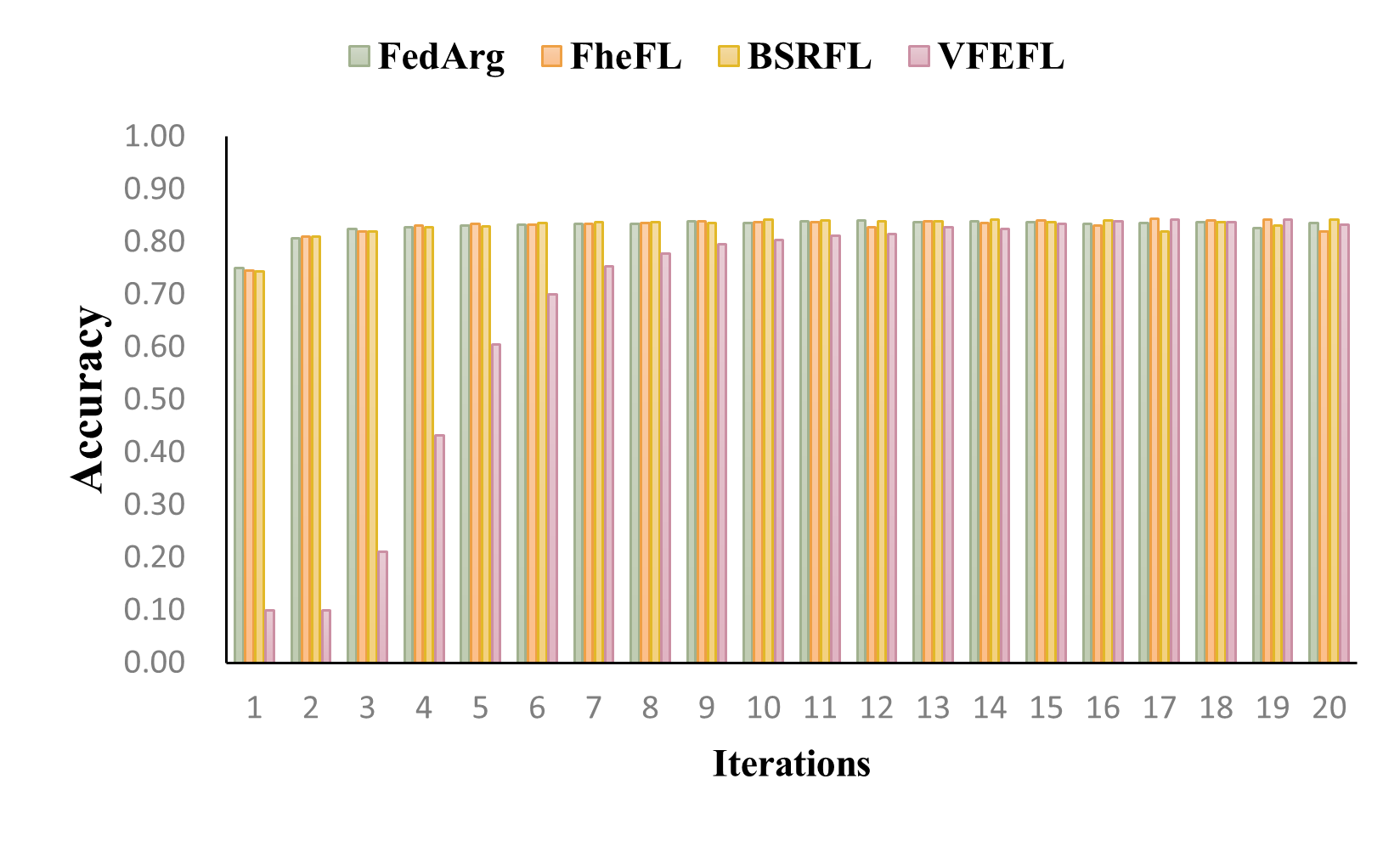}
    \label{fig:lf_cifar_ac}
  }
  \hfil
  \subfloat[ASR under LF attacks]{
    \includegraphics[width=0.45\linewidth]{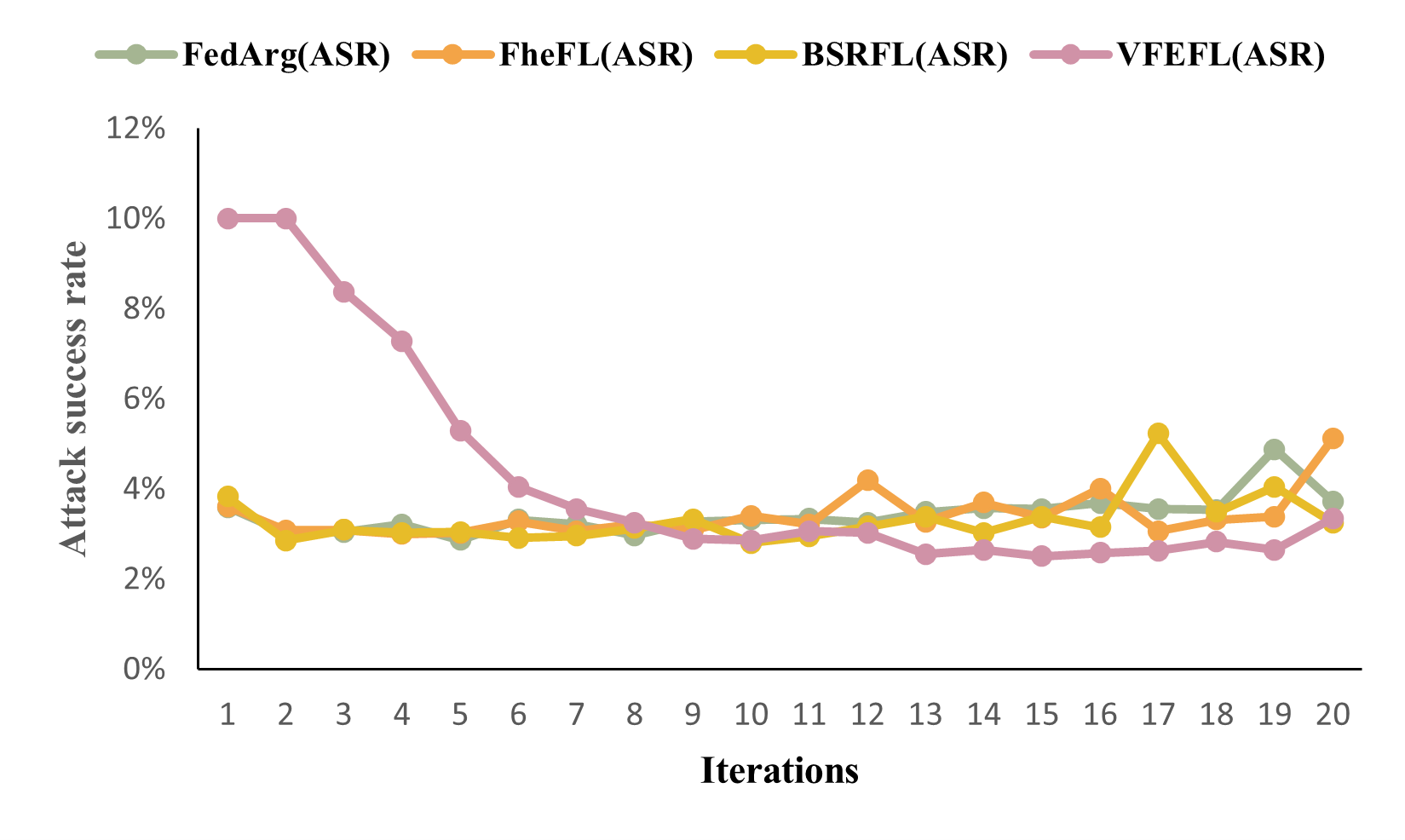}
    \label{fig:lf_cifar_asr}
  }
  \caption{AC and ASR under LF attacks on CIFAR-10: (a) accuracy (AC) and (b) attack success rate (ASR).}
  \label{fig:LF_cifar}
\end{figure}

	Finally, we conduct a comprehensive evaluation of the computational efficiency 
	of the federated learning framework. First, a performance test of the constructed 
	CC-DVFE scheme was conducted to measure the time consumed for each encryption and 
	decryption as well as the generation and verification of ciphertexts and keys, 
	and the results are shown in Table \ref{VFE_time}. 
	The time required for the Setup phase exhibits some uncertainty 
	due to the need for trial and error in generating appropriate prime numbers within the GenClassGroup.
	The runtime of KeyGen abd DKeyGenShare scales linearly with the number of clients, 
	while that of Encryption and VerifyCT increases with the ciphertext dimensionality.
	In the Encryption, hash to point operation is relatively time-consuming,
	however, it only needs to be executed once during the system construction.
	In this experiment, the encryption, key generation, and related verification operations 
	are recorded as the average time for each client.
	
 \begin{table*}[!t]
  \centering
  \caption{Time Consumption of CC-DVFE Under Different Dimensions}
  \label{VFE_time}
  \setlength{\tabcolsep}{8pt} 
  \begin{tabular}{llrrrrr}
    \toprule
    \multicolumn{2}{c}{\textbf{Operations}} & \textbf{m = 10} & \textbf{m = 50} & \textbf{m = 100} & \textbf{m = 500} & \textbf{m = 1000} \\
    \midrule
    \multicolumn{2}{l}{Setup}         & 9.5959 s &  4.8450 s & 9.3008 s & 14.8908 s &  10.5898 s \\
    \multicolumn{2}{l}{Key Generation}        & 1.3672 s &  1.3086 s & 1.3185 s & 1.3033 s  &  1.2764 s \\
    \multirow{3}{*}{Encryption} 
        & Hash to point             & 0.0242 s &  0.1144 s & 0.2174 s & 1.0573 s  & 2.0880 s \\
        & Encryption                  & 0.0077 s &  0.0189 s & 0.0427 s & 0.2123 s  & 0.4761 s \\
        & Ciphertext Proof            & 0.0235 s &  0.0628 s & 0.1080 s & 0.4817 s  & 1.0968 s \\
    \multicolumn{2}{l}{Verify Ciphertext}      & 0.0346 s &  0.0749 s & 0.1240 s & 0.4899 s  & 1.1786 s \\
    \multirow{2}{*}{DKeyGenShare}
        & Key Generation              & 0.0186 s &  0.0189 s & 0.0184 s & 0.0181 s  & 0.0197 s \\
        & Key Proof                   & 0.1401 s &  0.1523 s & 0.1378 s & 0.1351 s  & 0.1477 s \\
    \multicolumn{2}{l}{Verify Functional Key}      & 0.2237 s &  0.2276 s & 0.2174 s & 0.2163 s  & 0.2361 s \\
    \multicolumn{2}{l}{Key Combination}      & 0.0165 s &  0.0204 s & 0.0158 s & 0.0158 s  & 0.0157 s \\
    \multicolumn{2}{l}{Decryption}       & 0.1400 s &  0.9296 s & 1.4903 s & 7.0142 s  & 13.2858 s \\
    \bottomrule
  \end{tabular}
\end{table*}

\begin{table}[!t]
  \centering
  \caption{Time Consumption of VFEFL Per Iteration (LeNet5, MNIST, $n = 10$, $m = 61706$)}
  \label{VFEFL_time}
  \begin{tabular}{llr}
    \toprule
    \multicolumn{2}{c}{\textbf{Operations}} & \textbf{Time (s)} \\
    \midrule
    \multirow{2}{*}{Setup Phase}  
      & Setup & 87.54 \\
      & Communication & 0.43 \\
    \midrule
    \multirow{5}{*}{Model Training Phase} 
      & Local Training & 99.39 \\
      & Hash to Pairing & 128.61 \\
      & Model Encryption and Proof & 87.55 \\
      & Key Share Generation and Proof & 0.16 \\
      & Communication & 12.38 \\
    \midrule
    \multirow{3}{*}{Secure Aggregation Phase} 
      & Ciphertext Verification & 593.93 \\
      & Key Share Verification & 2.35 \\
      & Aggregation & 1014.62 \\
    \bottomrule
  \end{tabular}
\end{table}

	Based on the results of this test, 
	we further evaluate the time required to complete a full training iteration using 
	the MNIST dataset under the VFEFL framework, and the relevant results are 
	demonstrated in Table \ref{VFEFL_time}. 
	In practical experiments, the Setup phase also includes some preprocessing operations related to solving the discrete logarithm problem.
	The communication time of the Setup Phase refers to the time taken for the server to send the initial 
	model and the baseline model to the client. The communication time of the Model Training Phase 
	refers to the time taken for the client to upload the encrypted model, the decryption key share, and 
	corresponding proofs to the server. These communication times are averaged over multiple clients.
	In the aggregation phase, the server performs the verification of all ciphertexts and keys, and the total verification time is computed.
	On the whole, despite the introduction of cryptographic 
	operations, the overall training efficiency of the federated learning framework 
	is still within an acceptable range, which verifies that the proposed scheme can 
	effectively take into account the computational performance 
	while guaranteeing security.

\section{CONCLUSION}
In this paper, we presented VFEFL, based on decentralized verifiable functional encryption.
To implement the framework, we proposed a new aggregation rule and 
a new verifiable functional encryption scheme, and proved their security. 
Our VFEFL scheme achieves privacy protection and resistance to Byzantine 
clients in a single-server setup without introducing additional trusted third parties, 
and is more applicable to general scenarios. 
We analyzed both the privacy and robustness of the scheme, 
in addition to the verifiability, fidelity and self-containment achieved by VFEFL.
Finally, we further demonstrated the effectiveness of the VFEFL scheme 
through experiments.

\bibliographystyle{IEEEtran}
\bibliography{refs}

{\appendices
	\section{assumptions and security models}
	\begin{definition}[The Symmetric eXternal Diffie-Hellman Assumption]
		SXDH Assumption states that, in a pairing  
		group $\mathcal{PG}$, the DDH assumption holds in both 
		$\mathbb{G}_1$ and $\mathbb{G}_2$.
	\end{definition}
	\begin{definition}[HSM assumption]
		Let \( \text{GenClassGroup} = (\texttt{Gen}, \texttt{Solve}) \) be a generator for 
		DDH groups with an easy DL subgroup. Let \( (\tilde{s}, f, \hat{h}_p, \hat{G}, F) \) 
		be the output of \( \text{Gen}(\lambda, p) \), and set \( h := h_p f \). 
		Denote by \( D \) (resp. \( \mathcal{D}_p \)) a distribution over integers such that 
		\( \{h^x, x \leftarrow D\} \) (resp. \( \{\hat{h}_p^x, x \leftarrow \mathcal{D}_p\} \)) 
		is within statistical distance \( 2^{-\lambda} \) from the uniform distribution 
		in \( \langle h \rangle \) (resp. \( \langle \hat{h}_p \rangle \)). 
		For any adversary \( \mathcal{A} \) against the HSM problem, its advantage is defined as:
		\begin{align*}
            &\text{Adv}_{\mathcal{A}}^{\text{HSM}}(\lambda) := \\
            &\left| \Pr \left[
            \begin{array}{l}
            \begin{matrix}
            (\tilde{s}, f, \hat{h}_p, h_p, F, \hat{G}^p) \\ 
            \leftarrow \texttt{Gen}(\lambda, p), \\
            t \leftarrow \mathcal{D}_p, \, h_p = \hat{h}_p^t, \, x, x' \\ 
            \leftarrow D, \, b \overset{\$}{\leftarrow} \{0, 1\}, \\
            Z_0 = h^x, \, Z_1 = h^{x'}, \, b' \leftarrow \\  
            \mathcal{A}(p, \tilde{s}, f, \hat{h}_p, h_p, F, 
            \hat{G}^p, Z_b, \texttt{Solve}(\cdot)) 
            \end{matrix}: b = b'
            \end{array}
            \right] - \frac{1}{2} \right|.
        \end{align*}	
		The HSM problem is hard in \( G \) if for all probabilistic polynomial-time 
		adversaries \( \mathcal{A} \), \( \text{Adv}_{\mathcal{A}}^{\text{HSM}}(\lambda) \) is negligible.
	\end{definition}
		From \cite{VMCFE}, one can set $S=2^{\lambda-2} \tilde{s}$, and instantiate $\mathcal{D}_p$ as 
		the uniform distribution on $\{0, \cdots, S\}$ and $\mathcal{D}$ as the uniform distribution 
		on $\{0, ..., pS\}$.
	\begin{definition}[Decisional Diffie-Hellman Assumption]
		The Decisional Diffie-Hellman (DDH) assumption states that, in a prime-order group 
		\( \mathcal{G} \leftarrow \text{GGen}(\lambda) \), no probabilistic polynomial-time 
		(PPT) adversary can distinguish between the following two distributions with 
		non-negligible advantage, namely
		\begin{align*}
            \mathrm{Adv}_{\mathbb{G}}^{ddh}(t)=
			|& 
				\Pr \left[ (g^a, g^b, g^{ab}) \mid a, b \leftarrow \mathbb{Z}_p \right] \\
				&-
				\Pr \left[ (g^a, g^b, g^c) \mid a,b,c \leftarrow \mathbb{Z}_p\right]
			|< \epsilon(\lambda)
        \end{align*}
	\end{definition}
	
	\begin{definition}[Multi DDH Assumption\cite{DMCFE}]\label{def_multiDDH}
		For any distinguisher $\mathcal{A}$ with runtime at most $t$, its maximum advantage in distinguishing the following two distributions:
		$$
		\mathcal{D}_m = \left\{ \left(g^x, (g^{y_j}, g^{x y_j})_{j \in [m]} \right) \mid x, y_j \overset{\$}{\leftarrow} \mathbb{Z}_p \right\},
		$$
		$$
		\mathcal{D}'_m = \left\{ \left(g^x, (g^{y_j}, g^{z_j})_{j \in [m]} \right) \mid z_j \overset{\$}{\leftarrow} \mathbb{Z}_p \right\}
		$$
		is upper bounded by
		$$
		\operatorname{Adv}^{\mathrm{multi-ddh}}_{\mathbb{G}}(t) \le \operatorname{Adv}^{\mathrm{ddh}}_{\mathbb{G}}(t + 4 m \cdot t_{\mathbb{G}}),
		$$
		where $t_{\mathbb{G}}$ denotes the time for an exponentiation in $\mathbb{G}$.

	\end{definition}
	\begin{definition}[Static-IND-Security for CC-DVFE]\label{definition:ind}
		Consider a CC-DVFE scheme applied to a set of \(n\) clients, 
		each associated with an 
		\(m\)-dimensional vector. No PPT adversary \(\mathcal{A}\) should be able to 
		win the following security game with at least two non-corrupted clients,
		against a challenger \(\mathcal{C}\) with non-negligible probability.
		\begin{itemize}
			\item \textbf{Initialization:} The challenger \(\mathcal{C}\) runs 
				algorithm \texttt{SetUp}($\lambda$) and \texttt{KeyGen}($pp$) to get 
				public parameters $pp$ and the keys 
				$\left((sk_i, ek_i)_{i \in [n]}, vk_{CT}, vk_{DK}, pk\right)$ and chooses a random 
				bit $b \overset{\$}{\gets} \{0, 1\}$. It sends $\left(vk_{CT}, vk_{DK}, pk\right)$ to the $\mathcal{A}$. 
			\item \textbf{Encryption queries QEncrypt:} 
				$X^0, X^1$ are two different $n \times m$ dimensional plaintext matrices 
				corresponding to $m$-dimensional messages from $n$ clients. 
				$\mathcal{A}$ can possesses $q_E$ access to a Left-or-Right 
				encryption oracle with submitting 
				$\left(id, X^0, X^1, \ell,\left\{\ell_{Enc}\right\}_{j \in [m]}\right)$. 
				$\mathcal{C}$ generates the ciphertext  $C_{\ell,i}$ and proof $\pi_{CT, id}$
				by running \texttt{Encrypt}$\left(ek_{id}, X_{id}^b, aux, \ell, (\ell_{Enc,j})_{j \in [m]}, pp\right)\to (C_{\ell, id}, \pi_{CT,id})$ 
				and provides them to $\mathcal{A}$. Any subsequent queries for 
				the same pair $(\ell, id)$ will later be ignored.
				$\mathcal{A}$ obtaining the validation key $vk_{CT}$ in the 
				Initialization phase can implement the ciphertextvalidation itself 
				without querying \texttt{VerifyCT}($(C_{\ell,id})_{id \in [n]},\pi_{CT}, vk_{CT}$).
			\item \textbf{Functional decryption key queries QDKeyGen:} 
				$\mathcal{A}$ can access to QDKeyGen 
				with $\left(id, y_{id}, \ell_{y}\right)$ up to $q_K$ times.
				$\mathcal{C}$ runs \texttt{DKeyGenShare}$(sk_{id}, pk, y_{id}, \ell_{y}, pp ) \to (dk_{y,id},$\allowbreak$ \pi_{DK,id})$
				obtaining the decryption key validation key $vk_{DK}$ in the Initialization phase 
				can implement the validation itself without querying
			\texttt{VerifyDK}($({dk}_{y,id})_{id\in [n]},$\allowbreak$\pi_{DK}, vk_{DK}, pp$).
			\item \textbf{Corruption queries QCorrupt:} $\mathcal{A}$ 
				can  corruption queries by submitting the index $id \in \mathcal{CC}$
				up to $q_C$ times.
				$\mathcal{C}$ returns  
				secret and encryption keys $(sk_{id}, ek_{id})$ to $\mathcal{A}$. 
				QCorrupt are sent before the initialization in the sta-IND game.
			\item \textbf{Finalize:} $\mathcal{A}$ outputs its guess $b'$ on the bit $b$, 
			and this process produces the outcome $\beta$ of the security game, based on 
			the analysis provided below.
			Let $\mathcal{CC}$ represent the set of corrupted clients (the indices $id$ that are 
			input to QCorrupt throughout the entire game), while $\mathcal{HC}$ denotes the set 
			of honest (noncorrupted) clients. We set output $\beta = (b'== b)$ if none of 
			the following occur, otherwise, $\beta \overset{\$}{\gets} \{0,1\}$
			\begin{enumerate}
				\item A \texttt{QEncrypt}($id, X^0_{id}, X^1_{id}, \ell,\left\{\ell_{Enc}\right\}_{j \in [m]}$)-query has been 
						made for an index $(id \in \mathcal{CC})$ where $(X^0_{id} \neq X^1_{id})$.
				\item For a given label $\ell$, an encryption query 
					\texttt{QEncrypt}$(id, X^0_{id}, X^1_{id}, \ell,\left\{\ell_{Enc}\right\}_{j \in [m]})$ has been submitted for some
					$id \in \mathcal{HC}$, while not all encryption queries 
					\texttt{QEncrypt}($id, X^0_{id}, X^1_{id}, \ell,\left\{\ell_{Enc}\right\}_{j \in [m]}$) have been made 
					for honest clients.
				\item For a given label \(\ell\) and a function vector $\vec{y}$ asked for \texttt{QDKeyGen},
					for all $id \in \mathcal{CC}$, $X_{id}^0 = X_{id}^1$ and for all $id' \in \mathcal{HC}$,
					\texttt{QEncrypt}($id', X^0_{id'}, X^1_{id'}, \ell,\left\{\ell_{Enc}\right\}_{j \in [m]}$) have been made,
					but there exists $ j \in [m]$ such that 
					$\langle X^{\top 0}_j, \vec{y} \rangle \neq \langle X^{\top 1}_j, \vec{y} \rangle$ holds,
					where $X^{\top 0}_j = \{X^0_{1,j}, \cdots, X^0_{n,j}\} \in \mathcal{M}^n$, $X^{\top 1}_j = \{X^1_{1,j}, \cdots, X^1_{n,j}\} \in \mathcal{M}^n$
				\item For a given label $\ell$, there exists a cross-ciphertext statement $\mathcal{R}$ such that an encryption query \texttt{QEncrypt}$(i, X_{id}^0, X_{id}^1, \ell, {\ell_{Enc}}_{j \in [m]})$ has been submitted for some $id \in \mathcal{HC}$, while $\mathcal{R}(X_{id}^0) \neq \mathcal{R}(X_{id}^1)$,  where $\mathcal{R}(\cdot)$ denotes the relation verified in the cross-ciphertext proof.
			\end{enumerate}
		\end{itemize}
		We say the CC-DVFE is $\epsilon(\lambda)$-static-IND-secure 
		if there exists no adversary $\mathcal{A}$ who can win the above game  
		with advantage at least $\epsilon(\lambda)$.
		Namely,
		$$
			\left| \mathrm{Adv_{CC-DVFE}^{sta-ind}} \right| = \left| \mathrm{Pr}[b'=b]- \frac{1}{2} \right| \le \epsilon(\lambda)
		$$
	\end{definition}

		\begin{definition}[Verifiability for CC-DVFE]\label{definition:ver}
			Consider a CC-DVFE scheme applied to a set of \(n\) clients, each associated with an 
		\(m\)-dimensional vector. No PPT adversary \(\mathcal{A}\) should be able to 
		win the following security game against a challenger $\mathcal{C}$ with non-negligible 
		probability.
		\begin{itemize}
			\item \textbf{Initialization:} Challenger $\mathcal{C}$ initializes by running 
			$\texttt{SetUp}(\lambda) \to pp$. It sends $pp$ to $\mathcal{A}$.
			\item \textbf{Key generation queries QKeyGen:} For only one time in the game, 
			$\mathcal{A}$ can play on behalf of corrupted clients $\mathcal{CC}$ to join a 
			\texttt{KeyGen($pp$)}. The challenger $\mathcal{C}$ executes \texttt{KeyGen($pp$)} on 
			representations of non-corrupted clients $\mathcal{HC}$ and 
			finally output $ (vk_{CT}, vk_{DK}, pk)$.
			\item \textbf{Corruption queries QCorrupt:} $\mathcal{A}$ can possesses 
			$q_C$ access to the corruption queries with submitting an index $id$ to play on behalf of client $id$ in the protocol.
			$\mathcal{C}$ sends $(sk_{id}, ek_{id}, vk_{CT}, vk_{DK})$ to $\mathcal{A}$ if $id$ is queried after QKeyGen.
			And $\mathcal{A}$ cannot play on behalf of client $id$ in the key generation process anymore.
			\item \textbf{Encryption queries QEncrypt:} 
			$\mathcal{A}$ has $q_E$ access to QEncrypt to obtain a correct encryption with submitting 
			an index $id$, message $\vec{x}_{id} \in \mathcal{M}^m$ and labels $\ell$ and 
			$\left(\ell_{Enc,j}\right)_{j \in [m]}$. $\mathcal{C}$ runs 
			\texttt{Encrypt}$(ek_{id}, \vec{x}_{id},$\allowbreak$ aux, \ell, (\ell_{Enc,j})_{j \in [m]}, pp)\to (C_{\ell, id}, \pi_{CT,id})$
			and sends them to $\mathcal{A}$.
			$\mathcal{A}$ obtaining the 
			validation key $vk_{CT}$ in the key generation phase can implement 
			the ciphertext validation itself without querying  \texttt{VerifyCT}($(C_{\ell,id})_{id \in [n]},$\allowbreak$\pi_{CT}, vk_{CT}$).
			\item \textbf{Functional key share queries QDKeyGen:} $\mathcal{A}$ has 
			$q_K$ access to QDKeyGen to gain a correct functional key share
			with sunbmitting $\left(id, \ell_y, y_{id}\right)$.
			$\mathcal{C}$ runs \texttt{DKeyGenShare}$\left(sk_{id}, pk, y_{id}, \ell_{y}, pp \right) 
			\to (dk_{y,id},$\allowbreak$ \pi_{DK,id})$ and sends them to $\mathcal{A}$. 
			$\mathcal{A}$ obtaining the decryption key validation 
			key $vk_{DK}$ in the key generation phase 
			can implement the validation itself without querying
			\texttt{VerifyDK}($({dk}_{y,id})_{id\in [n]},\pi_{DK}, vk_{DK}, pp$).
			\item \textbf{Finalize:} Let $\mathcal{CC}$ be the set of corrupted 
			clients and $\mathcal{HC}$ denotes the set of honest clients, 
			then $\mathcal{A}$ has to output verification keys $(vk_{CT}, vk_{DK})$ 
			and public key $pk$ from QKeyGen(), a set of labels $\ell$ and 
			$\left(\ell_{Enc,j}\right)_{j \in [m]}$ , malicious ciphertexts 
			$(C_{\ell,id}, \pi_{CT, id})_{id\in \mathcal{CC}}$, and malicious functional key shares 
			$(dk_{y,id}, \pi_{DK,id})_{id\in \mathcal{CC}}$ for a polynomially number $q_F$ of 
			function vectors $\{y_f\}_{ f \in [q_F]}$. 
			The ciphertexts of honest clients
			$\{C_{\ell,id}\}_{id\in \mathcal{HC}}$ and their functional key shares 
			$\{dk_{y,id}\}_{id\in \mathcal{HC}}$ are automatically completed by using the oracles 
			QEncrypt and QDKeyGen.
			$\mathcal{A}$ wins the game if one of the following cases happens:
			\begin{enumerate}
				\item If $\texttt{VerifyCT}((C_{\ell,id})_{id \in [n]},\pi_{CT, id}, vk_{CT}) \to 0$ or 
				$\texttt{VerifyDK}((dk_{id,y})_{id\in [n]}, \pi_{DK,i}, vk_{DK},  pp) \to 0$, 
				for some $y_f$, the union $\mathcal{CC}=\mathcal{CC}_{CT} \cup  \mathcal{CC}_{DK}$ 
				contains an honest sender $\mathcal{C}_i$.
				In other words,  an honest client is falsely accused of being a corrupt client, 
				which is noted as event E1.
				\item If $\texttt{VerifyCT}((C_{\ell,id})_{id \in [n]},\pi_{\text{CT},i}, vk_{\text{CT}}) \to 1 $
					and $ \texttt{VerifyDK}((dk_{id,y})_{id \in [n]}, \pi_{\text{DK},i}, vk_{\text{DK}}, pp) \to 1$
					there does not exist a tuple of messages $(\vec{x}_{id})_{id \in [n]}$ such that 
					$\texttt{Decrypt}((C_{\ell,id})_{id \in [n]} , {dk}, \vec{y}) \to( \langle \vec{x}^{\top}, \vec{y}\rangle)_{j \in [m]}$
					holds where $\vec{x}^{\top} = (x_{1,j}, \dots, x_{n,j}) $\allowbreak$\in \mathcal{M}^n$
					for all function vectors $y_f$ with $f \in [q_F]$, 
					where $ dk = \texttt{DKeyComb}(\{dk_{i,y_f}\}_{i \in [n]}, \ell_y, pk)$. 
					This defines event E2. 
			\end{enumerate}
			The first winning case guarantees that 
			no honest client will be misjudged when authentication fails.
			The second winning condition is determined statistically, and the scheme must ensure 
			that the adversary cannot generate a ciphertext and decryption key that can be 
			verified, causing the decryption to fail.
		\end{itemize}
		We say the CC-DVFE has $\epsilon(\lambda)$-verifiability 
		if there exists no adversary $\mathcal{A}$ who can win the above game  
		with advantage at least $\epsilon(\lambda)$.
		Namely,
		$$
			\left| \mathrm{Adv_{CC-DVFE}^{ver}} \right| = 
			\left| \text{Pr}(\text{E1}) + \text{Pr}(\text{E2}) \right| \le \epsilon(\lambda).
		$$
		\end{definition}



		\section{Instantiation of $\pi_{CT}$}\label{zkp_enc}
		In the two algorithms \texttt{Encrypt} and \texttt{VerifyCT} of the above CC-DVFE scheme, 
		we implement the verification through the following zero-knowledge proof.
		For label $\ell$ and encryption label $\ell_{Enc}$, the vector $\vec{x}_i \in \mathcal{M}^m$, 
		auxiliary vector $\vec{x}_0 \in \mathcal{M}^m$ and commit of key $\mathrm{com}_i$,
		we define the relation:
		$$
			\mathcal{R}_{Encrypt}(V_i, \mathrm{com}_i, \ell, \ell_{Enc}, y_i, \vec{x}_0)=1 \leftrightarrow 
			\begin{cases}
				V_i = (\bar{u})^{s_{i}} \cdot \vec{w}^{\vec{x}_i}\\
				y_i = \frac{\langle \vec{x}_i, \vec{x}_0\rangle}{\langle \vec{x}_i, \vec{x}_i\rangle} \\
				\mathrm{com}_i = (v)^{s_{i}}
			\end{cases}.
		$$
		where $\bar{u}=\prod_{j}^m u_j$, $u_j = \mathcal{H}_1(\ell||j)$,  
		$\vec{w}^{{x}_i} = \prod_{j \in [m]} (w_j)^{x_{i,j}} $,
		$\{w_j\}_{j \in [m]} = \{\mathcal{H}_1^{'}(\ell_{Enc,j}||j)\}_{j \in [m]}$,
		$v = \mathcal{H}_1(\ell_{D})$.\\
		We proposes a zero knowledge proof system for the relation as described above:
		\begin{equation*}
            \text{PoK}
			\left\{(s_i, \vec{x}_i): 
                \begin{aligned}
				V_i = (\bar{u})^{s_{i}} \cdot \vec{w}^{\vec{x}_i} \wedge \\
				y_i = \frac{\langle \vec{x}_i, \vec{x}_0\rangle}{\langle \vec{x}_i, \vec{x}_i\rangle} \wedge \\
				\mathrm{com}_i = (v)^{s_{i}}.
                \end{aligned}
			\right\}
        \end{equation*}
		To prove the statement, the prover $\mathcal{P}$ and $\mathcal{V}$ engage 
		in the following zero knowledge protocol. \\
		$\mathcal{P}$ computes:
		\begin{align*}
			&\rho \overset{\$}{\leftarrow} \mathbb{Z}^2_p, \quad \vec{k} \overset{\$}{\leftarrow}  \mathbb{Z}_p^m \\
			&K = (\bar{u})^\rho \vec{w}^{\vec{k}}, \quad L(X) = \vec{x}_i + \vec{k}X  \\
			& t_0=\langle \vec{x}_i,\vec{x}_i \rangle,\quad t_1 = 2\langle \vec{k},\vec{x}_i \rangle, \quad t_2= \langle \vec{k},\vec{k} \rangle \\
			& t(X) = \langle L(X),  L(X)\rangle = t_0 + t_1 X + t_2 X^2 \\
			&\tau_0, \ \tau_1, \ \tau_2, \ \tau'_1 \gets \mathbb{Z}^2_p \\
			&T_0 = g^{t_0} (\bar{u})^{\tau_0}, \quad T_1 = g^{t_1} (\bar{u})^{\tau_1}, \quad T_2 = g^{t_2} (\bar{u})^{\tau_2}\\
			&t'_1 = \langle \vec{k}, \vec{x}_0 \rangle, \quad T'_1= g^{t'_1} (\bar{u})^{\tau'_1}\\
			&\mathrm{com}^* = (v)^{\rho}, \quad v = \mathcal{H}_1(\ell_{D})
		\end{align*}
		$\mathcal{P} \to \mathcal{V}: K, T_0, T_1, T_2, T'_1, \mathrm{com}^*_i$ \\
		$\mathcal{V}: \alpha \overset{\$}{\gets} \mathbb{Z}^*_p$ \\
		$\mathcal{V} \to \mathcal{P}: \alpha$ \\
		$\mathcal{P}$ computes:
		\begin{align*}
			&\tau_\alpha = \tau_0 + \tau_1 \alpha + \tau_2 \alpha^2,\quad \tau'_\alpha = \tau_0 y_i + \tau'_1 \alpha\\
			&\omega = s_i + \alpha \rho  \in \mathbb{Z}^2_p, \quad L = L(\alpha), \quad t = t(\alpha)
		\end{align*}
		$\mathcal{P} \to \mathcal{V}: \tau_\alpha, \tau'_\alpha, \omega, L, t$ \\
		$\mathcal{V}$ verifies:
		\begin{align*}
			g^t (\bar{u})^{\tau_{\alpha}} & \overset{?}{=} T_0 \left(T_1\right)^\alpha\left(T_2\right)^{\alpha^2} \\
			g^{\langle L, \vec{x}_0\rangle} (\bar{u})^{\tau'_{\alpha}} & \overset{?}{=} T_0^{y_i} (T'_1)^\alpha \\
			\vec{w}^L (\bar{u})^\omega & \overset{?}{=} V_i(K)^\alpha \\
			t & \overset{?}{=} \langle L, L\rangle \\
			(v)^{\omega} & \overset{?}{=} \mathrm{com}_{i} (\mathrm{com}^*)^{\alpha}
		\end{align*}
		
		\begin{theorem}\label{theorem:pi_enc}
			The Ciphertext Verification $\prod_{Encrypt}$  has perfect completeness, 
			perfect honest verifier zero-knowledge and computational special soundness.
		\end{theorem}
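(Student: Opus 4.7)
The plan is to handle the three properties separately, following the standard Sigma-protocol analysis adapted to the multi-equation, polynomial-commitment style of this argument. Completeness is the easiest: I would substitute the honest prover's outputs into each of the five verification equations in turn and check the algebraic identity. The equation involving $T_0,T_1,T_2$ follows from the definitions $\tau_\alpha=\tau_0+\tau_1\alpha+\tau_2\alpha^2$ and $t=t(\alpha)$. The equation involving $T_0^{y_i}$ and $T_1'$ uses the witness relation $y_i=\langle x_i,x_0\rangle/\langle x_i,x_i\rangle$, so that $\langle L,x_0\rangle = y_i t_0+\alpha t_1'$. The equation $w^L(u^m)^\omega = V_i K^\alpha$ collapses after expanding $L=x_i+\alpha k$ and $\omega=s_i+\alpha\rho$, and the commitment check and $t=\langle L,L\rangle$ are immediate.

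For perfect HVZK, I would describe a simulator $\mathrm{Sim}$ that, given $\alpha\overset{\$}{\gets}\mathbb{Z}_p^*$, first samples the responses $L\overset{\$}{\gets}\mathbb{Z}_p^m$ and $\omega\overset{\$}{\gets}\mathbb{Z}_p^2$ and $\tau_\alpha,\tau_\alpha'\overset{\$}{\gets}\mathbb{Z}_p^2$ uniformly, sets $t:=\langle L,L\rangle$, then picks $T_0,T_1\overset{\$}{\gets}\mathbb{G}_1$ uniformly, and solves the verification equations for the remaining first-round messages: $K=(w^L(u^m)^\omega V_i^{-1})^{1/\alpha}$, $\mathrm{com}^*=(v^\omega \mathrm{com}_i^{-1})^{1/\alpha}$, $T_2=(g^t(u^m)^{\tau_\alpha}T_0^{-1}T_1^{-\alpha})^{1/\alpha^2}$, and $T_1'=(g^{\langle L,x_0\rangle}(u^m)^{\tau_\alpha'}T_0^{-y_i})^{1/\alpha}$. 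Matching to the real distribution reduces to observing that in the real protocol the blinding values $\rho,k,\tau_0,\tau_1,\tau_2,\tau_1'$ are all uniform and independent, which induces the same joint marginal on the first-round messages once the responses are fixed. Since $\alpha\neq 0$ almost surely, the division steps are well-defined, and the two distributions coincide exactly.

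For computational special soundness I would work with three accepting transcripts sharing the same first-round messages but with pairwise distinct challenges $\alpha_1,\alpha_2,\alpha_3$. First, from the linear equation $w^{L_j}(u^m)^{\omega_j}=V_iK^{\alpha_j}$ applied to any two of them, Lagrange interpolation in the exponent yields candidate openings $x_i^*=(\alpha_2 L_1-\alpha_1 L_2)/(\alpha_2-\alpha_1)$ and $s_i^*=(\alpha_2\omega_1-\alpha_1\omega_2)/(\alpha_2-\alpha_1)$ with $V_i=(u^m)^{s_i^*}w^{x_i^*}$; an identical argument on the commitment equation yields $\mathrm{com}_i=v^{s_i^{**}}$, and the discrete-log hardness in $\mathbb{G}_1$ forces $s_i^{**}=s_i^*$. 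Next I extract representations $T_0=g^{t_0^*}(u^m)^{\tau_0^*}$, $T_1,T_2$ from the three transcripts of the quadratic equation (Lagrange on $\{1,\alpha,\alpha^2\}$), and $T_1'=g^{t_1'^*}(u^m)^{\tau_1'^*}$ from two transcripts of the $T_0^{y_i}$-equation; uniqueness of representations (again under DL) lets me equate coefficients. Consistency of $K$'s representation across transcripts then gives some $k^*$ with $L_j=x_i^*+\alpha_j k^*$, and plugging into $t_j=\langle L_j,L_j\rangle=t_0^*+t_1^*\alpha_j+t_2^*\alpha_j^2$ at three distinct $\alpha_j$ forces $t_0^*=\langle x_i^*,x_i^*\rangle$. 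Finally, the $T_0^{y_i}$-equation at two transcripts forces $y_i t_0^*=\langle x_i^*,x_0\rangle$, which is exactly the inner-product relation. The main obstacle I expect is this last step: ensuring that the bookkeeping across the five verification equations genuinely pins down the inner-product relation rather than just some algebraic opening of $V_i$. This hinges on the uniqueness of the $g$-vs-$u^m$ representation, which is why the soundness is computational (i.e.\ reduces to the discrete-log assumption in $\mathbb{G}_1$) rather than unconditional.
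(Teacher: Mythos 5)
Your proposal is correct and takes essentially the same route as the paper's proof: a direct check of the five verification equations for completeness, a simulator that fixes the responses, samples two of the first-round commitments and solves the verification equations for the remaining ones (the paper samples $T_0,T_2$ and solves for $T_1$, you sample $T_0,T_1$ and solve for $T_2$ — equivalent, and your exponents $1/\alpha$, $1/\alpha^2$ are the correct form of what the paper writes), and a rewinding extractor for soundness. The only substantive difference is in the extraction: you use three accepting transcripts and argue explicitly, via uniqueness of $g$-versus-$(u^m)$ representations under the discrete-log assumption, that the extracted opening actually satisfies $y_i\langle x_i,x_i\rangle=\langle x_i,x_0\rangle$, whereas the paper extracts from only two transcripts (recovering $\tau_0$ through the $\tau'_\alpha$-equation, which tacitly requires $y_i\neq 0$) and leaves that final consistency step implicit — so your write-up is slightly more careful precisely where the soundness is computational.
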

		\begin{proof}
			A formal proof are provided in Appendix \ref{proof:pi_enc}.
		\end{proof}

	\section{Instantiation of  $\pi_{DK}$}\label{zkp_dkg}
		In the two algorithms \texttt{DKeyGenShare} and \texttt{VerifyDK} of the above CC-DVFE scheme, 
		we implement the verification through the following zero-knowledge proof.
		To prevent malicious clients from interfering with the aggregation process 
		by uploading malicious keys, we define the relation:
        \begin{align*}
            \mathcal{R}_{DKeyGenShare}(\left(T_i\right)_{i \in [n]}, d_i,dk_i, \ell_y, y_i)=1 
            \\ 
            \leftrightarrow 
			\begin{cases}
				T_i = \left(h_p^{t_{i,b}}\right)_{b \in [2]} \\
				d_i = \left(f^{\hat{k}_b} K_{\sum,i,b}^{t_{i,b}}\right)_{b \in [2]} \\
				dk_i = \left( (\hat{v}_b)^{\hat{k}} \cdot h^{s_{i,b} \cdot y_i}\right)_{b \in [2]}\\
				\mathrm{com}_i = (v)^{s_{i}}
			\end{cases}.
        \end{align*}
		where $K_{\sum,i,b} = \prod_{i<i^*}^{n} T_{i^*,b} \cdot \left( \prod_{i>i^*}^n T_{i^*,b} \right)^{-1} \in G$, 
		$\hat{v}_b=\mathcal{H}_2(\ell_{y,b}), b \in [2], v= \mathcal{H}_1(\ell_{D})$.

		We guarantee the correctness of the decryption key share by constructing 
		zero-knowledge proofs as:
		\begin{equation*}
            \text{PoK}
			\left\{(s_i, t_i,\hat{k}):
            \begin{aligned}
                T_i = h_p^{t_i} \wedge \\
				d_i = (f^{\hat{k}_b} K_{\sum,i,b}^{t_{i,b}})_{b \in [2]} \wedge \\
				dk_i = \left( (\hat{v}_b)^{\hat{k}} \cdot h^{s_{i,b} \cdot y_i}\right)_{b \in [2]} \wedge \\
				\mathrm{com}_i = (v)^{s_{i}}
            \end{aligned} 
			\right\}
        \end{equation*}  
		To prove the statement, the prover $\mathcal{P}$ and $\mathcal{V}$ engage 
		in the following zero knowledge protocol.

		\noindent
		$\mathcal{V}$ verifes that $d_i, T_i \in \hat{G}^2$ for $j \in [n]$. \\
		$\mathcal{P}$ computes:
		\begin{align*}
			& r_{\hat{k},i}, r_{s,i} \overset{\$}{\gets} \mathbb{Z}_p^2,\quad r_{t,i} \overset{\$}{\gets} [0,2^\lambda pS]^2 \\
			& R_{T,i} = (h^{r_{t,i,b}}_p)_{b \in [2]} \\
			& K_{\sum,i,b} = \prod_{i<i^*}^{n} T_{i^*,b} \cdot \left( \prod_{i>i^*}^n T_{i^*,b} \right)^{-1} \in G \\
			& R_{d,i} = \left( f^{r_{\hat{k},i,b}}K_{\sum,i,b}^{r_{t,i,b}}\right)_{b \in [2]} \\
			& R_{dk,i} = \left( (\hat{v}_b)^{r_{\hat{k},i}} \cdot h^{r_{s,i.b} \cdot y_i}\right)_{b \in [2]} \\
			& R_{com,i} = (v)^{r_{s,i}}
		\end{align*}
		$\mathcal{P} \to \mathcal{V}: R_{T,i}, R_{d,i}, R_{dk,i}, R_{com,i}$ \\
		$\mathcal{V}: \beta \overset{\$}{\gets} \mathbb{Z}_p$ \\
		$\mathcal{V} \to \mathcal{P}: \beta$ \\
		$\mathcal{P}$ computes:
		\begin{align*}
			& z_{\hat{k},i} = r_{\hat{k},i} - \beta \cdot \hat{k}_i \\
			& z_{s,i} = r_{s,i} -  \beta \cdot s_{i} \\
			& z_{t,i} = r_{t,i} - \beta \cdot t_{i}
		\end{align*}
		$\mathcal{P} \to \mathcal{V}: z_{\hat{k},i}, z_{s,i}, z_{t,i} $\\
		$\mathcal{V}$ verifies:
		\begin{align*}
			R_{T,i} & \overset{?}{=} (T_{i,b}^{\beta} \cdot  h_p^{z_{t,i,b}})_{b \in [2]} \\
			K_{\Sigma, i} & {=} \left\{\prod_{i<i^*}^n T_{i^*, b} \cdot\left(\prod_{i>i^*}^n T_{i^*, b}\right)^{-1}\right\}_{b \in [2]} \in G\\
			R_{d,i} & \overset{?}{=} \left(\mathrm{d}_{i,b}^\beta \cdot f^{z_{\hat{k},i,b}} K_{\Sigma, i, b}^{z_{t, i, b}}\right)_{b \in [2]} \\
			R_{\mathrm{dk}_i} &\overset{?}{=}
						\left( (\mathrm{dk}_i)^{\beta} \cdot (\hat{v}_b)^{z_{\hat{k},i}} \cdot h^{z_{s,b} \cdot y_i}
						\right)_{b \in [2]} \\
			R_{\mathrm{com}, i} &\overset{?}{=} 
						(\mathrm{com}_{i})^{\beta}(v)^{z_{s,i}}
		\end{align*} 
		\begin{theorem}
			The functional key share verification $\prod_{DKeyGenShare}$ 
			has perfect completeness, statistically 
			zeroknowledge, and a computational soundness.
		\end{theorem}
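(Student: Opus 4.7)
The plan is to establish the three properties of the Sigma-style protocol $\prod_{DKeyGenShare}$ separately, following the same template used for $\prod_{Encrypt}$ in Theorem \ref{theorem:pi_enc} but with extra care for the fact that part of the statement lives in the class group $\hat{G}$ of unknown order $p \cdot s$, whereas the commitment and decryption-key-share relations live in prime-order pairing groups.

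For \textbf{perfect completeness}, I will substitute the honest responses $z_{\hat{k},i} = r_{\hat{k},i} - \beta \hat{k}_i$, $z_{s,i} = r_{s,i} - \beta s_i$ and $z_{t,i} = r_{t,i} - \beta t_i$ into each of the five verification equations and check that they reduce to the announced first-round messages. For instance, $d_{i,b}^{\beta} f^{z_{\hat{k},i,b}} K_{\Sigma,i,b}^{z_{t,i,b}} = (f^{\hat{k}_{i,b}} K_{\Sigma,i,b}^{t_{i,b}})^{\beta} f^{r_{\hat{k},i,b} - \beta \hat{k}_{i,b}} K_{\Sigma,i,b}^{r_{t,i,b} - \beta t_{i,b}} = f^{r_{\hat{k},i,b}} K_{\Sigma,i,b}^{r_{t,i,b}} = R_{d,i,b}$, and the checks for $R_{T,i}$, $R_{dk,i}$ and $R_{com,i}$ simplify analogously by the same cancellation.

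For \textbf{statistical honest-verifier zero-knowledge}, I will construct a simulator that first samples $\beta \overset{\$}{\gets} \mathbb{Z}_p$, $z_{\hat{k},i}, z_{s,i} \overset{\$}{\gets} \mathbb{Z}_p^2$ and $z_{t,i} \overset{\$}{\gets} [0, 2^\lambda p S]^2$, then defines the first-round messages by inverting each verification equation, e.g.\ $R_{T,i,b} := T_{i,b}^{\beta} h_p^{z_{t,i,b}}$, $R_{d,i,b} := d_{i,b}^{\beta} f^{z_{\hat{k},i,b}} K_{\Sigma,i,b}^{z_{t,i,b}}$, and similarly for $R_{dk,i}$ and $R_{com,i}$. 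Since $r_{\hat{k},i}$ and $r_{s,i}$ are uniform over $\mathbb{Z}_p^2$, the marginal distributions of $z_{\hat{k},i}$ and $z_{s,i}$ in real and simulated transcripts are identical. The only source of discrepancy is $z_{t,i}$: because $t_i \gets \mathcal{D}_p$ has support in $[0, S]$ while $r_{t,i}$ is sampled from $[0, 2^\lambda p S]^2$, the distribution of $r_{t,i} - \beta t_i$ differs from a fresh uniform sample over $[0, 2^\lambda p S]^2$ by a statistical distance of $O(2^{-\lambda})$ per component. Summing over the two components yields the overall statistical closeness and explains why the protocol is only statistical zero-knowledge, not perfect.

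For \textbf{computational special soundness}, I will show that an extractor presented with two accepting transcripts $(R, \beta, z)$ and $(R, \beta', z')$ sharing the same first message but with $\beta \ne \beta'$ can recover a witness $(s_i, t_i, \hat{k}_i)$ satisfying $\mathcal{R}_{DKeyGenShare}$. Subtracting the verification equations that live in the prime-order pairing groups yields $s_i \equiv (z_{s,i} - z_{s,i}')/(\beta - \beta') \pmod{p}$ from the $R_{com,i}$ check and $\hat{k}_i \equiv (z_{\hat{k},i} - z_{\hat{k},i}')/(\beta - \beta') \pmod{p}$ from the $R_{dk,i}$ check, with consistency of $\hat{k}_i$ across the $R_{d,i}$ and $R_{dk,i}$ checks enforced by the fact that both use the same response. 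The \emph{main obstacle} is extraction in the class group $\hat{G}$: since $|\hat{G}| = p \cdot s$ with $s$ unknown, one cannot invert $\beta - \beta'$ directly, and a dishonest prover could in principle exploit low-order elements to produce inconsistent exponents. I will resolve this exactly as in the LDSUM-style arguments of \cite{VMCFE}, by reducing to the strong-root (equivalently, low-order) assumption in the class group: either the extractor recovers an integer $t_i$ with $T_i = h_p^{t_i}$ and a matching $\hat{k}_i$ making the $d_i$ equation hold, or the reduction produces a nontrivial low-order element that breaks the assumption, yielding computational soundness with loss linear in the class-group hardness advantage.
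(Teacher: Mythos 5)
Your proposal is essentially correct, but note that the paper does not actually prove this theorem itself: its ``proof'' is a one-line deferral to the VMCFE paper \cite{VMCFE}, from which the protocol is adapted. Your sketch fills in exactly the argument that deferral points to, and it gets the structure right: perfect completeness by direct substitution of $z_{\hat{k},i}=r_{\hat{k},i}-\beta\hat{k}_i$, $z_{s,i}=r_{s,i}-\beta s_i$, $z_{t,i}=r_{t,i}-\beta t_i$ into the five checks (this works over the integers for the class-group component because no range check is imposed on $z_{t,i}$, so negative responses are harmless); honest-verifier zero-knowledge that is only \emph{statistical} because the masking term $r_{t,i}\in[0,2^\lambda pS]^2$ hides $\beta t_i$ (with $\beta<p$ and $t_i$ drawn from $\mathcal{D}_p$ supported on $[0,S]$) only up to statistical distance $O(2^{-\lambda})$ per component, while the $\mathbb{Z}_p$-valued responses are perfectly simulatable; and special soundness that is only \emph{computational} because extraction modulo $p$ from the $\mathrm{com}_i$, $dk_i$, and $F$-components is immediate, whereas extraction of $t_i$ in the unknown-order group $\hat{G}^p$ cannot divide by $\beta-\beta'$ and must fall back on strong-root/low-order--type class-group assumptions, exactly as in the LDSUM-style soundness arguments of \cite{VMCFE}. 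The correspondence between these two necessary weakenings and the wording of the theorem (``statistically zero-knowledge, and a computational soundness'') is the key point, and you identified both. Two minor touch-ups if you were to write this out in full: the verifier's preliminary check that $T_i, d_i\in\hat{G}^2$ (stated in the protocol) should be used explicitly in the soundness reduction to rule out malformed encodings, and the consistency of the extracted $\hat{k}_i$ between the $d_i$ and $dk_i$ equations only needs to hold modulo $p$, since $f$ generates the order-$p$ subgroup $F$.
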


		\begin{proof}
			A formal proof are provided in \cite{VMCFE}.
		\end{proof}

	\section{Proof of Theorem \ref{theorem:pi_enc}}\label{proof:pi_enc}
	\begin{proof}
		\textbf{Completeness.}
		Perfect completeness follows from the fact that for all valid witnesses,
		it holds that
		$y_i = \frac{\langle \vec{x}_i, \vec{x}_0\rangle}{\langle \vec{x}_i, \vec{x}_i\rangle}$
		and
		\begin{align*}
			g^t (\bar{u})^{\tau_{\alpha}} & = g^{t_0 + t_1 \alpha + t_2 \alpha^2} (u)^{\tau_0 + \tau_1 \alpha + \tau_2 \alpha^2} \\ 
                                      & = g^{t_0}u^{\tau_0}(g^{t_1}u^{\tau_1})^{\alpha}(g^{t_2}u^{\tau_2})^{\alpha^2} \\
                                      & = T_0 \left(T_1\right)^\alpha\left(T_2\right)^{\alpha^2} \\
			g^{\langle L, \vec{x}_0\rangle} (\bar{u})^{\tau'_{\alpha}} & = g^{\langle \vec{x}_i + k\alpha, \vec{x}_0\rangle}u^{\tau_0 y_i + \tau'_1 \alpha} \\
                                                             & = (g^{t_0}u^{\tau_0})^{y_i}(g^{t'_1} (u)^{\tau'_1})^\alpha \\
                                                             & = T_0^{y_i} (T'_1)^\alpha \\
			\vec{w}^L (\bar{u})^\omega & = \vec{w}^{\vec{x}_i + k\alpha}u^{s_i + \alpha \rho} \\ 
                             & = \vec{w}^{\vec{x}_i}u^{s_i}(\vec{w}^{k}u^{\rho})^\alpha = V_i(K)^\alpha \\
			t & = \langle L, L\rangle \\
			(v)^{\omega} & = v^{s_i + \alpha \rho} = v^{s_i} (v^{\rho})^{\alpha} = \mathrm{com}_{i} (\mathrm{com}^*)^{\alpha}.
		\end{align*}
		
		\textbf{Soundness.}
		To demonstrate special soundness, we construct an extractor $\mathcal{E}$ as outlined below.
		$\mathcal{E}$ rewinds the $\prod_{\texttt{Encrypt}}$ with $2$ different values of
		$\alpha$ to get 
		\begin{align*}
			L_1 = \vec{x}_i + k \alpha_1 &\text{ and } L_2 = \vec{x}_i + k \alpha_2 \\
			\omega_1 = s_i + \alpha_1 \rho &\text{ and } \omega_2 = s_i + \alpha_2 \rho \\
			\tau'_{\alpha_1} = \tau_0 y_i + \tau'_1 \alpha_1
				&\text{ and } \tau'_{\alpha_2} = \tau_0 y_i + \tau'_1 \alpha_2 \\
			\tau_{\alpha_1} = \tau_0 + \tau_1 \alpha_1 + \tau_2 \alpha_1^2 
				&\text{ and } \tau_{\alpha_2} = \tau_0 + \tau_1 \alpha_2 + \tau_2 \alpha_2^2,
		\end{align*}
		then $\mathcal{E}$ can compute 
		\begin{align*}
			k &=  \frac{L_1-L_2}{\alpha_1-\alpha_2}, \quad \rho = \frac{\omega_1-\omega_2}{\alpha_1-\alpha_2} \\
			\vec{x}_i &= L_1 - k \alpha_1 = L_1 - \alpha_1 \frac{L_1 - L_2}{\alpha_1 - \alpha_2} \\
			s_i &= \omega_1 - \alpha_1 \rho = \omega_1 - \alpha_1 \frac{\omega_1 - \omega_2}{\alpha_1 - \alpha_2} \\
			\tau'_1 &= \frac{\tau'_{\alpha_1}-\tau'_{\alpha_2}}{\alpha_1-\alpha_2}, \quad
				\tau_0 =  \frac{\tau'_{\alpha_1} - \tau'_1 \alpha_1}{y_i} \\
			\tau_1 &= \frac{\alpha_2(\tau'_{\alpha_1}-\tau_0)}{\alpha_1(\alpha_2-\alpha_1)} - 
				\frac{\alpha_1(\tau_{\alpha_2}-\tau_0)}{\alpha_2(\alpha_2-\alpha_1)} \\
			\tau_2 &= \frac{\tau_{\alpha_1}-\tau_0}{\alpha_1(\alpha_1-\alpha_2)} - 
				\frac{\tau_{\alpha_2}-\tau_0}{\alpha_2(\alpha_1-\alpha_2)}
		\end{align*}
		to extract $\vec{x}_i, s_i, \tau_0, \tau_1, \tau_2, \tau'_1$,
		and then further computation of 
		$$
			t_0=\langle \vec{x}_i,\vec{x}_i \rangle, \ t_1 = 2\langle k,\vec{x}_i \rangle, \ t_2= \langle k,k \rangle, \ t'_1 = \langle k, \vec{x}_0 \rangle
		$$
		reveals that $t_0, t_1, t_2, t_1'$.
		
		\textbf{Zero knowledge.}
		To demonstrate the protocol achieves perfect honest-verifier zero-knowledge,
		we construct a simulator that, given a statement 
		$\left(
			 V_i \in \mathbb{G}_1, y_i \in \mathbb{Z}_p, \mathrm{com}_i \in \mathbb{G}^2_1
		\right) $,
		produces a simulated proof whose distribution is indistinguishable from that 
		of a real proof generated through interaction between an honest prover and 
		an honest verifier. 
		All elements chosen by the simulator are either independently and 
		uniformly distributed or fully determined by the verification equations.
		The simulator randomly chooses $T_0, T_2, \tau_\alpha, \tau'_\alpha, \omega$  
		and $L$. Then, for the challenge $\alpha$ and the public value $\vec{x}_0$, the values $t, T_1, T'_1, K$, 
		and $\mathrm{com}^*$ are computed according to the verification equations.		
		\begin{align*}
			t &= \langle L, L\rangle \\
			T_1 &= (g^t(\bar{u})^{\tau_{\alpha}}T_0^{-1}T_2^{-\alpha^2})^{-\alpha}\\
			T'_1 & = \left(g^{\langle L, \vec{x}_0\rangle} (\bar{u})^{\tau'_{\alpha}}T_0^{-y_i}\right)^{-\alpha}\\
			K &= \left(\vec{w}^L (\bar{u})^\omega (V_i)^{-1}\right)^{-\alpha}\\
			\mathrm{com}^* &= (v^\omega (\mathrm{com}_i)^{-1} )^{-\alpha}
		\end{align*}
		The protocol remains zero-knowledge because we can successfully simulate 
		the proof procedure. Therefore, any information revealed or potentially 
		leaked during the simulation does not compromise the zero-knowledge property 
		of the overall protocol.
	\end{proof}
	
	\section{Lemma \ref{lemma_w0} and Proof}
	\begin{lemma}\label{lemma_w0}
		Assume the objective function \( F(W) \) is \( \mu \)-strongly convex and 
		\( L \)-smooth (Assumption 1). Let \( W^{opt} \) denote the optimal global model of \( F(W; D) \), 
		and \( W^t_0 \) denote the baseline model at iteration \( t \), obtained using 
		gradient descent on a clean dataset \( D_0 \) (Assumption 2) with a learning rate \( \eta \). 
		The following bound holds:  
		\[
			\|W_0^t - W^{opt}\| \leq \left(\sqrt{(1 - 2\eta \mu+\eta^2 L^2)}\right)^{lr} \|W_0 - W^{opt}\|.
		\]
		where $W_0$ is the initial model.
	\end{lemma}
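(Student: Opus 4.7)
The plan is to apply the standard contraction analysis for gradient descent on a $\mu$-strongly convex, $L$-smooth objective. By Assumption 2, the clean dataset $D_0$ is drawn from the same distribution $\mathcal{X}$ as the clients' data, so the expected loss $F$ is the relevant objective with unique minimizer $W^{opt}$, and Assumption 1 holds for $F$. The server's baseline training at iteration $t$ performs $lr$ steps of gradient descent on $F$ using $D_0$ with step size $\eta$, starting from the initial model $W_0$. I would set $w^0 = W_0$, $w^{r+1} = w^r - \eta \nabla F(w^r)$, and $w^{lr} = W_0^t$, then prove a per-step contraction and iterate.

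For the per-step bound, the proof expands
\begin{align*}
\|w^{r+1} - W^{opt}\|^2
&= \|w^r - W^{opt}\|^2 \\
&\quad - 2\eta \langle \nabla F(w^r), w^r - W^{opt}\rangle \\
&\quad + \eta^2 \|\nabla F(w^r)\|^2.
\end{align*}
Since $\nabla F(W^{opt}) = 0$, summing the two strong-convexity inequalities from Assumption 1 (with the roles of $w^r$ and $W^{opt}$ swapped) gives the co-coercivity bound $\langle \nabla F(w^r), w^r - W^{opt}\rangle \ge \mu \|w^r - W^{opt}\|^2$, while the $L$-Lipschitz gradient yields $\|\nabla F(w^r)\| = \|\nabla F(w^r) - \nabla F(W^{opt})\| \le L \|w^r - W^{opt}\|$. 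Substituting both,
\[
\|w^{r+1} - W^{opt}\|^2 \le (1 - 2\eta\mu + \eta^2 L^2) \|w^r - W^{opt}\|^2,
\]
and taking square roots produces a per-step contraction factor of $\sqrt{1 - 2\eta\mu + \eta^2 L^2}$. Unrolling the recursion from $w^0 = W_0$ to $w^{lr} = W_0^t$ gives the claimed inequality.

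The main obstacle is not technical but largely a matter of matching conventions: the exponent $lr$ and the reference point $W_0$ in the statement implicitly fix the interpretation that a single baseline-training round consists of $lr$ gradient descent steps launched from $W_0$. The only other subtlety is ensuring the contraction factor is well-defined and genuinely contractive. Non-negativity follows from $1 - 2\eta\mu + \eta^2 L^2 = (1-\eta\mu)^2 + \eta^2(L^2 - \mu^2) \ge 0$ using $L \ge \mu$, and strict contraction (factor strictly less than one) requires $\eta < 2\mu / L^2$, which is the standard step-size condition that must be in force for the $t \to \infty$ conclusion of Theorem \ref{theorem:robust} to be meaningful.
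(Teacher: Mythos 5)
Your proposal is correct and follows essentially the same route as the paper: expand $\|w^{r+1}-W^{opt}\|^2$, bound the cross term via strong convexity (the paper likewise combines the two strong-convexity inequalities at $w^i$ and $W^{opt}$) and the gradient term via $L$-smoothness with $\nabla F(W^{opt})=0$, obtain the per-step factor $\sqrt{1-2\eta\mu+\eta^2L^2}$, and unroll over $lr$ steps. Your added remarks on non-negativity of the factor and the step-size condition $\eta<2\mu/L^2$ are sensible clarifications the paper leaves implicit.
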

	\begin{proof}
		Let \( lr \) denotes the total number of training iterations on \( D_0 \)  over $t$ rounds of iteration. 
		We define \( w^0 = W_0 \), 
		and after \( lr \) rounds of local training on dataset $D_0$, the final model is denoted by 
		\( w^{lr} = W_0^t \). During the \( i \)-th round of training, the gradient 
		descent update rule gives:
		$$
			w^{i+1} = w^i - \eta \nabla F(w^i; D_0).
		$$
		Subtracting \( W^{opt} \) from both sides, we obtain:
		$$
			w^{i+1} - W^{opt} = w^i - W^{opt} - \eta \nabla F(w^i; D_0).
		$$
		Taking the norm and square on both sides, we have:
		\begin{align*}
			&\|w^{i+1} - W^{opt}\|^2 \\
            = &\|w^i - W^{opt}\|^2 
			- 2\eta \langle \nabla F(w^i; D_0), w^i - W^{opt} \rangle \\
			& + \eta^2 \|\nabla F(w^i; D_0)\|^2.
		\end{align*}
		By Assumption 1 and $\nabla F(W^{opt})=0$, we have:
		\begin{align*}
			F(w^i) + \langle \nabla F(w^i; D_0), (W^{opt} - w^i) \rangle & \\
                + \frac{\mu}{2} \|W^{opt} - w^i\|^2 &\leq F(W^{opt}) \\ 
			F(W^{opt}) + \frac{\mu}{2} \|w^i - W^{opt}\|^2 &\leq F(w^i) \\
			\| \nabla F(w^i; D_0) - \nabla F(W^{opt}) \|^2 &\leq L^2 \| w^i - W^{opt}\|^2
		\end{align*}
		Then, it satisfies the following property:
		\begin{align*}
			- 2\eta \langle \nabla F(w^i; D_0), w^i - W^{opt} \rangle & \le 
			- 2\eta \mu \| w^i-W^{opt} \|^2 \\
			\eta^2 \|\nabla F(w^i; D_0)\|^2 &\le \eta^2 L\| w^i- W^{opt}\|^2
		\end{align*}
		Then, we have
		$$
			\|w^{i+1} - W^{opt}\|^2 \le (1 - 2\eta \mu+\eta^2 L^2)\| w^i- W^{opt}\|^2
		$$
		$$
			\|w^{i+1} - W^{opt}\| \le \sqrt{(1 - 2\eta \mu+\eta^2 L^2)}\| w^i- W^{opt}\|
		$$
		Recursively applying this inequality over \( lr \) rounds yields:
		$$
			\|w^{lr} - W^{opt}\| \leq \left(\sqrt{(1 - 2\eta \mu+\eta^2 L^2)}\right)^{lr} \|w^0 - W^{opt}\|.
		$$
		Since \( w^0 = W_0, w^{lr} = W_0^t \), the global model after \( lr \) local training rounds satisfies:
		$$
			\|W_0^t - W^{opt}\| \leq \left(\sqrt{(1 - 2\eta \mu+\eta^2 L^2)}\right)^{lr} \|W_0 - W^{opt}\|.
		$$
	\end{proof}
}

\vfill

\end{document}